%% file: main.tex
\documentclass{article}

\pdfoutput=1

\input{packages}

\input{preamble}
\title{Multi-Absorbing Phase-Type Distributions for \subchg{Right-Censored} Competing Risks \subchg{Data}}

\usepackage{authblk}

\newbox{\orcid}\sbox{\orcid}{\includegraphics[scale=0.06]{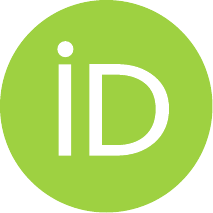}}

\author[1]{%
	\href{https://orcid.org/0000-0000-0000-0000}{\usebox{\orcid}\hspace{1mm}Zhihao Qiao\thanks{\texttt{zhihao.qiao@uq.edu.au}}}%
}
\author[2]{%
	\href{https://orcid.org/0000-0000-0000-0000}{\usebox{\orcid}\hspace{1mm}Budhi Surya\thanks{\texttt{dr.ir.b.a.surya@gmail.com}}}%
}
\author[3]{%
	\href{https://orcid.org/0000-0000-0000-0000}{\usebox{\orcid}\hspace{1mm}Azam Asanjarani\thanks{\texttt{azam.asanjarani@auckland.ac.nz}}}%
}
\author[1]{%
	\href{https://orcid.org/0000-0000-0000-0000}{\usebox{\orcid}\hspace{1mm}Yoni Nazarathy\thanks{\texttt{y.nazarathy@uq.edu.au}}}%
}
\affil[1]{School of Mathematics and Physics, University of Queensland, St Lucia, Queensland}
\affil[2]{School of Mathematics and Statistics, Victoria University of Wellington, New Zealand}
\affil[3]{Faculty of Science, Statistics, University of Auckland, New Zealand}

\renewcommand{\shorttitle}{Multi-Absorbing Phase-Type Distributions for Right-Censored Competing Risks Data}

\hypersetup{
hidelinks,   
pdftitle={Multi-Absorbing Phase-Type Distributions for Right-Censored Competing Risks Data},
pdfsubject={statistics, phase-type distributions, right-censored competing risks},
pdfauthor={Zhihao Qiao, Budhi Surya, Azam Asanjarani, Yoni Nazarathy},
pdfkeywords={Phase-type distributions, competing risks, right censoring, EM algorithm, MAPH},
}

\begin{document}
\maketitle

\input{sections/abstract}
\input{sections/introduction}
\input{sections/maph-distributions}
\input{sections/estimation}
\input{sections/censoring}
\input{sections/implementation}
\input{sections/conclusion}

\input{sections/appendix}

\bibliographystyle{unsrtnat}
\bibliography{references}
\end{document}

%% file: packages.tex
\usepackage{arxiv}
\usepackage[toc,page]{appendix}
\usepackage[utf8]{inputenc} 
\usepackage[T1]{fontenc}    
\usepackage{hyperref}       
\usepackage{url}            
\usepackage{booktabs}       
\usepackage{amsfonts}       
\usepackage{nicefrac}       
\usepackage{microtype}      
\usepackage{graphicx}
\usepackage[numbers]{natbib}   
\usepackage{doi}
\usepackage{amsmath}
\usepackage{bbm}
\usepackage{amsmath,amsfonts,amssymb}
\usepackage{euscript}
\usepackage{tikz}
\usetikzlibrary{automata,arrows,positioning,calc}
\usepackage{pgfplots}
\pgfplotsset{compat=1.16}
\usepackage{algorithm2e}
\usepackage{MnSymbol} 
\usepackage{amsthm}

\usepackage{cleveref}
\crefname{equation}{}{Eqs.}
\crefname{figure}{Fig.}{Figs.}

%% file: preamble.tex
\newtheorem{lemma}{Lemma}
\newtheorem{corollary}{Corollary}
\newtheorem{example}{Example}
\newtheorem{proposition}{Proposition}

\newcommand{\R}{\mathbb{R}}
\newcommand{\Prob}{\mathbb{P}}
\newcommand{\EX}{\mathbb{E}}

\newcommand{\X}{{\mathcal{X}}}
\newcommand{\ind}[1]{\mathbbm{1}\{#1\}}
\newcommand{\ex}[1]{\mathbb{E}\left[#1\right]}
\newcommand{\var}[1]{\mathrm{Var}\left[#1\right]}

\newcommand{\cs}{\mathcal{S}}

\newenvironment{rcrevision}{}{}
\newenvironment{aarevision}{}{}
\newenvironment{subrevision}{}{}
\newcommand{\aachg}[1]{#1}
\newcommand{\subchg}[1]{#1}

%% file: sections/abstract.tex
\begin{abstract}
\begin{rcrevision}
Phase-type (PH) distributions are versatile semi-parametric models for lifetime duration and can be used in survival and reliability analysis. In this paper we put forward methods and software for using PH distributions in a competing-risks model. The resulting multi-absorbing phase-type ($\text{MAPH}_{m,n}$) distribution records both the time until absorption and its cause. After formulating basic properties of this family, we develop \subchg{an} EM algorithm for parameter inference from exact event-time/cause observations and independently right-censored observations. For a censored subject, the E-step conditions on survival \aachg{up to} the censoring time and imputes both the latent transient path and its eventual \aachg{cause of absorption}; closed-form cause-resolved sufficient-statistic expectations are obtained from matrix exponentials. We illustrate applications and numerical properties and describe two accompanying Julia packages, in which both the exact-event and the censored E-step are implemented. Applied to intensive-care length-of-stay data in full, censored records included, the method attains a higher likelihood than the phase-type competing-risks fit previously published for those data.
\end{rcrevision}
\end{abstract}

\begin{rcrevision}
\keywords{Phase-type distributions \and competing risks \and right censoring \and EM algorithm \and survival analysis}
\end{rcrevision}

%% file: sections/introduction.tex
\section{Introduction}

The competing risks framework is a cornerstone of survival analysis \citep{kalbfleisch_prentice_2002, crowder_2001, beyersmann_2012}. In this framework one models the time until an event together with the cause of that event, drawn from one of several mutually exclusive types. Such a setting arises naturally in biostatistics, where a patient may leave a given health state through one of several competing causes \citep{putter_2007, austin_2016}, in reliability, where a component may fail through one of several distinct failure modes \citep{crowder_2001, meeker_escobar_1998}, and in many other domains. A range of statistical methods and accompanying software exist for competing risks, including the cause-specific and subdistribution hazards approaches \citep{fine_gray_1999} together with their implementations in widely used \textsf{R} packages \citep{dewreede_2011, therneau_survival}. In this paper we instead study the use of phase-type distributions for competing risks, an avenue that to date has not been developed extensively within an EM algorithm framework.

Phase type distributions arose in applied probability and are a general mechanism for modelling non-negative random variables via an absorption time of a continuous-time Markov chain; see for example \citet{latouche_introduction_1999} or \citet{he_fundamentals_2014} for an overview. In applied probability, phase-type distributions are often attractive since they allow one to incorporate non-exponential random variables within stochastic models, while retaining the Markovian nature of the model. Further, in statistics and actuarial science, phase-type distributions have also gained popularity since they present a semi-parametric model which in essence can approximate any non-negative distribution. When one considers phase-type distributions, one approach is to use the distributions as a first-principles stochastic modelling tool, while a second approach is to carry out parameter estimation for such distributions. A seminal paper for the second approach is \citet{Asmussen1996FittingPD}; see also \citet{Olsson_1996_censored} and software packages \citep{matrixdist} and more. Our approach in this paper follows such an inference approach where we develop an EM algorithm for phase-type distribution generalizations for competing risks. The idea of merging phase-type distribution principles with competing-risks models was recently advocated by \citet{lindqvist_phase-type_2022} (see also \citep{garcia-maya_competing_2022}) where instead of using a scalar distribution of a hitting time in a Markov chain, one considers a bi-variate distribution with the first component indicating the time until absorption, and the second discrete finite component indicating the exact absorbing state to which absorption occurred. Related earlier ideas, in which event times are modelled through a Markov chain with several absorbing states, appear in the healthcare modelling literature \citep{marshall_mcclean_2004}.

\begin{rcrevision}
In general when considering phase-type random variables or the multi-absorbing (competing-risks) generalizations used here, there is an underlying continuous-time Markov chain with unobserved transitions. For an event record we observe the absorption time and cause but not the transient trajectory; for an independently right-censored record we observe only that the chain is still transient at the censoring time. This setup naturally lends itself to \aachg{the} expectation maximization (EM) \aachg{algorithm}. We take the complete \aachg{data} to be the entire Markov path through its eventual absorption, including the eventual cause even for \aachg{right-censored subjects}. Thus censoring \aachg{affects only} the conditioning event in the E-step \aachg{(expectations are taken conditional on survival up to the censoring time),} but leaves the complete-data likelihood and \aachg{the} M-step unchanged. The initial application of EM algorithms for phase-type distributions was in \citet{Asmussen1996FittingPD}, and censored PH observations were treated by \citet{Olsson_1996_censored}; subsequent variants and refinements include \citep{thummler_2006, okamura_2011, bladt_nielsen_2017}. Importantly, phase-type distributions and the generalizations considered here are not identifiable, so inference concerns the \aachg{underlying} distributional law rather than a unique parameter representation.
\end{rcrevision}

\begin{rcrevision}
Our main contribution is an end-to-end \subchg{exact} EM algorithm for \aachg{fitting MAPH distributions to competing-risk data in the presence of independent right censoring.} Let \aachg{$c^\ell$} be the censoring time, \aachg{the observed time} \aachg{$y^\ell=\min(\tau^\ell,c^\ell)$} and \aachg{the event indicator} \aachg{$\delta^\ell=\mathbf 1\{\tau^\ell\le c^\ell\}$}. The observed data are ${\cal D}=\{(y^\ell,\delta^\ell,k^\ell):\ell=1,\ldots,L\}$, where the cause $k^\ell\in\{1,\ldots,n\}$ is recorded only when $\delta^\ell=1$. We assume \aachg{independent} non-informative \aachg{right-}censoring\aachg{, so that,} conditional on the censoring times, the censoring mechanism carries no parameters of the MAPH law. \subchg{Each} modelled cause is assumed to \subchg{be observed} at least once.
\end{rcrevision}

\aachg{We consider a} generalization of phase-type distributions introduced in \citet{lindqvist_phase-type_2022}, where we denote the random pair associated with the absorption time and absorption cause as $\zeta = (\tau, \kappa)$. \aachg{While \citet{lindqvist_phase-type_2022} established the fundamental probabilistic properties of this model, our focus is on statistical inference. Specifically, we develop a practical end-to-end implementation of the expectation-maximization (EM) algorithm for fitting the model, including computational procedures for evaluating the required sufficient statistics. To facilitate estimation, we adopt an alternative parameterization that is more suitable for inference and introduce a moment-based initialization heuristic that provides suitable starting values and improves the practical performance of the EM algorithm.} Our algorithms are implemented in two companion Julia language packages, \texttt{PhaseTypeDistributions.jl} for representing the distributions and \texttt{PhaseTypeDistributionsFitting.jl} for the EM fitting procedure. We present numerical examples using our software.

The probabilistic model that we use was initially introduced as ``a phase-type model for competing risks'' in \citet{lindqvist_phase-type_2022}, where the author establishes a number of its basic properties. In the present paper we refer to this distribution as a {\em Multi-Absorbing Phase-Type} (MAPH) distribution, since it may also find applications beyond the statistical competing risks framework. A by-product of our analysis is a collection of distributional properties for MAPH distributions, including formulas for the moments and the density. A MAPH distribution is indexed by two positive integers $m$ and $n$: the number of unobserved transient states, or phases, is $m$, and the number of absorbing states, namely the competing causes, is $n$. When $n=1$ the MAPH family reduces to the standard phase-type distributions. As is common in inference for phase-type distributions, the order $m$ is typically not estimated directly but selected by external model-selection criteria such as AIC or BIC \citep{bladt_nielsen_2017}. Finally, we note the considerable recent progress on general multivariate phase-type and matrix distributions \citep{albrecher_bladt_yslas_2022, bladt_2023, albrecher_bladt_bladt_2020}; we emphasize that our MAPH distributions are of a different nature, being a bivariate time-and-cause law rather than a multivariate vector of durations.

\begin{rcrevision}
At a high level, our algorithm works with two parameterizations of the same MAPH law: the natural generator parameterization $(\alpha,T,D)$, and a second, inference-oriented parameterization \subchg{$(\alpha,q,R,P^\lambda)$ in which $q$ collects the phase exit rates, $R$ the eventual absorption probabilities, and $P^\lambda$ the one-step transition matrix of the embedded jump chain}. The full Markov trajectory is latent, and we fit the model \aachg{using the EM algorithm}. For an exact event, the E-step conditions on its observed time and cause. For a record censored at $c$, it conditions on $\{\tau>c\}$ and imputes the transient path before and after $c$, its eventual cause, and its eventual absorbing jump. Both sets of conditional expectations have closed forms in matrix exponentials. The common M-step then maximizes a relaxed surrogate of the complete-data likelihood, \subchg{giving the exact maximizer in closed form: simple ratios of the expected statistics, which lie in the parameter set automatically, so no feasibility step is needed and the observed-data likelihood is non-decreasing along the iteration.} A censoring-compatible simplified heuristic supplies initial parameters; the more elaborate moment-based initialization requires uncensored or externally censoring-adjusted moment targets.
\end{rcrevision}

\begin{rcrevision}
The remainder of the paper is structured as follows. In Section~\ref{sec:maph-dist} we define the MAPH family and its basic distributional properties, and then introduce the embedded jump chain together with an inference-oriented parameterization, establishing its correspondence with the generator parameterization. In Section~\ref{sec:full-path-mle} we derive the maximum-likelihood estimators under full path observation and then develop from them the EM algorithm for exactly observed competing-risks data, including the conditional expectations of the E-step. Section~\ref{sec:censoring} gives the adaptations required by right-censored records, together with the resulting properties of the iteration. Section~\ref{sec:impl} presents numerical examples for the existing uncensored implementation and delineates what remains to validate the censoring extension computationally. \aachg{Finally, Section~\ref{sec:conclusion} concludes with a discussion and directions for future research.} Supporting appendices give the initialization heuristic (\Cref{sec:init-heuristic}) and then the proofs, one appendix per section of the body: \Cref{app:proofs-maph} for the distributional results, \Cref{app:proofs-est} for the estimators and the exact-event E-step, and \Cref{app:proofs-cens} for the right-censoring results.
\end{rcrevision}

%% file: sections/maph-distributions.tex
\section{MAPH distributions}
\label{sec:maph-dist}

\subsection{Definition and basic properties}
Consider a continuous-time Markov chain $X=\{X(t):t\geq 0\}$, also called a Markov jump process (MJP),  that lives on a finite-state space $\overline{\mathcal{S}}=\mathcal{S}\cup \mathcal{S}^0$, with $\cs = \{s_1,\ldots,s_m\}$ and  $\mathcal{S}^0=\{s^0_1,\ldots,s^0_n\}$ disjoint sets. The states in $\cs$ are transient states whilst the states in $\cs^{0}$ are absorbing states. We assume that with probability one, $X(0) \in \cs$. For background about MJPs, see for example \citet{norris97}.

The probability law of this MJP is specified via an initial probability $\alpha_i$ of selecting initial state $s_i \in \mathcal{S}$ such that $\Prob(X(0) = s_i) = \alpha_i$ and the elements $\alpha_i$ are organized in the row vector $\alpha$. The transition rates
are organized in a generator matrix $Q$ where we order the states first via $\cs$ and then via $\cs^0$. We denote the entries of this matrix as $q_{ij}$, where without ambiguity we let the indices $i$ and $j$ span $1,\ldots,m,m+1,\ldots,m+n$. The first $m$ diagonal entries of the matrix are denoted via $-q_i$, for $i=1,\ldots,m$ and are strictly negative such that the row sums are $0$.

We parameterize this generator matrix via $T \in \R^{m \times m}$ and $D \in \R^{m \times n}$ where $T$ is a sub-generator (row sums are at most $0$) and $D$ has non-negative entries. We require $T\mathbf{1}_m + D\mathbf{1}_n=0$ where $\mathbf{1}_m$ is a column vector of $1$'s of length $m$ and similarly for $\mathbf{1}_n$. Each state $s^0_k$ is absorbing and forms its own class. We assume that absorption is certain: from every transient state, the absorbing set $\cs^0$ is reached in finite time with probability one. As discussed below, these conditions are weaker than requiring $T$ to be irreducible and admit the reducible structure of standard phase-type distributions. With these matrices we have,
\begin{align*}
Q=
\begin{bmatrix}
T & D\\
0_{n \times m} & 0_{n \times n}
\end{bmatrix}
\quad \textrm{with} \quad T\mathbf{1}_m + D\mathbf{1}_n=0.
\end{align*}

Hence in summary, $\alpha$, $T$, and $D$ define the probability law of $X(\cdot)$. We denote the off-diagonal elements of $T$ via $\lambda_{ij}$ for index $(i,j)$ and the diagonal elements via $-q_i$ for index $(i,i)$. Further, we denote every element of $D$ as $\mu_{ik}$.

We stress that we do not require $T$ to be irreducible. That absorption is certain is exactly the standard phase-type condition; equivalently, $-T$ is a non-singular $M$-matrix (see the discussion preceding \Cref{prop:maph-props}). In particular, common phase-type structures such as Erlang, hyper-exponential, and Coxian distributions -- whose $T$ is reducible -- are included.

A trajectory $X=\{X(t):t\geq 0\}$ induces a random variable $\tau > 0$ with, $\tau = \inf\{ t > 0 ~:~ X(t) \in \cs^0\}$. This is the hitting time of the set $\cs^0$ and in the case where $n=1$, $\tau$ is called a phase-type random variable. See for example \citet{he_fundamentals_2014}. Further, the trajectory induces a discrete random variable $\kappa$ such that $s^0_\kappa \in \cs^0$ where $s^0_\kappa = X(\tau)$. Hence $\kappa$ is the index of the state of absorption. In the context of the competing-risks model, the pair $(\tau, \kappa)$ describes the lifetime (or time until absorption) $\tau$ together with the cause for absorption, $s^0_\kappa$. An MAPH distribution, is the distribution of the pair $(\tau, \kappa)$ and is parameterized via $\alpha$, $T$, and $D$ or via alternative parameterizations as presented below.

Ordinary phase-type distributions are the special case $n=1$: with a single absorbing state, $\kappa\equiv1$ is degenerate, $D$ reduces to the exit-rate (column) vector $-T\mathbf{1}_m$, and the law of $\tau$ alone is a phase-type distribution $\mathrm{PH}(\alpha,T)$. Thus phase-type distributions are precisely the $\text{MAPH}_{m,1}$ family, and MAPH extends phase-type to the competing-risks setting with $n\ge1$ causes. We can now denote the $\text{MAPH}_{m,n}$ distribution and its parameters via
\begin{equation}
\label{eq:first-param-maph}
\text{MAPH}_{m,n}(\alpha,T,D).
\end{equation}
With this parameterization, the free parameters of the distribution are specified by the non-negative quantities,
\begin{equation}
\label{eq:theta-first-param}
    \theta= \big(\alpha_i, \lambda_{ij}, D_{ik}
    ~~~\text{for}~~~
    i=1,\ldots,m, ~j = 1,\ldots,m,~j\neq i,~k = 1,...,n\big),
\end{equation}
where $\lambda_{ij}$ are the off-diagonal entries of $T$ (the diagonal entries $-q_i$ are then determined by the constraint below). These are subject to the standard constraints that $\alpha$ has non-negative entries with a sum of unity. The matrix $T$ is a sub-generator from which absorption is certain, and finally, the matrix $D$ satisfies
\begin{equation}
\label{eq:D-matrix-constraint}
D\mathbf{1}_n= - T\mathbf{1}_m.
\end{equation}

Figure~\ref{fig:example-maph} presents an illustration of the underlying states of the MJP $X(\cdot)$ behind an MAPH distribution with $m=4$ transient states and $n=3$ absorbing states. The transient states are connected through a sparse set of transitions -- a nearest-neighbour backbone $s_i \leftrightarrow s_{i+1}$ together with a single feedback transition $s_4 \to s_1$ -- while absorption takes place out of selected transient states; consequently both $T$ and $D$ are sparse, as shown to the right of the diagram. The initial state is drawn according to $\alpha$, indicated by the dashed arrows. For a fully connected $\text{MAPH}_{m,n}$, the number of free parameters is $(m-1) + m(m-1) + m\,n$: the initial vector $\alpha$ contributes $m-1$ (it is a probability vector), the sub-generator $T$ contributes its $m(m-1)$ off-diagonal rates (its diagonal $-q_i$ is then fixed by the constraint $T\mathbf{1}_m + D\mathbf{1}_n=0$), and $D$ contributes $m\,n$ absorption rates. For $m=4$ and $n=3$ this is $3 + 12 + 12 = 27$, whereas the sparse example of the figure uses only the displayed non-zero entries. We note that the distribution of $(\tau, \kappa)$ does not uniquely determine such a parameter set. Such non-identifiability issues are well known with phase-type distributions and transcend to our MAPH distributions \citep{bladt_sorensen}.

\begin{figure}[h]
\caption{Transition diagram of an MAPH distribution with $m=4$ transient states (white, centre) and $n=3$ absorbing states (red, right). The initial state is selected according to $\alpha$, indicated by the solid start node at the top left and the dashed arrows. The transient dynamics are sparse -- a nearest-neighbour backbone together with a feedback transition $\lambda_{41}$ -- and absorption occurs out of states $s_1,\ldots,s_4$. The corresponding $\alpha$, $T$ and $D$ are shown to the right. In the competing-risks reading used as a running example (\Cref{ex:running}), absorption into $s^0_1$ represents a completely healthy discharge, while $s^0_2$ and $s^0_3$ represent a discharge requiring further care of type~2 and type~3, respectively.
\label{fig:example-maph}}
\centering
\begin{minipage}[c]{0.60\textwidth}
\centering
\begin{tikzpicture}[->, >=stealth', auto, semithick]
\tikzstyle{every state}=[fill=white,draw=black,thick,text=black,scale=0.9]
\node[circle, fill=black, inner sep=2.5pt, label={[font=\scriptsize]below left:Start}] (start) at (-2.8,6) {};
\node[state]              (s1) at (0,6)   {$s_1$};
\node[state]              (s2) at (0,4)   {$s_2$};
\node[state]              (s3) at (0,2)   {$s_3$};
\node[state]              (s4) at (0,0)   {$s_4$};
\node[state,fill=red]     (a1) at (6,5) {$s^0_1$};
\node[state,fill=red]     (a2) at (6,3) {$s^0_2$};
\node[state,fill=red]     (a3) at (6,1) {$s^0_3$};
\path[dashed, every node/.style={font=\scriptsize}]
(start) edge node[above]{$\alpha_1$} (s1)
(start) edge node[pos=0.65,above]{$\alpha_2$} (s2)
(start) edge node[pos=0.65,above]{$\alpha_3$} (s3)
(start) edge node[pos=0.7,left]{$\alpha_4$} (s4);
\path[every node/.style={font=\scriptsize}]
(s1) edge[bend left=12] node[right]{$\lambda_{12}$} (s2)
(s2) edge[bend left=12] node[left]{$\lambda_{21}$} (s1)
(s2) edge[bend left=12] node[right]{$\lambda_{23}$} (s3)
(s3) edge[bend left=12] node[left]{$\lambda_{32}$} (s2)
(s3) edge[bend left=12] node[right]{$\lambda_{34}$} (s4)
(s4) edge[bend left=12] node[left]{$\lambda_{43}$} (s3)
(s4) edge[bend left=60] node[left]{$\lambda_{41}$} (s1)
(s1) edge node[above]{$\mu_{11}$} (a1)
(s2) edge node[pos=0.6,above]{$\mu_{22}$} (a2)
(s3) edge node[pos=0.6,above]{$\mu_{32}$} (a2)
(s3) edge node[pos=0.6,below]{$\mu_{33}$} (a3)
(s4) edge node[below]{$\mu_{43}$} (a3)
;
\end{tikzpicture}
\end{minipage}%
\begin{minipage}[c]{0.38\textwidth}
\centering
\begin{align*}
\alpha &= \begin{bmatrix} \alpha_1 & \alpha_2 & \alpha_3 & \alpha_4 \end{bmatrix},\\[1.5ex]
T &=
\begin{bmatrix}
-q_1 & \lambda_{12} & 0 & 0\\
\lambda_{21} & -q_2 & \lambda_{23} & 0\\
0 & \lambda_{32} & -q_3 & \lambda_{34}\\
\lambda_{41} & 0 & \lambda_{43} & -q_4
\end{bmatrix},\\[1.5ex]
D &=
\begin{bmatrix}
\mu_{11} & 0 & 0\\
0 & \mu_{22} & 0\\
0 & \mu_{32} & \mu_{33}\\
0 & 0 & \mu_{43}
\end{bmatrix}.
\end{align*}
\end{minipage}
\end{figure}

We now summarize a variety of properties of $\text{MAPH}_{m,n}$ distributions for the random vector $\zeta = (
\tau,\kappa)$. We use the notation $D_k$ to indicate the $k$'th column of the matrix D.

Before stating these properties we record a structural fact used throughout. Since absorption is certain, $\cs^0$ is accessible from $\cs$ and $D \neq 0$, so the row sums of $T$ satisfy $T\mathbf{1}_m = -D\mathbf{1}_n \le 0$ with strict inequality in at least one row. A sub-generator $T$ from which absorption is certain -- equivalently, every transient state can reach a state with a strictly negative row sum -- has $-T$ a non-singular $M$-matrix; equivalently, every eigenvalue of $T$ has strictly negative real part \citep[Ch.~6]{berman_plemmons_1994}. (Were every row sum of $T$ zero, $T$ would generate a conservative, recurrent chain with $T\mathbf{1}_m=0$, hence singular; certain absorption rules this out.) This holds whether or not $T$ is irreducible, and in particular for the reducible $T$ of Erlang, hyper-exponential, and Coxian distributions. Consequently $T$ is invertible with $-T^{-1} \ge 0$ entrywise and $e^{Tu} \to 0$ as $u \to \infty$, so the matrix exponentials and integrals appearing below are well defined.

\begin{proposition}
\label{prop:maph-props}
Let $\zeta = (\tau,\kappa)$ be an $\text{MAPH}_{m,n}(\alpha,T,D)$ random pair and let $D_k$ denote the $k$'th column of $D$. Since $-T$ is a non-singular $M$-matrix (as established above), $-T^{-1}$ has non-negative entries, and for $u \ge 0$ and $k=1,\ldots,n$ the following hold.
\begin{itemize}
    \item[(i)] The sub-distribution function is $F(u,k) = \Prob(\tau \le u, \kappa = k)= -\alpha(I-e^{Tu})T^{-1}D_k,$ and the all-cause distribution function is $F(u) = \Prob(\tau \le u) = \sum_{k=1}^n F(u,k) = 1 - \alpha e^{Tu}\mathbf{1}_m$.
    \item[(ii)] The sub-density over non-negative $u$ is $f(u,k) = \frac{\partial}{\partial u} F(u,k) =  \alpha e^{T u}D_k.$
    \item[(iii)] The marginal absorption probability is $\Prob(\kappa = k) = F(\infty,k) = -\alpha T^{-1}D_k$, and the conditional density of $\tau$ given $\kappa = k$ is $f(u \mid \kappa = k) = f(u,k)/\Prob(\kappa = k)$.
    \item[(iv)] The cause-specific hazard rate for cause $k$ is
    $H(u,k) = \dfrac{f(u,k)}{1-F(u)} = \dfrac{\alpha e^{Tu}D_k}{\alpha e^{Tu}\mathbf{1}_m}$.
    It satisfies $H(u,k)\,\mathrm{d}u = \Prob\big(\tau \in [u,u+\mathrm{d}u),~ \kappa = k ~\big|~ \tau > u\big)$, namely the instantaneous rate of absorption through cause $k$ given survival past $u$.
    \item[(v)] The (partial) Laplace--Stieltjes transform is $\phi(s,k) = \mathbb{E}\left[e^{-s\tau}\ind{\kappa=k}\right]  =\alpha(sI-T)^{-1}D_k$, for $s \ge 0$.
    \item[(vi)] The (partial) moment generating function is  $M(s,k)= \mathbb{E}\left[e^{s\tau}\ind{\kappa=k}\right] = -\alpha(sI+T)^{-1}D_k$, for $s$ in a neighbourhood of $0$.
    \item[(vii)] The $j$'th partial moment is $\mathcal{M}_{j,k} = \mathbb{E} \left[\tau^j \ind{\kappa=k} \right]= \int_{0}^{\infty} u^j f(u,k)\,\mathrm{d}u = (-1)^{j+1}\, j!\, \alpha\, T^{-(j+1)}D_k$, for $j = 1,2,\ldots$.
\end{itemize}
\end{proposition}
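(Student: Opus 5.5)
The plan is to derive everything from one probabilistic identity, the sub-density in (ii), and then obtain the remaining items by integration or transforms. The Markov chain restricted to the transient set has transition function $\Prob(X(u)=s_j,\ \tau>u \mid X(0)=s_i) = (e^{Tu})_{ij}$. This follows because $P(u)=e^{Qu}$ and $Q$ is block upper-triangular, so the upper-left block of $e^{Qu}$ is $e^{Tu}$. Summing over the initial law gives $\Prob(\tau>u)=\alpha e^{Tu}\mathbf{1}_m$, which yields the all-cause part of (i) immediately.

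For (ii), I would condition on the state at time $u$. Absorption into $s^0_k$ during $[u,u+h)$ requires being in some $s_j$ at time $u$ (with $\tau>u$) and then jumping to $s^0_k$, which has probability $\mu_{jk}h+o(h)$. Summing over $j$ gives $\Prob(\tau\in[u,u+h),\kappa=k)=\alpha e^{Tu}D_k\,h+o(h)$, so $f(u,k)=\alpha e^{Tu}D_k$. Alternatively, I could read off the $(i,m+k)$ entry of $e^{Qu}$, namely $\int_0^u e^{Tv}D\,\mathrm{d}v$, and differentiate it.

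Then (i) follows by integration. Since $T$ is invertible (the $M$-matrix fact recorded before the proposition),
\[
\int_0^u e^{Tv}\,\mathrm{d}v = T^{-1}(e^{Tu}-I) = -(I-e^{Tu})T^{-1},
\]
using that $T^{-1}$ commutes with $e^{Tu}$. This gives $F(u,k)$. Summing over $k$ and using $D\mathbf{1}_n=-T\mathbf{1}_m$ gives $-\alpha(I-e^{Tu})T^{-1}(-T\mathbf{1}_m)=1-\alpha e^{Tu}\mathbf{1}_m$, where $\alpha\mathbf{1}_m=1$. This agrees with the survival computation above.

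Item (iii) is the limit $u\to\infty$, using $e^{Tu}\to0$; the conditional density is then just Bayes' rule. Item (iv) combines (ii) with $1-F(u)=\alpha e^{Tu}\mathbf{1}_m$, and the interpretation is the $h$-limit from step (ii) divided by $\Prob(\tau>u)$.

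For (v), I would compute $\int_0^\infty e^{-su}e^{Tu}\,\mathrm{d}u=(sI-T)^{-1}$. This is valid for $s\ge0$ because $T-sI$ has eigenvalues with strictly negative real part, so $e^{(T-sI)u}\to0$ and the same antiderivative argument applies. Item (vi) is the same computation with $s\mapsto -s$, valid for $s<\min_i|\mathrm{Re}\,\lambda_i(T)|$; this gives the neighbourhood of $0$.

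For (vii), I would either differentiate (vi) $j$ times at $s=0$, using $\frac{d^j}{ds^j}(-(sI+T)^{-1})=(-1)^{j+1}j!(sI+T)^{-(j+1)}$, or integrate by parts inductively using $\int_0^\infty u^j e^{Tu}\,\mathrm{d}u=(-1)^{j+1}j!\,T^{-(j+1)}$. The induction for the latter comes from differentiating $u^je^{Tu}T^{-1}$; boundary terms vanish since $u^je^{Tu}\to0$ exponentially.

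The only real obstacle is justifying the analytic facts: convergence of the improper integrals, exponential decay of $e^{Tu}$, and differentiation under the integral sign for (vi)–(vii). All of these reduce to the spectral bound on $T$ stated before the proposition, via a bound $\|e^{Tu}\|\le Ce^{-\gamma u}$ for some $\gamma>0$.
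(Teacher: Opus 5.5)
Your proposal is correct and follows essentially the same route as the paper: identify $e^{Tu}$ as the transient transition matrix, obtain the sub-density (ii) from the infinitesimal jump-to-$s^0_k$ argument, integrate to get (i) and let $u\to\infty$ for (iii), and evaluate (v)--(vii) as integrals of $e^{(T\mp sI)u}$ and $u^j e^{Tu}$, all justified by the spectral bound on $T$. The differences are cosmetic. You read $e^{Tu}$ off the block-triangular structure of $e^{Qu}$ rather than from the restricted forward equations, and you also offer differentiation of the moment generating function as an alternative route to (vii).
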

\begin{proof}
The proof uses standard matrix-analytic computations and is given in \Cref{app:proof-maph-props}.
\end{proof}
The quantities in items $(v)$--$(vii)$ are termed \emph{partial} (or sub-) quantities because they carry the indicator $\ind{\kappa=k}$ rather than conditioning on the event $\{\kappa=k\}$. They are linked to the corresponding conditional quantities through the marginal absorption probability of item $(iii)$: for any integrable function $g$,
\[
\ex{g(\tau)\,\ind{\kappa=k}} = \ex{g(\tau)\mid \kappa=k}\,\Prob(\kappa=k),
\]
since the contribution of the event $\{\kappa\neq k\}$ vanishes. In particular the partial moment satisfies $\mathcal{M}_{j,k} = \ex{\tau^j \mid \kappa=k}\,\Prob(\kappa=k)$, so \subchg{whenever $\Prob(\kappa=k)>0$, dividing by it} recovers the genuine conditional moment $\ex{\tau^j \mid \kappa=k}$; the same normalization applies to the partial transform of item $(v)$ and the partial moment generating function of item $(vi)$.

Further, the cause-specific hazard rate function  is specified in $(iv)$.

To make these quantities concrete we introduce a running example, used again in \Cref{sec:verification} to verify the fitting procedure.

\begin{example}[A running $\text{MAPH}_{4,3}$ model]
\label{ex:running}
Consider the sparse $\text{MAPH}_{4,3}$ of \Cref{fig:example-maph} with the (made-up) rates
\begin{equation}
\label{eq:verif-truth}
\alpha = (0.4,\,0.3,\,0.2,\,0.1),\quad
T = \begin{pmatrix} -3.0 & 1.0 & 0 & 0\\ 0.5 & -3.0 & 1.5 & 0\\ 0 & 0.5 & -3.0 & 1.0\\ 1.0 & 0 & 0.5 & -3.5 \end{pmatrix},\quad
D = \begin{pmatrix} 2.0 & 0 & 0\\ 0 & 1.0 & 0\\ 0 & 0.5 & 1.0\\ 0 & 0 & 2.0 \end{pmatrix}.
\end{equation}
Reading the three competing causes as discharge types -- $s^0_1$ a completely healthy discharge and $s^0_2$, $s^0_3$ a discharge requiring further care of type~2 and type~3 -- a patient enters one of the four transient phases according to $\alpha$, moves along the backbone of \Cref{fig:example-maph}, and is eventually absorbed into one of the three discharge types. Evaluating the formulas of \Cref{prop:maph-props} gives the marginal absorption probabilities $\Prob(\kappa=k)=(0.384,\,0.282,\,0.335)$, the conditional means $\ex{\tau\mid\kappa=k}=(0.54,\,0.66,\,0.74)$, and the conditional squared coefficients of variation $(1.18,\,0.96,\,0.87)$, the last straddling unity. \Cref{fig:running-subdist} plots the corresponding sub-distribution functions $F(u,k)$ and sub-densities $f(u,k)$ of items (i)--(ii), one curve per cause; the sub-distribution functions reappear in \Cref{fig:verif-cdf} as the target of the fit.
\end{example}

\begin{figure}[ht]
\centering
\begin{minipage}{0.49\textwidth}
\centering
\begin{tikzpicture}
\begin{axis}[width=\textwidth, height=0.82\textwidth,
  title={sub-distribution functions $F(u,k)$}, title style={font=\small},
  xlabel={$u$}, ylabel={$F(u,k)$},
  xmin=0, xmax=4, ymin=0, ymax=0.42, legend pos=south east,
  legend cell align=left, legend style={font=\scriptsize}, tick label style={font=\small}, label style={font=\small}]
\addplot[blue, thick] coordinates {(0.00,0.0000)(0.20,0.1278)(0.40,0.2093)(0.60,0.2626)(0.80,0.2983)(1.00,0.3228)(1.20,0.3398)(1.40,0.3520)(1.60,0.3606)(1.80,0.3669)(2.00,0.3715)(2.20,0.3748)(2.40,0.3772)(2.60,0.3790)(2.80,0.3803)(3.00,0.3812)(3.20,0.3819)(3.40,0.3824)(3.60,0.3828)(3.80,0.3831)(4.00,0.3833)};
\addlegendentry{$k=1$ healthy}
\addplot[red, thick] coordinates {(0.00,0.0000)(0.20,0.0712)(0.40,0.1260)(0.60,0.1672)(0.80,0.1979)(1.00,0.2204)(1.20,0.2369)(1.40,0.2490)(1.60,0.2578)(1.80,0.2643)(2.00,0.2690)(2.20,0.2724)(2.40,0.2749)(2.60,0.2767)(2.80,0.2780)(3.00,0.2790)(3.20,0.2797)(3.40,0.2802)(3.60,0.2806)(3.80,0.2809)(4.00,0.2811)};
\addlegendentry{$k=2$ care~2}
\addplot[teal, thick] coordinates {(0.00,0.0000)(0.20,0.0725)(0.40,0.1316)(0.60,0.1792)(0.80,0.2168)(1.00,0.2460)(1.20,0.2684)(1.40,0.2854)(1.60,0.2982)(1.80,0.3077)(2.00,0.3148)(2.20,0.3201)(2.40,0.3239)(2.60,0.3267)(2.80,0.3288)(3.00,0.3304)(3.20,0.3315)(3.40,0.3323)(3.60,0.3329)(3.80,0.3333)(4.00,0.3337)};
\addlegendentry{$k=3$ care~3}
\end{axis}
\end{tikzpicture}
\end{minipage}\hfill
\begin{minipage}{0.49\textwidth}
\centering
\begin{tikzpicture}
\begin{axis}[width=\textwidth, height=0.82\textwidth,
  title={sub-densities $f(u,k)$}, title style={font=\small},
  xlabel={$u$}, ylabel={$f(u,k)$},
  xmin=0, xmax=4, ymin=0, ymax=0.85, legend pos=north east,
  legend cell align=left, legend style={font=\scriptsize}, tick label style={font=\small}, label style={font=\small}]
\addplot[blue, thick] coordinates {(0.00,0.8000)(0.20,0.5038)(0.40,0.3255)(0.60,0.2158)(0.80,0.1465)(1.00,0.1014)(1.20,0.0714)(1.40,0.0509)(1.60,0.0367)(1.80,0.0266)(2.00,0.0193)(2.20,0.0141)(2.40,0.0103)(2.60,0.0076)(2.80,0.0055)(3.00,0.0041)(3.20,0.0030)(3.40,0.0022)(3.60,0.0016)(3.80,0.0012)(4.00,0.0009)};
\addlegendentry{$k=1$ healthy}
\addplot[red, thick] coordinates {(0.00,0.4000)(0.20,0.3129)(0.40,0.2377)(0.60,0.1774)(0.80,0.1310)(1.00,0.0962)(1.20,0.0704)(1.40,0.0514)(1.60,0.0375)(1.80,0.0273)(2.00,0.0200)(2.20,0.0146)(2.40,0.0106)(2.60,0.0078)(2.80,0.0057)(3.00,0.0042)(3.20,0.0030)(3.40,0.0022)(3.60,0.0016)(3.80,0.0012)(4.00,0.0009)};
\addlegendentry{$k=2$ care~2}
\addplot[teal, thick] coordinates {(0.00,0.4000)(0.20,0.3274)(0.40,0.2653)(0.60,0.2114)(0.80,0.1657)(1.00,0.1279)(1.20,0.0975)(1.40,0.0736)(1.60,0.0551)(1.80,0.0410)(2.00,0.0304)(2.20,0.0224)(2.40,0.0165)(2.60,0.0121)(2.80,0.0089)(3.00,0.0065)(3.20,0.0048)(3.40,0.0035)(3.60,0.0026)(3.80,0.0019)(4.00,0.0014)};
\addlegendentry{$k=3$ care~3}
\end{axis}
\end{tikzpicture}
\end{minipage}
\caption{The running $\text{MAPH}_{4,3}$ example of \eqref{eq:verif-truth}: sub-distribution functions $F(u,k)=\Prob(\tau\le u,\kappa=k)$ (left) and sub-densities $f(u,k)=\alpha e^{Tu}D_k$ (right) for the three competing discharge types, from items (i)--(ii) of \Cref{prop:maph-props}. Each sub-distribution function rises to its cause's marginal probability $\Prob(\kappa=k)$; the left panel reappears as the fitting target in \Cref{fig:verif-cdf}.}
\label{fig:running-subdist}
\end{figure}

\subsection{The embedded jump chain and the absorption probabilities}
\label{sec:second-param}

\subchg{The competing-risks content of an $\text{MAPH}_{m,n}$ distribution is carried by the matrix
$R$ of absorption probabilities: $\rho_{ik}$ is the probability that a trajectory started in phase
$s_i$ is eventually absorbed in $s^0_k$. This section constructs $R$ from $(T,D)$, records how it
interacts with the jump chain, and derives the bounds relating the two that are used later.}

We consider the embedded Markov chain (jump chain) associated with $X(\cdot)$. For this consider the transition epochs of $X(\cdot)$ as $0 = \sigma_0 < \sigma_1 < \sigma_2 < \ldots$. Keeping in mind that trajectories of $X(\cdot)$ are right continuous with left limits, denote $I_j=X(\sigma_j)$. Here $\{I_j\}_{j=0}^\infty$ is a discrete time Markov chain where for some finite $N$, for all $j \ge N$, $I_j$ is at a fixed state in $\cs^0$ . We denote this fixed absorbing state via $I_\infty$. 

This embedded Markov chain is based on non-trivial probabilities denoted via $p_{ij}^\lambda$ for $s_i,s_j \in \cs$ and $p_{ik}^\mu$ for $s_i \in \cs$ and $s^0_k \in \cs^0$. Specifically,
\begin{align}
\label{eq:transition_prob}
   p_{ij}^\lambda & :=\mathbb{P}(I_{1}=s_j\vert I_0=s_i)=
\begin{cases}
\frac{\lambda_{ij}}{q_i}, & j\neq i,\\
0, & j = i.
\end{cases}  
\\
p_{ik}^\mu & :=\mathbb{P}(I_{1}=s^0_k\vert I_0=s_i)=
\frac{\mu_{ik}}{q_i}.
\end{align}
These probabilities are organized in an $(m+n) \times (m+n)$ transition probability matrix $P$,
\begin{equation}\label{eq:transition_matrix}
    P = \begin{bmatrix}
        P^\lambda & P^\mu\\
        0_{n\times m} &  I_{n \times n}
    \end{bmatrix},
\end{equation}
where $P^\lambda$ is the matrix of $p_{ij}^\lambda$ entries and $P^\mu$ is the matrix of $p_{ik}^\mu$ entries. Moreover,
we have $P^\lambda = I_{m\times m} - \mathrm{Diag}(T)^{-1} T$  and $P^\mu = -\mathrm{Diag}(T)^{-1}D$, where $\mathrm{Diag}(T)$ is the diagonal matrix containing the diagonal elements of the matrix $T$ and thus has entries on the diagonal $-q_1,\ldots,-q_m$.

We are also interested in additional quantities associated with the jump chain. Specifically for $i$ such that $s_i \in \cs$ and $k$ such that $s^0_k \in \cs^0$ we denote,
\begin{equation}\label{eq:rho_ik}
\rho_{ik}:= \mathbb{P}(I_{\infty}= s^0_k ~\vert~ I_0 = s_i).
\end{equation}
These are the absorption probabilities for the competing risks. \subchg{The entries $\rho_{ik}$ are non-negative, and $\rho_{ik}=0$ means that cause $k$ cannot occur from phase $s_i$.} We denote $R$ as the $m \times n$ matrix with entries $\rho_{ik}$, and a standard ``first step analysis'' computation yields,
\begin{equation}
\label{eq:r-matrix}    
R = (I - P^\lambda)^{-1}P^\mu = -T^{-1}D.
\end{equation}

The inverse $T^{-1}$ in \eqref{eq:r-matrix} is well defined: as established preceding \Cref{prop:maph-props}, absorption being certain makes $-T$ a non-singular $M$-matrix, so $T$ is invertible with $-T^{-1} \ge 0$. It follows that $R = -T^{-1}D \ge 0$ is well defined and, using $D\mathbf{1}_n = -T\mathbf{1}_m$, is row-stochastic since $R\mathbf{1}_n = -T^{-1}D\mathbf{1}_n = T^{-1}T\mathbf{1}_m = \mathbf{1}_m$; \subchg{with $\sum_k \rho_{ik} = 1$ for every $i$}.

Further, we are interested in conditional transition probabilities for transitions between states in $\cs$, when conditioning on eventually being absorbed in a specific state. Specifically for $i,j \in \{1,\ldots,m\}$ and $k \in \{1,\ldots,n\}$, define,
\[
p^\lambda_{ij |k} = \mathbb{P}(I_{1}=s_j\vert I_0 = s_i, I_{\infty}=s^0_k).
\]
The conditional and unconditional one-step probabilities determine one another through the
absorption probabilities, in both directions, as recorded next.

\begin{subrevision}
\begin{lemma}
\label{lem:cond-jump-probs}
Let $s_i, s_j \in \cs$ with $i \neq j$, and let $s^0_k \in \cs^0$. Then the following hold.

\noindent \textup{(i)} If $\rho_{ik} > 0$, the conditioning event $\{I_0 = s_i,\, I_\infty = s^0_k\}$
has positive probability and
\begin{align}
\label{eq:main-pijgivenk-equation}
p^\lambda_{ij|k} &= p_{ij}^{\lambda}\, \frac{\rho_{jk}}{\rho_{ik}}.
\end{align}
\noindent \textup{(ii)} If in addition $\rho_{jk} > 0$, then \eqref{eq:main-pijgivenk-equation} may be
inverted,
\begin{align}
\label{eq:pij-from-pijk-cond}
p_{ij}^{\lambda} &= p^\lambda_{ij|k}\, \frac{\rho_{ik}}{\rho_{jk}},
\end{align}
and, since the left hand side of \eqref{eq:pij-from-pijk-cond} does not depend on $k$, the right hand
side takes the same value for every $k$ with $\rho_{ik}, \rho_{jk} > 0$; equivalently, for any two
such indices $k$ and $l$,
\begin{equation}
\label{eq:pijk-consistency}
p^\lambda_{ij|k}\, \frac{\rho_{ik}}{\rho_{jk}}
=
p^\lambda_{ij|l}\, \frac{\rho_{il}}{\rho_{jl}}.
\end{equation}
\noindent \textup{(iii)} If $\rho_{ik} > 0$ but $\rho_{jk} = 0$, then \eqref{eq:main-pijgivenk-equation}
still holds and yields $p^\lambda_{ij|k} = 0$, whereas $p^\lambda_{ij}$ may be strictly positive. In
this case $p^\lambda_{ij}$ is not recoverable from $p^\lambda_{ij|k}$ and
\eqref{eq:pij-from-pijk-cond} is unavailable: conditioning on absorption in $s^0_k$ carries no
information about jumps into states from which $s^0_k$ is unreachable.

\noindent \textup{(iv)} If $\rho_{ik} = 0$, the conditioning event is null, so $p^\lambda_{ij|k}$ is
undefined; we then adopt the convention $p^\lambda_{ij|k} := 0$. In this case necessarily
$p^\mu_{ik} = 0$ and $p^\lambda_{ij}\, \rho_{jk} = 0$ for every $j$, so $\rho_{jk} = 0$ for every $j$
with $p^\lambda_{ij} > 0$.

\noindent \textup{(v)} Without any positivity requirement, and with the convention of
\textup{(iv)} in force,
\begin{equation}
\label{eq:pijk-aggregation}
p^\lambda_{ij} = \sum_{k=1}^n \rho_{ik}\, p^\lambda_{ij|k} .
\end{equation}
\noindent In all cases $p^\lambda_{ii|k} = p^\lambda_{ii} = 0$. By \textup{(iii)} and \textup{(iv)} a
single conditional slice need not determine $p^\lambda_{ij}$, whereas the whole family of slices,
weighted by the absorption probabilities as in \textup{(v)}, returns $p^\lambda_{ij}$ whatever the
pattern of zeros in $R$. It is \textup{(v)} that is used in the sequel.
\end{lemma}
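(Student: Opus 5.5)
The whole lemma rests on one identity: the Markov property of the jump chain $\{I_j\}$ at time $1$. For $i\neq j$ with $s_i,s_j\in\cs$, conditioning on $I_1=s_j$ and using that $I_\infty$ is a function of the shifted chain $(I_1,I_2,\ldots)$, I will first show
\begin{equation*}
\Prob(I_1=s_j,\ I_\infty=s^0_k \mid I_0=s_i)=p^\lambda_{ij}\,\rho_{jk}.
\end{equation*}
This holds with no positivity assumption. The event $\{I_\infty=s^0_k\}$ is well defined because absorption is certain. Everything else follows by dividing this identity, or by summing it over $k$.

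For (i), note that $\Prob(I_0=s_i,I_\infty=s^0_k)=\alpha_i\rho_{ik}$. This is positive when $\rho_{ik}>0$, provided $\alpha_i>0$. If $\alpha_i=0$, I will read the conditional probability as the law of the chain started at $s_i$, i.e.\ under $\Prob(\cdot\mid I_0=s_i)$ as in the definition of $\rho_{ik}$. Dividing the displayed identity by $\rho_{ik}$ gives \eqref{eq:main-pijgivenk-equation}.

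For (ii), when $\rho_{jk}>0$, multiplying \eqref{eq:main-pijgivenk-equation} by $\rho_{ik}/\rho_{jk}$ gives \eqref{eq:pij-from-pijk-cond}. Equation \eqref{eq:pijk-consistency} follows because both sides equal $p^\lambda_{ij}$, which does not depend on $k$.

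For (iii), the same identity with $\rho_{jk}=0$ forces $p^\lambda_{ij|k}=0$. To show that $p^\lambda_{ij}$ may nevertheless be positive, I only need a small example: $s_i$ jumps to $s_j$ and also exits to cause $k$, while $s_j$ only reaches a different cause.

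For (iv), I will use the first-step decomposition, which is the row form of $R=P^\lambda R+P^\mu$ from \eqref{eq:r-matrix}:
\begin{equation*}
\rho_{ik}=p^\mu_{ik}+\sum_{j\neq i}p^\lambda_{ij}\rho_{jk}.
\end{equation*}
All terms are non-negative, so $\rho_{ik}=0$ forces $p^\mu_{ik}=0$ and $p^\lambda_{ij}\rho_{jk}=0$ for each $j$.

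For (v), I will sum $\rho_{ik}\,p^\lambda_{ij|k}$ over $k$. If $\rho_{ik}>0$, part (i) gives the term $p^\lambda_{ij}\rho_{jk}$. If $\rho_{ik}=0$, the convention in (iv) makes the term $0$, and part (iv) shows that $p^\lambda_{ij}\rho_{jk}=0$ as well. So in every case the term equals $p^\lambda_{ij}\rho_{jk}$. Summing over $k$ and using $\sum_k\rho_{jk}=1$ (row-stochasticity of $R$) yields \eqref{eq:pijk-aggregation}.

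The diagonal claim is immediate because the jump chain has no self-loops: $p^\lambda_{ii}=0$. The identity also holds for $j=i$, so $p^\lambda_{ii|k}=0$ too, and under the convention of (iv) when $\rho_{ik}=0$.

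The main obstacle is making the strong-Markov/shift step rigorous. Specifically, I must justify $\Prob(I_\infty=s^0_k\mid I_1=s_j,I_0=s_i)=\rho_{jk}$, which requires $I_\infty$ to be a measurable function of the path after time $1$ and requires certain absorption so that $I_\infty$ exists almost surely. The zero-case bookkeeping in (iv)--(v) is the only other delicate point, and the first-step decomposition handles it.
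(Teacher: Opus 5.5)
Your proposal is correct and follows essentially the same route as the paper. The Markov property at the first jump (the ordinary property at the deterministic time $1$ suffices, not the strong one) gives (i)--(iii), and the non-negative first-step identity $\rho_{ik}=p^\mu_{ik}+\sum_{l}p^\lambda_{il}\rho_{lk}$ handles (iv) and the zero cases of (v). Your termwise identity $\rho_{ik}\,p^\lambda_{ij|k}=p^\lambda_{ij}\,\rho_{jk}$ for every $k$ is a slightly cleaner packaging of the paper's case split in (v).
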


\begin{proof}
Given in \Cref{app:proofs-maph}.
\end{proof}
\end{subrevision}

We note that for any $i \in \{1,\ldots,m\}$ and $k \in \{1,\ldots,n\}$, we can represent the probability of immediate (one step) absorption in $s_k^0$ given that we start in $s_i$ and know that we ultimately get absorbed in $s_k^0$. We denote this as, 
\begin{equation}
\label{eq:pik-mu-from-pijk-1}  
p^\mu_{ik|k} 
=
\mathbb{P}(I_1 = s_k^0 ~\vert~ I_0 = s_i, I_
\infty = s_k^0),
\end{equation}
\subchg{Take $\rho_{ik}>0$ for the moment, so that case \textup{(i)} of \Cref{lem:cond-jump-probs}
applies.} We represent it as,
\begin{equation}
\label{eq:pik-mu-from-pijk-2}  
p^\mu_{ik|k}  = 1 - \sum_{j=1}^m p^\lambda_{ij |k} = 1 - \sum_{j = 1}^{m} p_{ij}^{\lambda} \frac{\rho_{jk}}{\rho_{ik}} .
\end{equation}

\subchg{Since $\{I_1 = s^0_k\} \subseteq \{I_\infty = s^0_k\}$ -- a first jump into $s^0_k$ is
already absorption there -- the event measured by $p^\mu_{ik|k}$ is the one measured by the
unconditional $p^\mu_{ik}$ of \eqref{eq:transition_prob}, and the two differ only through the
conditioning. Hence} $p^\mu_{ik|k} \ge p^\mu_{ik}$, with \subchg{equality exactly when
$\rho_{ik}=1$, and} strict inequality whenever $p^\mu_{ik}>0$ and $\rho_{ik}<1$. To see this, observe,
\begin{equation}
\label{eq:p-ik-mu-der}
p^\mu_{ik|k} = \frac{\mathbb{P}(I_1 = s_k^0,I_0 = s_i)}{\mathbb{P}(I_0 = s_i, I_
\infty = s_k^0)} = \frac{p^\mu_{ik} \alpha_i}{\rho_{ik} \alpha_i} =\frac{p^\mu_{ik} }{\rho_{ik}},
\end{equation}
and notice that $\rho_{ik} \in (0,1]$ in this case, with $\rho_{ik}<1$ whenever $n \ge 2$ and some other cause is possible from $s_i$. \subchg{If instead $\rho_{ik}=0$ then, by \Cref{lem:cond-jump-probs}\,\textup{(iv)}, $p^\mu_{ik}=0$ and both $p^\mu_{ik|k}$ and every $p^\lambda_{ij|k}$ vanish under the convention adopted there, so \eqref{eq:pik-mu-from-pijk-2} continues to hold while \eqref{eq:p-ik-mu-der} is empty.} Note further that by equating \eqref{eq:pik-mu-from-pijk-2} and \eqref{eq:p-ik-mu-der} we recover the first step identity,
\begin{equation}
\label{eq:first-step-1}
p^\mu_{ik} + \sum_{j = 1}^{m} p_{ij}^{\lambda} \rho_{jk} = \rho_{ik} .
\end{equation}
\subchg{Written in matrix form this is $P^\lambda R + P^\mu = R$, the row-wise statement of
\eqref{eq:r-matrix}, so it holds for all $i$ and $k$ with no positivity required.}

\subchg{The jump matrix $P^\lambda$ and the absorption matrix $R$ are not independent of one another.
Together they determine the conditional slices, by \Cref{lem:cond-jump-probs}\,\textup{(i)},}
\begin{equation}
\label{eq:pij-relationship-const-2}
p_{ij|k}^\lambda = p^\lambda_{ij} \frac{\rho_{jk}}{\rho_{ik}}
\qquad \text{whenever } \rho_{ik}>0,
\end{equation}
\subchg{with $p^\lambda_{ij|k}=0$ when $\rho_{ik}=0$; and, conversely, $P^\lambda$ is recovered from
the slices by the aggregation identity \eqref{eq:pijk-aggregation}. They are also linked by a family
of inequalities, which we record next because they are what confines $R$ once $P^\lambda$ is known:
the exit probabilities implied by the first-step identity must be non-negative. The lemma states this
for arbitrary matrices $R$ and $P^\lambda$, since it is used both for a fitted model and, in
\Cref{sec:init-heuristic}, for a constructed one.}

\begin{subrevision}
\begin{lemma}
\label{lem:feasible-jump}
Let $R$ be a non-negative row-stochastic $m \times n$ matrix and let $P^\lambda$ be a non-negative
$m \times m$ matrix with zero diagonal. Define $P^\mu$ by the first-step identity
\eqref{eq:first-step-1},
\[
p^\mu_{ik} := \rho_{ik} - \sum_{j=1}^m p^\lambda_{ij}\, \rho_{jk},
\qquad i=1,\ldots,m, \quad k=1,\ldots,n.
\]
Then the following hold.

\noindent \textup{(i)} For every $i$,
$\sum_{j=1}^m p^\lambda_{ij} + \sum_{k=1}^n p^\mu_{ik} = 1$.

\noindent \textup{(ii)} $P^\mu \ge 0$ if and only if
\begin{equation}
\label{eq:second-param-main-constraint}
\sum_{j=1}^m \rho_{jk}\, p^\lambda_{ij}
\le \rho_{ik}
\quad
\text{for}
\quad
i=1,\ldots,m, \quad k=1,\ldots,n,
\end{equation}
that is, if and only if $P^\lambda R \le R$ entrywise. In that case $[\,P^\lambda \mid P^\mu\,]$ is a
stochastic matrix; in particular every $p^\mu_{ik} \in [0,1]$ and $P^\lambda$ is sub-stochastic, so
sub-stochasticity is implied by \eqref{eq:second-param-main-constraint} rather than separately
imposed.

\noindent \textup{(iii)} For a given $i$, the $n$ constraints of row $i$ in
\eqref{eq:second-param-main-constraint} all hold with equality if and only if row $i$ of $P^\mu$
vanishes, equivalently $\sum_{j} p^\lambda_{ij} = 1$.

\noindent \textup{(iv)} If moreover $P^\lambda$ and $R$ are the jump matrix \eqref{eq:transition_prob}
and the absorption matrix \eqref{eq:rho_ik} of an $\text{MAPH}_{m,n}$ as defined above, then
$P^\mu$ agrees with the one-step absorption matrix of \eqref{eq:transition_prob} and the conditional
slices are recovered from $P^\lambda$ and $R$ by
\begin{equation}
\label{eq:pij-relationship-const-2-lem}
p_{ij|k}^\lambda = p^\lambda_{ij} \frac{\rho_{jk}}{\rho_{ik}}
\qquad \text{whenever } \rho_{ik}>0,
\end{equation}
and $p^\lambda_{ij|k}=0$ when $\rho_{ik}=0$. Conversely $P^\lambda$ is recovered from the slices by
the aggregation identity \eqref{eq:pijk-aggregation}.
\end{lemma}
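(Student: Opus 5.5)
The plan is to prove the four items in order, each by direct linear algebra from the definition $P^\mu := R - P^\lambda R$, and to invoke the earlier results only in \textup{(iv)}.

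\emph{Item \textup{(i)}.} Sum the definition of $p^\mu_{ik}$ over $k$ and use row-stochasticity, $R\mathbf{1}_n=\mathbf{1}_m$:
\[
\sum_k p^\mu_{ik} = \sum_k \rho_{ik} - \sum_j p^\lambda_{ij}\sum_k \rho_{jk} = 1 - \sum_j p^\lambda_{ij}.
\]
Rearranging gives \textup{(i)}.

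\emph{Item \textup{(ii)}.} Since $p^\mu_{ik}\ge 0$ is literally $\sum_j \rho_{jk}p^\lambda_{ij}\le\rho_{ik}$, the equivalence with \eqref{eq:second-param-main-constraint}, i.e.\ $P^\lambda R\le R$, is a restatement of the definition. Under this condition, $P^\mu\ge0$ and $P^\lambda\ge0$ hold by hypothesis, so together with \textup{(i)} the matrix $[\,P^\lambda\mid P^\mu\,]$ is non-negative with unit row sums, hence stochastic. Each entry of a stochastic matrix lies in $[0,1]$, which gives $p^\mu_{ik}\in[0,1]$. Moreover $\sum_j p^\lambda_{ij} = 1-\sum_k p^\mu_{ik}\le 1$, so $P^\lambda$ is sub-stochastic.

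\emph{Item \textup{(iii)}.} Equality in all $n$ constraints of row $i$ means $p^\mu_{ik}=0$ for every $k$, which is the statement that row $i$ of $P^\mu$ vanishes. By \textup{(i)}, this forces $\sum_j p^\lambda_{ij}=1$. For the converse, $\sum_j p^\lambda_{ij}=1$ gives $\sum_k p^\mu_{ik}=0$. This step needs care: the entries $p^\mu_{ik}$ can only be concluded to vanish individually if they are non-negative. I will therefore read \textup{(iii)} under the standing assumption of \textup{(ii)}, namely \eqref{eq:second-param-main-constraint}. Under that assumption, a zero sum of non-negative terms forces each term to be zero. I expect this tacit hypothesis to be the only subtle point in the lemma, and I will make it explicit in the write-up.

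\emph{Item \textup{(iv)}.} When $P^\lambda$ and $R$ come from an actual $\text{MAPH}_{m,n}$, relation \eqref{eq:r-matrix} gives $R=(I-P^\lambda)^{-1}P^\mu_{\text{true}}$, equivalently $R-P^\lambda R = P^\mu_{\text{true}}$. This is the first-step identity \eqref{eq:first-step-1}. Hence the $P^\mu$ defined in the lemma coincides with the one-step absorption matrix of \eqref{eq:transition_prob}. The formula for the slices is exactly \Cref{lem:cond-jump-probs}\,\textup{(i)}, and the zero convention when $\rho_{ik}=0$ is part \textup{(iv)} of that lemma. The converse recovery of $P^\lambda$ is its part \textup{(v)}, the aggregation identity \eqref{eq:pijk-aggregation}. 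So \textup{(iv)} reduces entirely to citing \eqref{eq:r-matrix} and \Cref{lem:cond-jump-probs}.
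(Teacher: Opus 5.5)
Your proposal is correct and follows the paper's proof essentially step for step: (i) by summing over $k$ with $R\mathbf{1}_n=\mathbf{1}_m$, (ii) as a restatement of $p^\mu_{ik}\ge 0$ combined with (i), (iii) read off from (i), and (iv) by rewriting \eqref{eq:r-matrix} as $R-P^\lambda R=P^\mu$ and citing parts (i), (iv) and (v) of \Cref{lem:cond-jump-probs}. Your caveat in (iii) is well founded and more careful than the paper's proof, which simply says ``by (i)'': the direction $\sum_j p^\lambda_{ij}=1\Rightarrow$ row $i$ of $P^\mu$ vanishes genuinely needs $p^\mu_{ik}\ge 0$ in that row (with $R=I_2$ and $P^\lambda$ whose only non-zero entry is $p^\lambda_{12}=1$, row $1$ of $P^\mu$ is $(1,-1)$), so making that hypothesis explicit is the right fix.
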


\begin{proof}
Given in \Cref{app:proofs-maph}.
\end{proof}
\end{subrevision}

\subchg{The constraints \eqref{eq:second-param-main-constraint} are what confine $P^\lambda$, for a
given $R$, to the tuples that correspond to an actual MAPH distribution: with $R$ carrying the
absorption probabilities, $P^\mu$ is no longer free, and \Cref{lem:feasible-jump}\,\textup{(ii)} is
exactly the requirement that the implied one-step absorption probabilities be non-negative.
Equivalently, $D=-TR$ has no negative entry. There are
$m \times n$ such constraints on the $m^2$ elements of $P^\lambda$, together with a further $m^2$
non-negativity constraints.}

\subchg{Two consequences of \Cref{lem:feasible-jump} are worth isolating, since they are the bounds
the rest of the paper uses. First, by \textup{(ii)}, the absorption probabilities of any
$\text{MAPH}_{m,n}$ dominate their one-step averages,
\begin{equation}
\label{eq:R-bound}
P^\lambda R \le R
\qquad \text{entrywise},
\end{equation}
with the slack in entry $(i,k)$ equal to the one-step absorption probability $p^\mu_{ik}$; the
inequality is an equality in row $i$ for every $k$ exactly when $s_i$ cannot be left for absorption
in one step. Second, by \textup{(i)}, the row sums of $P^\lambda$ and $P^\mu$ are complementary, so
$P^\lambda$ is sub-stochastic and its $i$-th row sum is the probability of taking more than one step
to be absorbed from $s_i$. Both are properties of every MAPH distribution, not conditions to be
imposed: given $(T,D)$ they hold automatically, and given a construction that specifies $R$ and
$P^\lambda$ directly, as in \Cref{sec:init-heuristic}, they are what has to be checked.}

%% file: sections/estimation.tex
\section{Estimation and the EM Algorithm}
\label{sec:full-path-mle}

We now momentarily assume full observations, where we denote the collection of observations as $\X = \{X^\ell,~\ell=1,\ldots,L \}$, with $L$ samples of fully observed Markov chains.

Each $X^\ell=\{X^\ell(t):t\geq 0\}$ has epoch times $0 = \sigma_0^\ell < \sigma_1^\ell < \sigma_2^\ell < \cdots$ and an embedded jump chain $\{I_n^\ell\}$ with $I_n^\ell = X^\ell(\sigma_n^\ell)$, indexed by discrete time $n$; we write $i_n^\ell$ for the index of the visited state $I_n^\ell$. As before, $I_\infty^\ell \in \cs^0$ denotes the absorbing state, and we set $n_0^\ell = \inf\{n : I_n^\ell = I_\infty^\ell\}$, the number of jumps until absorption, so that $I_0^\ell,\ldots,I_{n_0^\ell-1}^\ell$ are transient and $I_{n_0^\ell}^\ell = I_\infty^\ell$ is absorbing. The path thus starts in $I_0^\ell = s_{i_0^\ell}$, with index $i_0^\ell$, and is absorbed in $I_{n_0^\ell}^\ell = s^0_{k_0^\ell}$, with index $k_0^\ell$; its last transient state before absorption is $I_{n_0^\ell-1}^\ell$, with index $i_{n_0^\ell-1}^\ell$. We also write $X^\ell_k$ for a path absorbed in $s_k^0$, i.e.\ one with $k_0^\ell = k$. We suppress the superscript $\ell$ when a single path is understood; for instance the proofs below write $n_0, \sigma_n, i_n$ for $n_0^\ell, \sigma_n^\ell, i_n^\ell$.

For each path $X^\ell$ we use the per-path statistics listed in the upper block of \Cref{tab:stats}, and we aggregate them over all $L$ observations into the per-sample statistics in the lower block. Throughout, $s_i, s_j \in \cs$ index transient phases ($i,j = 1,\ldots,m$) and $s_k^0 \in \cs^0$ indexes absorbing states ($k = 1,\ldots,n$).

\begin{table}[ht]
\centering
\caption{Per-path (left) and aggregated per-sample (right) sufficient statistics used in the complete-data likelihood for MAPH parameter estimation, each given as $\text{symbol} = \text{definition}$ and aligned row-by-row by symbol. Indices range over transient phases $s_i, s_j \in \cs$ ($i,j = 1,\ldots,m$) and absorbing states $s_k^0 \in \cs^0$ ($k = 1,\ldots,n$).}
\label{tab:stats}
\footnotesize
\renewcommand{\arraystretch}{1.35}
\setlength{\tabcolsep}{4pt}
\begin{tabular}{@{}l p{0.165\textwidth} @{\quad} l p{0.205\textwidth}@{}}
\toprule
\multicolumn{2}{@{}l}{\emph{Per-path, for a single path $X^\ell$:}} & \multicolumn{2}{l}{\emph{Per-sample, aggregated over $\ell = 1,\ldots,L$:}} \\
\cmidrule(r){1-2}\cmidrule(l){3-4}
Statistic & Meaning & Statistic & Meaning \\
\midrule
$B_{i}^\ell = \mathbf{1}\{I_0^\ell = s_i\}$ & Starts in phase $s_i$. & $B_{i} = \sum_{k} B_{ik}$ & Paths starting in $s_i$. \\
$B_{ik}^\ell = B_i^\ell J_k^\ell$ & Starts in $s_i$, absorbed in $s_k^0$. & $B_{ik} = \sum_{\ell} B_{ik}^\ell$ & Paths starting in $s_i$, absorbed in $s_k^0$. \\
$J_k^\ell = \mathbf{1}\{k_0^\ell = k\}$ & Absorbed in $s_k^0$. & & \\
$E_{ik}^\ell = \mathbf{1}\{I_{n_0-1}^\ell = s_i, I_{n_0}^\ell = s_k^0\}$ & Last transient phase $s_i$, absorbed in $s_k^0$. & $E_{ik} = \sum_{\ell} E_{ik}^\ell$ & Last transient phase $s_i$, absorbed in $s_k^0$. \\
$M_{ij}^\ell = \sum_{n\ge1} \mathbf{1}\{I_{n-1}^\ell = s_i, I_n^\ell = s_j\}$ & Count of $s_i\!\to\!s_j$ transitions ($M_{ii}^\ell=0$). & $M_{ijk} = \sum_{\ell} M_{ij}^\ell J_k^\ell$ & $s_i\!\to\!s_j$ transitions over paths absorbed in $s_k^0$. \\
 & & $M_{ij} = \sum_{\ell} M_{ij}^\ell = \sum_k M_{ijk}$ & $s_i\!\to\!s_j$ transitions over all paths. \\
$N_{i}^\ell = \sum_{k} E_{ik}^\ell + \sum_{j} M_{ij}^\ell$ & Transitions leaving $s_i$. & $N_{i} = \sum_{\ell} N_i^\ell$ & Transitions leaving $s_i$. \\
 & & $N_{ik} = \sum_{\ell} N_i^\ell J_k^\ell$ & Transitions leaving $s_i$ over paths absorbed in $s_k^0$. \\
$Z_i^\ell = \int_0^\infty \mathbf{1}\{X^\ell(t) = s_i\}\,dt$ & Time spent in $s_i$. & $Z_i = \sum_{\ell} Z_i^\ell$ & Time spent in $s_i$. \\
\bottomrule
\end{tabular}
\end{table}

For a fixed path $X^\ell$, exactly one of the indicators $\{B_i^\ell\}_i$ equals one and exactly one of $\{J_k^\ell\}_k$ equals one, hence exactly one $B_{ik}^\ell$ equals one. \subchg{The event $\{J_k^\ell = 1\}$ is precisely the event that the path is of type $X^\ell_k$, so the two carry the same information and we write $f(X^\ell_k\,;\theta)$ for the density of a path absorbed in $s^0_k$; the indicator $J_k^\ell$ is retained only where it selects a term or aggregates a statistic by cause.} Finally, the per-sample counts satisfy
\begin{equation}
\label{eq:M-N-E-relationship}
\sum_{j=1}^m M_{ijk} = N_{ik} - E_{ik}.
\end{equation}
\subchg{Per path there is also a flow balance at each phase: every visit to $s_i$ is entered once and
left once, the entries being either the start of the path or a jump from another transient phase, and
the exits being either a jump to another transient phase or the final jump to absorption. Hence}
\begin{equation}
\label{eq:flow-balance}
B_i^\ell + \sum_{j=1}^m M_{ji}^\ell
\;=\;
N_i^\ell
\;=\;
\sum_{j=1}^m M_{ij}^\ell + \sum_{k=1}^n E_{ik}^\ell,
\qquad i=1,\ldots,m,
\end{equation}
\subchg{where at most one $E^\ell_{ik}$ is non-zero, namely at the last transient phase of the path
and its own cause. Multiplying the right hand equality by $J^\ell_k$ and summing over $\ell$ returns
\eqref{eq:M-N-E-relationship}.}

Note that most of the above statistics are quite standard in the Markov jump process estimation literature; see for example \citet{Albert}. However, the additional statistics that depend on the eventual absorbing state, $s_k^0$, are less common and are novel for our approach.

\subchg{With these statistics defined, \Cref{prop:path-likelihood} gives the likelihood of a fully
observed path, and of the sample, working directly with the embedded jump chain $I_0,I_1,\ldots$ and
the jump matrix $P^\lambda$.}

\begin{subrevision}
\begin{proposition}
\label{prop:path-likelihood}
Let $\theta=(\alpha,q,P^\lambda,P^\mu)$ and let the statistics be those of \Cref{tab:stats}, with the
convention $0\log 0=0$ and an empty product equal to $1$. Then the following hold.

\noindent \textup{(i)} \emph{A path.} For a fully observed path $X^\ell$,
\begin{equation}
\label{eq:path-likelihood-general}
f(X^\ell~;~\theta)
=
\Big( \prod_{i=1}^m \alpha_i^{B^\ell_i} \Big)
\Big( \prod_{i=1}^m q_i^{N^\ell_i} e^{-q_i Z^\ell_i} \Big)
\Big( \prod_{i=1}^m \prod_{j=1}^m (p^\lambda_{ij})^{M^\ell_{ij}} \Big)
\Big( \prod_{i=1}^m \prod_{k=1}^n (p^\mu_{ik})^{E^\ell_{ik}} \Big).
\end{equation}

\noindent \textup{(ii)} \emph{The aggregated observations.} For the sample
$\X=\{X^\ell,~\ell=1,\ldots,L\}$ of independent paths, with
${\cal L}(\theta)=\prod_{\ell=1}^L f(X^\ell~;~\theta)$ and the per-sample statistics of
\Cref{tab:stats} obtained by aggregating their per-path counterparts,
\begin{equation}
\label{eq:complete-log-like-jump}
\log{\cal L}(\theta)
=
\sum_{i=1}^m B_i \log \alpha_i
+ \sum_{i=1}^m \big( N_i \log q_i - q_i Z_i \big)
+ \sum_{i=1}^m \sum_{j=1}^m M_{ij} \log p^\lambda_{ij}
+ \sum_{i=1}^m \sum_{k=1}^n E_{ik} \log p^\mu_{ik}.
\end{equation}
\end{proposition}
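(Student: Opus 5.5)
We prove (i) by decomposing the path density according to the structure of a Markov jump process: the embedded jump chain $I_0,I_1,\ldots,I_{n_0}$ together with the holding times $\sigma_{n}-\sigma_{n-1}$. The standard construction (e.g.\ \citet{norris97}) says that, given the jump chain, the holding times are independent, and the holding time in state $s_{i_{n}}$ is exponential with rate $q_{i_n}$. For $n<n_0$, the next state $I_{n+1}$ is drawn from row $i_n$ of $[\,P^\lambda\mid P^\mu\,]$, independently of the holding time.

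The density of the path $(i_0,\sigma_1,i_1,\ldots,\sigma_{n_0},k_0)$ with respect to counting measure on states times Lebesgue measure on the $n_0$ holding times is therefore
\[
\alpha_{i_0}\prod_{n=0}^{n_0-1} q_{i_n}e^{-q_{i_n}(\sigma_{n+1}-\sigma_n)}\;\prod_{n=1}^{n_0-1}p^\lambda_{i_{n-1}i_n}\;\cdot\;p^\mu_{i_{n_0-1}k_0}.
\]
Regrouping by state gives (i). The factor $\alpha_{i_0}$ equals $\prod_i\alpha_i^{B_i}$. Each factor $q_{i_n}$ corresponds to one jump out of $s_{i_n}$, and since $N_i$ counts both transient and absorbing exits, the rates collect into $\prod_i q_i^{N_i}$. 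The exponents add up to $\sum_i q_i Z_i$, because $Z_i$ is the total sojourn time in $s_i$. The transient-to-transient jump factors collect into $\prod_{ij}(p^\lambda_{ij})^{M_{ij}}$. The single absorbing jump gives $\prod_{ik}(p^\mu_{ik})^{E_{ik}}$, since exactly one $E_{ik}$ is nonzero.

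The step needing most care is making this equivalence precise. Equivalently, one uses $q_ip^\lambda_{ij}=\lambda_{ij}$ and $q_ip^\mu_{ik}=\mu_{ik}$ to match the classical generator form $\prod\lambda_{ij}^{M_{ij}}\prod\mu_{ik}^{E_{ik}}e^{-\sum q_iZ_i}$ of \citet{Albert}. One also checks the degenerate conventions: if some $p^\lambda_{ij}=0$ while $M_{ij}=0$, the factor is $0^0=1$, consistent with the empty-product convention.

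Part (ii) then follows by independence. We take the product over $\ell$ and then the logarithm. Each per-path exponent is linear in the statistics, so summing over $\ell$ replaces every per-path statistic by its aggregate from \Cref{tab:stats}, e.g.\ $\sum_\ell B^\ell_i=B_i$ and $\sum_\ell M^\ell_{ij}=M_{ij}$. The convention $0\log0=0$ handles vanishing parameters with vanishing counts. Finally, we note that $B_i=\sum_kB_{ik}$ agrees with $\sum_\ell B_i^\ell$, because exactly one $J^\ell_k$ equals one for each path.
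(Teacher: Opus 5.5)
Your proposal is correct and follows essentially the same route as the paper's proof. Like the paper, you factor the path density into $\alpha_{i_0}$, the $n_0$ exponential sojourn densities, the $n_0-1$ transient jump probabilities and the single exit probability $p^\mu_{i_{n_0-1}k_0}$, regroup by phase and transition type (using that exactly one $B^\ell_i$ and at most one $E^\ell_{ik}$ are nonzero), and then obtain (ii) by independence, taking logarithms and aggregating the per-path statistics.
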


\begin{proof}
Given in \Cref{app:proofs-est}.
\end{proof}
\end{subrevision}

\subchg{Equation \eqref{eq:complete-log-like-jump} is the form used in the sequel, and its statistics
are aggregated over causes. The E-step of \Cref{sec:em-alg} below, however, produces statistics that are
resolved by cause: each record contributes conditional expectations given its own observed absorbing
state, so the natural bookkeeping there is $B_{ik}$, $M_{ijk}$ and $E_{ik}$. The next lemma records
that the two bookkeepings agree. It is an equivalence between two expressions for one likelihood, not
a second model: the cause-split statistics may be used provided the probabilities multiplying them
are the cause-conditional ones of \Cref{lem:cond-jump-probs}.}

\begin{subrevision}
\begin{lemma}
\label{lem:cause-specific}
Let $\theta$ and the statistics be as in \Cref{prop:path-likelihood}, and let $R$,
$p^\lambda_{ij|k}$ and $p^\mu_{ik|k}$ be as in \Cref{sec:second-param}. Let $X^\ell_k$ denote a path
absorbed in $s^0_k$; every transient phase such a path visits has $\rho_{ik}>0$, so the conditional
probabilities are well defined along it. Then the density \eqref{eq:path-likelihood-general} may equally be written
\begin{equation}
\label{eq:path-likelihood-cause}
f(X^\ell~;~\theta)
\;=\;
f(X^\ell_k~;~\theta)
\;=\;
\underbrace{\Big( \prod_{i=1}^m (\alpha_i \rho_{ik})^{B^\ell_i} \Big)}_{\text{start, and cause } s^0_k}
\underbrace{\Big( \prod_{i=1}^m q_i^{N^\ell_i} e^{-q_i Z^\ell_i} \Big)
\Big( \prod_{i=1}^m \prod_{j=1}^m (p^\lambda_{ij|k})^{M^\ell_{ij}} \Big)
\Big( \prod_{i=1}^m (p^\mu_{ik|k})^{E^\ell_{ik}} \Big)}_{\text{the path, given that cause}} ,
\end{equation}
in which the cause is drawn first, with probability $\prod_{i=1}^m \rho_{ik}^{B^\ell_i}$, the
absorption probability of the phase in which the path starts, and the path is then generated by the
conditional probabilities.

\end{lemma}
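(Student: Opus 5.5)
The plan is to start from the general path density \eqref{eq:path-likelihood-general} of \Cref{prop:path-likelihood}\,(i) and substitute, factor by factor, the unconditional jump probabilities in terms of the conditional ones, using \Cref{lem:cond-jump-probs}\,(i) and \eqref{eq:p-ik-mu-der}. The holding-time factor $\prod_i q_i^{N_i^\ell}e^{-q_iZ_i^\ell}$ is common to both expressions and is left untouched. The whole argument therefore reduces to a telescoping identity along the embedded jump chain $I_0,\ldots,I_{n_0}$.

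\textbf{Well-definedness.} First I check that every transient phase visited by a path absorbed in $s^0_k$ has $\rho_{ik}>0$. If $I_r=s_i$ with $r<n_0$, the strong Markov property at $\sigma_r$ shows that the observed continuation has positive probability. Hence $\rho_{ik}\ge$ (probability of that continuation) $>0$. Alternatively, I can argue by contraposition using \Cref{lem:cond-jump-probs}\,(iv): if $\rho_{ik}=0$, every successor $j$ with $p^\lambda_{ij}>0$ has $\rho_{jk}=0$, and $p^\mu_{ik}=0$, so absorption in $s^0_k$ is impossible. With this in hand, \eqref{eq:main-pijgivenk-equation} and \eqref{eq:p-ik-mu-der} apply at every step actually taken. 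For factors with exponent $0$, both sides equal $1$ under the empty-product convention, so the conventions for undefined slices never enter.

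\textbf{Telescoping.} Rather than working with the aggregated exponents, I would rewrite each product as a product over the jumps of the path, since $M_{ij}^\ell$ and $E_{ik}^\ell$ count exactly those jumps:
\[
\prod_{i,j}(p^\lambda_{ij|k})^{M^\ell_{ij}}\prod_i (p^\mu_{ik|k})^{E^\ell_{ik}}
=\prod_{r=1}^{n_0-1} p^\lambda_{i_{r-1}i_r}\frac{\rho_{i_r k}}{\rho_{i_{r-1}k}}\cdot \frac{p^\mu_{i_{n_0-1}k}}{\rho_{i_{n_0-1}k}} .
\]
The ratios of $\rho$'s telescope to $1/\rho_{i_0k}$. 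Multiplying by the start factor $\prod_i(\alpha_i\rho_{ik})^{B_i^\ell}=\alpha_{i_0}\rho_{i_0k}$ cancels this, and what remains is $\alpha_{i_0}\prod_r p^\lambda_{i_{r-1}i_r}\,p^\mu_{i_{n_0-1}k}$. Since $E^\ell_{ik'}=0$ for $k'\neq k$, this is exactly the unconditional product in \eqref{eq:path-likelihood-general}. The same collapse can be phrased via the flow balance \eqref{eq:flow-balance}: the exponent of each $\rho_{ik}$ is $\sum_j M_{ji}^\ell+B_i^\ell-\sum_j M_{ij}^\ell-E_{ik}^\ell=0$. I would present this version as a one-line check for readers who prefer the statistic form.

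\textbf{Interpretation and main obstacle.} The interpretive clause follows from the factorisation itself. $\rho_{i_0k}$ is the conditional probability of cause $k$ given the start phase, and the remaining factors form the path density under the conditioned (Doob $h$-transformed) jump chain. The holding rates $q_i$ are unchanged because conditioning on $I_\infty$ affects only the jump chain, which is independent of the holding times given the phases. The step I expect to need the most care is not the algebra but the zero pattern of $R$: making sure no ratio $\rho_{jk}/\rho_{ik}$ with a vanishing denominator ever appears with positive exponent. I will handle this with the reachability argument above before any substitution is made.
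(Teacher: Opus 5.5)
Your proposal is correct and follows the paper's proof. You first check that visited phases have $\rho_{ik}>0$ (up to a null set), then substitute $p^\lambda_{ij|k}=p^\lambda_{ij}\rho_{jk}/\rho_{ik}$ and $p^\mu_{ik|k}=p^\mu_{ik}/\rho_{ik}$, and finally show that the $\rho$-factors cancel against $\rho_{i_0k}$ in the start term. Your telescoping along the jump chain is the pathwise version of the paper's flow-balance exponent count: the paper argues exactly as in your one-line check, where the jump and exit groups give exponent $-B_i^\ell$ and the start factor supplies the matching $+B_i^\ell$.
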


\begin{proof}
Given in \Cref{app:proofs-est}.
\end{proof}
\end{subrevision}

\subsection{\subchg{The maximization problem}}
\label{sec:mle-exact}

\subchg{The complete-data log-likelihood is not observable: only $(\tau,\kappa)$ is seen, not the path. The
EM iteration replaces it by its conditional expectation given the data. Writing $\theta^{(\nu)}$ for
the current iterate and ${\cal D}$ for the observed sample, the M-step solves
\begin{equation}
\label{eq:em-objective}
\theta^{(\nu+1)} = \arg\max_{\theta \, \in \, \Theta} \; Q\big(\theta \mid \theta^{(\nu)}\big),
\qquad
Q\big(\theta \mid \theta^{(\nu)}\big)
= \ex{ \log {\cal L}(\theta) ~\big|~ {\cal D} },
\end{equation}
the conditional expectation being taken under $\theta^{(\nu)}$ and the maximization being subject to
the constraints that define a MAPH distribution in jump-chain coordinates,
\begin{equation}
\label{eq:theta-set}
\Theta = \left\{
(\alpha,q,P^\lambda,P^\mu)
~:~
\begin{aligned}
&\alpha \ge 0, \quad \textstyle\sum_{i=1}^m \alpha_i = 1, \quad q>0,\\
&P^\lambda \ge 0, \quad p^\lambda_{ii}=0, \quad P^\mu \ge 0,\\
&\textstyle\sum_{j=1}^m p^\lambda_{ij} + \sum_{k=1}^n p^\mu_{ik} = 1, \quad i=1,\ldots,m
\end{aligned}
\right\} :
\end{equation}
$\alpha$ is a probability vector, the exit rates are positive, and each row of
$[\,P^\lambda \mid P^\mu\,]$ is a probability vector, being the jump-chain row \eqref{eq:transition_matrix}
of the corresponding phase. By \eqref{eq:transition_prob} the triple $(q,P^\lambda,P^\mu)$ is $(T,D)$ in
jump-chain coordinates, through $T=\mathrm{Diag}(q)(P^\lambda-I)$ and $D=\mathrm{Diag}(q)P^\mu$, so
$\Theta$ is the parameterization of \Cref{sec:maph-dist} and carries no restriction beyond its
standing assumption. By
\Cref{prop:path-likelihood}\,\textup{(ii)} the complete-data log-likelihood is \emph{linear} in the
statistics, so
\[
Q\big(\theta \mid \theta^{(\nu)}\big)
=
\sum_{i=1}^m \ex{B_i \mid {\cal D}} \log \alpha_i
+ \sum_{i=1}^m \big( \ex{N_i \mid {\cal D}} \log q_i - q_i \ex{Z_i \mid {\cal D}} \big)
+ \sum_{i,j} \ex{M_{ij} \mid {\cal D}} \log p^\lambda_{ij}
+ \sum_{i,k} \ex{E_{ik} \mid {\cal D}} \log p^\mu_{ik},
\]
which is \eqref{eq:complete-log-like-jump} with every statistic replaced by its conditional
expectation. Maximizing $Q$ is therefore the same problem as maximizing the complete-data
log-likelihood itself, and it is enough to solve the latter once. That is the content of the next
proposition; the conditional expectations are supplied in closed form by \Cref{prop:e-step}
(\Cref{app:e-step}).}

\begin{subrevision}
\begin{proposition}
\label{prop:approx-mle}
Suppose every phase is visited, $N_i>0$ for $i=1,\ldots,m$, and $L>0$. Over the parameter set
$\Theta$ of \eqref{eq:theta-set}, the complete-data log-likelihood
\eqref{eq:complete-log-like-jump} is maximized uniquely at
\begin{equation}\label{eq:mle}
\hat{\alpha}_{i}=\frac{B_i}{L}; \qquad
\hat{q}_{i}=\frac{N_{i}}{Z_{i}}; \qquad
\hat{p}^{\lambda}_{ij}=\frac{M_{ij}}{N_{i}} \quad (i \neq j), \quad \hat p^\lambda_{ii}=0; \qquad
\hat{p}^{\mu}_{ik}=\frac{E_{ik}}{N_{i}}.
\end{equation}
The maximizer lies in $\Theta$ as it stands: the ratios are non-negative, and the row sums are one by
the flow balance \eqref{eq:flow-balance}. Two further points are worth recording.

\noindent \textup{(i)} \emph{Absorption is certain, and $R$ is available.} $I-\widehat P^\lambda$ is
non-singular, so the fitted generator
$\widehat T = \mathrm{Diag}(\hat q)(\widehat P^\lambda - I)$ and
$\widehat D = \mathrm{Diag}(\hat q)\,\widehat P^\mu$ have $-\widehat T$ a non-singular $M$-matrix:
$(\hat\alpha,\widehat T,\widehat D)$ is an $\text{MAPH}_{m,n}$ distribution satisfying the standing
assumption of \Cref{sec:maph-dist}. Its absorption matrix
\begin{equation}
\label{eq:R-from-mle}
\widehat R = \big(I-\widehat P^\lambda\big)^{-1}\widehat P^\mu = -\widehat T^{-1}\widehat D
\end{equation}
is then non-negative and row-stochastic and satisfies the bound \eqref{eq:R-bound}; it is the
competing-risks summary of the fit, reported alongside it.

\noindent \textup{(ii)} \emph{Zero counts and non-integer statistics.} With the convention
$0\log0=0$, a vanishing count $M_{ij}=0$ or $E_{ik}=0$ simply sets the corresponding probability to
zero, and the maximizer remains unique. Nothing in the statement requires the statistics to be
integers, so \eqref{eq:mle} applies verbatim when they are replaced by conditional expectations, as
in \eqref{eq:em-objective}; the hypothesis $N_i>0$ then reads $\ex{N_i \mid {\cal D}}>0$, and it
implies $Z_i>0$ since a visited phase is occupied for a positive length of time.
\end{proposition}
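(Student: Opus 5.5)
The proof exploits the fact that the log-likelihood \eqref{eq:complete-log-like-jump} separates over parameter blocks: $\alpha$ alone, each $q_i$ alone, and for each phase $i$ the row $(p^\lambda_{i\cdot},p^\mu_{i\cdot})$ alone. The constraints of $\Theta$ in \eqref{eq:theta-set} also factor this way. Each block can therefore be maximized separately.

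\textbf{The separate maximizations.}

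\emph{Initial vector.} The $\alpha$-block is a multinomial log-likelihood $\sum_i B_i\log\alpha_i$ on the simplex. Gibbs' inequality (or a Lagrange multiplier) gives $\hat\alpha_i=B_i/\sum_i B_i$. Since exactly one $B_i^\ell$ equals one per path, $\sum_i B_i=L$.

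\emph{Exit rates.} Each $q_i$-term $N_i\log q_i-q_iZ_i$ is strictly concave on $q_i>0$ when $N_i>0$ and $Z_i>0$. Its stationary point is $\hat q_i=N_i/Z_i$.

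\emph{Jump rows.} For row $i$ the objective is again multinomial in the $m-1+n$ off-diagonal entries, with weights $M_{ij}$ and $E_{ik}$. The maximizer is $\hat p=\text{count}/(\sum_j M_{ij}+\sum_k E_{ik})$, and the denominator equals $N_i$ by the flow balance \eqref{eq:flow-balance}.

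\emph{Uniqueness and zero counts.} Uniqueness in each block follows from strict concavity of $\sum c\log p$ in the coordinates with $c>0$. With $0\log0=0$, any coordinate with $c=0$ must vanish at the optimum, since putting mass there only lowers the others. This also settles (ii): the argument uses only non-negativity of the statistics, never integrality.

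\textbf{Part (i): absorption is certain.} This is the main obstacle, because non-singularity of $I-\widehat P^\lambda$ is not automatic for an arbitrary sub-stochastic matrix. The plan is to show that from every phase $i$ the fitted jump chain reaches a phase with positive one-step exit probability $\hat p^\mu$.

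For integer complete data this comes from the paths themselves. Every visited phase lies on some observed path, and that path ends with an absorbing jump recorded in some $E_{i'k}>0$. Following the transitions of that path, each with $M>0$ and hence $\hat p^\lambda>0$, gives a fitted-positive route to $i'$, where $\hat p^\mu_{i'k}>0$.

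For conditional expectations the same conclusion needs a further step. A positive expectation $\ex{M_{ij}\mid\mathcal D}$ means some path with positive conditional probability makes the $i\to j$ transition. Each such path terminates in absorption, so the same chain-following argument applies to the support of the conditional law.

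Given this accessibility, I would invoke the $M$-matrix fact recalled before \Cref{prop:maph-props}. Every transient state can reach a state with strictly negative row sum of $\widehat T$, so $-\widehat T$ is a non-singular $M$-matrix. Then $I-\widehat P^\lambda=\mathrm{Diag}(\hat q)^{-1}(-\widehat T)$ is invertible.

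\textbf{Closing step.} With $(\hat\alpha,\widehat T,\widehat D)$ now a genuine $\text{MAPH}_{m,n}$, I would apply \eqref{eq:r-matrix} to obtain \eqref{eq:R-from-mle}, non-negativity and row-stochasticity of $\widehat R$. \Cref{lem:feasible-jump}(ii) then gives the bound \eqref{eq:R-bound}.
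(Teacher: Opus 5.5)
Your proposal is correct and follows the paper's proof essentially step for step. The objective \eqref{eq:complete-log-like-jump} is split into the $\alpha$-block, the individual $q_i$ and the rows of $[\,P^\lambda \mid P^\mu\,]$, and each is solved by a multiplier (or Gibbs) argument normalized through $\sum_i B_i=L$ and the flow balance, with the same zero-count reasoning for uniqueness; non-singularity of $I-\widehat P^\lambda$ comes from reachability of a recorded absorption from every visited phase, and the proof closes with \Cref{lem:feasible-jump}\,(ii). The differences are minor: you phrase non-singularity through the $M$-matrix criterion on $\widehat T$ rather than transience of $\widehat P^\lambda$, and you make explicit, via the support of the conditional law, why part (i) survives the passage to expected statistics. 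That point is left implicit in the paper's proof and is genuinely needed, since flow balance alone does not rule out a closed circulation with no absorption.
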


\begin{proof}
Given in \Cref{app:proofs-est}.
\end{proof}

\begin{corollary}
\label{cor:monotone}
Let $\theta^{(\nu+1)}$ be obtained from $\theta^{(\nu)}$ by evaluating \eqref{eq:mle} at the
conditional expectations of the statistics given ${\cal D}$ under $\theta^{(\nu)}$. Then
$\theta^{(\nu+1)}$ is the exact maximizer of $Q(\,\cdot \mid \theta^{(\nu)})$ in
\eqref{eq:em-objective}, and consequently the observed-data log-likelihood
$\ell_{\mathrm{obs}}(\theta) = \log p({\cal D} \mid \theta)$ does not decrease,
\[
\ell_{\mathrm{obs}}\big(\theta^{(\nu+1)}\big) \;\ge\; \ell_{\mathrm{obs}}\big(\theta^{(\nu)}\big),
\]
with equality only if $Q(\theta^{(\nu+1)} \mid \theta^{(\nu)}) = Q(\theta^{(\nu)} \mid \theta^{(\nu)})$.
For a sample of exact observations
$\ell_{\mathrm{obs}}(\theta)=\sum_{\ell=1}^L \log f(t^\ell,k^\ell~;~\theta)$; the statement does not
depend on this form, so it covers the censored records of \Cref{sec:em-censored} as well.
\end{corollary}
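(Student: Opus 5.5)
The plan is to derive \Cref{cor:monotone} from \Cref{prop:approx-mle} together with the standard EM ascent argument.

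\emph{Step 1: the update is the exact maximizer.} Under $\theta^{(\nu)}$, $Q(\cdot\mid\theta^{(\nu)})$ is \eqref{eq:complete-log-like-jump} with each statistic replaced by its conditional expectation, as displayed before \Cref{prop:approx-mle}. These expectations are non-negative reals. Per path, the flow balance \eqref{eq:flow-balance} is a linear identity, so it survives conditional expectation. By \Cref{prop:approx-mle}\,(ii), \eqref{eq:mle} evaluated at the expected statistics is then the unique maximizer of $Q(\cdot\mid\theta^{(\nu)})$ over $\Theta$, provided $\ex{N_i\mid{\cal D}}>0$. That positivity is the hypothesis inherited from the proposition; I will state it as carried along (or note that a phase with zero expected visits can be frozen without affecting $Q$).

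\emph{Step 2: the ascent inequality.} Write $p(\mathcal X\mid\theta)={\cal L}(\theta)$ for the complete-data density and $k(\mathcal X\mid{\cal D};\theta)=p(\mathcal X\mid\theta)/p({\cal D}\mid\theta)$ for the conditional density of the latent paths given the data. For a censored record, the latent object is the full path and ${\cal D}$ is a measurable function of it, so the same factorization holds.

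Taking $\log$ and conditional expectation under $\theta^{(\nu)}$ gives
\[
\ell_{\mathrm{obs}}(\theta)=Q(\theta\mid\theta^{(\nu)})-H(\theta\mid\theta^{(\nu)}),\qquad H(\theta\mid\theta^{(\nu)})=\ex{\log k(\mathcal X\mid{\cal D};\theta)\mid{\cal D}}.
\]
By Gibbs' inequality (non-negativity of Kullback--Leibler divergence), $H(\theta\mid\theta^{(\nu)})\le H(\theta^{(\nu)}\mid\theta^{(\nu)})$. Therefore
\[
\ell_{\mathrm{obs}}(\theta^{(\nu+1)})-\ell_{\mathrm{obs}}(\theta^{(\nu)})\ge Q(\theta^{(\nu+1)}\mid\theta^{(\nu)})-Q(\theta^{(\nu)}\mid\theta^{(\nu)})\ge0,
\]
where the last inequality holds because $\theta^{(\nu+1)}$ maximizes $Q$ and $\theta^{(\nu)}\in\Theta$.

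\emph{Step 3: the equality case.} If $\ell_{\mathrm{obs}}$ does not increase, both inequalities in the chain are equalities. In particular $Q(\theta^{(\nu+1)}\mid\theta^{(\nu)})=Q(\theta^{(\nu)}\mid\theta^{(\nu)})$, which is the stated condition. By uniqueness of the maximizer, this further forces $\theta^{(\nu+1)}=\theta^{(\nu)}$.

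\emph{Step 4: exact observations versus censoring.} For a sample of exact observations, $p({\cal D}\mid\theta)=\prod_\ell f(t^\ell,k^\ell;\theta)$ by \Cref{prop:maph-props}\,(ii). The argument in Step 2 never uses this form, only the fact that ${\cal D}$ is a function of the complete data. This is why the statement also covers censored records.

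\emph{Main obstacle.} The delicate point is measure-theoretic: the conditional path law must have a density relative to a $\theta$-free dominating measure, so that Gibbs' inequality applies. The paths live in a space of piecewise-constant trajectories, and I will use the dominating measure induced by a reference MJP with all admissible transitions. Zero probabilities in $\theta^{(\nu+1)}$ that are not zero in $\theta^{(\nu)}$ are handled by the $0\log0$ convention. In that case the KL divergence may be $+\infty$, but that only strengthens the inequality. One caveat remains: finiteness of $\ell_{\mathrm{obs}}(\theta^{(\nu)})$ must be assumed.
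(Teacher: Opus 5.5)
Your proposal is correct and is essentially the paper's proof. The first claim is \Cref{prop:approx-mle}\,(ii) applied to the conditionally expected statistics, using the linearity of \eqref{eq:complete-log-like-jump}; the ascent is the standard decomposition $\ell_{\mathrm{obs}}(\theta)=Q(\theta\mid\theta^{(\nu)})-H(\theta\mid\theta^{(\nu)})$, with $H(\cdot\mid\theta^{(\nu)})$ maximized at $\theta^{(\nu)}$ by Jensen's (your Gibbs') inequality. Your equality-case argument, the carried hypothesis $\ex{N_i\mid{\cal D}}>0$ and the dominating-measure discussion make explicit what the paper leaves implicit, with one correction: an infinite divergence cannot occur at $\theta^{(\nu+1)}$, because \eqref{eq:mle} sets a probability to zero only where its expected count under $\theta^{(\nu)}$ vanishes, so the paths it excludes are already null under the conditional law at $\theta^{(\nu)}$.
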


\begin{proof}
Given in \Cref{app:proofs-est}.
\end{proof}
\end{subrevision}

\subchg{\Cref{cor:monotone} is what the exactness of the M-step buys: the iteration never
decreases the observed-data log-likelihood, so where it comes to rest is governed by that same
log-likelihood rather than by a surrogate. No such statement was available for a relaxed M-step,
whose fixed points solve a different problem.}

\subsection{The EM algorithm}
\label{sec:em-alg}

\begin{rcrevision}
We now assemble the developments above into an EM algorithm that fits an $\text{MAPH}_{m,n}$ distribution from exact and independently right-censored competing-risks observations. \aachg{The complete-data formulation remains unchanged, and only the conditional expectations in the E-step are modified.}
\end{rcrevision}

\begin{rcrevision}
Let \aachg{$c^\ell$} denote the censoring time, and define \aachg{$y^\ell=\min(\tau^\ell,c^\ell)$} and \aachg{$\delta^\ell=\mathbf 1\{\tau^\ell\le c^\ell\}$}. The observed data are
\end{rcrevision}
\[
{\cal D}=\{(y^\ell,\delta^\ell,k^\ell):\ell=1,\ldots,L\},
\qquad y^\ell>0,\quad \delta^\ell\in\{0,1\},
\]
\begin{rcrevision}
where $k^\ell\in\{1,\ldots,n\}$ is observed only if $\delta^\ell=1$. The latent event pair $(\tau^\ell,\kappa^\ell)$ for each subject is distributed according to the MAPH law of \aachg{\Cref{sec:maph-dist}}, independently over $\ell$. We observe either the pair itself or the right-censoring event $\{\tau^\ell>c^\ell\}$. \aachg{Censoring times are independent of the MAPH process and we condition on them, assuming non-informative right-censoring, so the censoring-law factors contain no MAPH parameters and their distribution can be omitted when estimating $(\alpha,T,D)$.} The phase order $m\ge1$ is selected externally. Let $d_k=|\{\ell:\delta^\ell=1,k^\ell=k\}|$ and $d=\sum_k d_k$, and \subchg{assume $d_k>0$ for every modelled cause, so that each cause is represented in the data; the causes are not otherwise ordered or distinguished}.
\end{rcrevision}

\begin{aarevision}
The two kinds of record are treated separately below. \Cref{sec:em-exact} develops the algorithm for exact observations; \Cref{sec:em-censored} then states the modifications required by right-censored records. A sample in which no record is censored is handled entirely by \Cref{sec:em-exact}.
\end{aarevision}

\subsubsection{Exact observations}
\label{sec:em-exact}

\begin{aarevision}
Throughout this subsection every record is an exact observation, so $\delta^\ell=1$ and $(y^\ell,k^\ell)=(\tau^\ell,\kappa^\ell)$ for every $\ell$.
\end{aarevision}

\begin{rcrevision}
The E-step uses $(\alpha,T,D)$ to compute conditional expectations of the full-path sufficient statistics in \Cref{tab:stats}. \aachg{An exact-event contribution conditions on $\{\tau=y,\kappa=k\}$.}
\end{rcrevision}

For a single observation with absorption time $t$ and absorbing state $k$, write
\[
f(t,k) = \alpha e^{Tt}D_k
\]
for the density of $(\tau,\kappa)=(t,k)$ (item (ii) of \Cref{prop:maph-props}), and let $\mathbf C(t,k)$ be the $m\times m$ matrix with entries
\begin{equation}
\label{eq:estep-fC}
C_{ij}(t,k) = \int_0^t \big(\alpha e^{Tu}\big)_i\,\big(e^{T(t-u)}D_k\big)_j \, du .
\end{equation}
Absorption in cause $k$ has positive probability for every cause represented in the data, and
then $f(t,k)>0$ for every $t>0$, so the regular conditional distribution given
$\{\tau=t,\kappa=k\}$ is well defined and the conditional expectations below exist; this is
verified in \Cref{app:e-step-exact}.

\begin{proposition}
\label{prop:e-step}
For an $\text{MAPH}_{m,n}(\alpha,T,D)$ distribution, the conditional expectations given $\{\tau=t,\ \kappa=k\}$ of the per-path statistics of \Cref{tab:stats} are
\begin{equation}
\label{eq:cond-exp-estats}
\begin{aligned}
\ex{B_i \mid \tau=t,\ \kappa=k} &= \frac{\alpha_i\,\big(e^{Tt}D_k\big)_i}{f(t,k)}, &\qquad
\ex{E_{ik} \mid \tau=t,\ \kappa=k} &= \frac{\big(\alpha e^{Tt}\big)_i\,D_{ik}}{f(t,k)},\\[2pt]
\ex{Z_i \mid \tau=t,\ \kappa=k} &= \frac{C_{ii}(t,k)}{f(t,k)}, &\qquad
\ex{M_{ij} \mid \tau=t,\ \kappa=k} &= \frac{T_{ij}\,C_{ij}(t,k)}{f(t,k)}\quad(i\neq j),
\end{aligned}
\end{equation}
with $\ex{M_{ii} \mid \tau=t,\ \kappa=k}=0$ \subchg{and, since the cause is observed,
$\ex{E_{ik'} \mid \tau=t,\ \kappa=k}=0$ for every $k'\neq k$}. The conditional expected total number
of exits from $s_i$ is
\begin{equation}
\label{eq:e-step-N-total}
\ex{N_i \mid \tau=t,\ \kappa=k} = \sum_{j=1}^m \ex{M_{ij} \mid \tau=t,\ \kappa=k} + \ex{E_{ik} \mid \tau=t,\ \kappa=k}.
\end{equation}
\end{proposition}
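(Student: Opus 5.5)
The plan is to compute each conditional expectation as the ratio of a joint "density with statistic" to the marginal density $f(t,k)=\alpha e^{Tt}D_k$ from \Cref{prop:maph-props}\,(ii). Each ratio has the form
\[
\ex{S \mid \tau=t,\kappa=k} = \frac{\partial_t\,\ex{S\,\ind{\tau\le t,\kappa=k}}}{f(t,k)},
\]
that is, the numerator is the density in $t$ of the measure $\ex{S\,;\,\tau\in dt,\kappa=k}$. Positivity of $f(t,k)$ has already been recorded before the statement, so the only work is to identify these numerator densities by path decompositions of the Markov jump process $X$. Throughout I will use the Markov property at fixed times and at jump times, together with the transition semigroup $\Prob_i(X(u)=s_j)=(e^{Tu})_{ij}$ on $\cs$.

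\emph{Start and last-phase statistics.} For $B_i$, condition on $X(0)=s_i$, which has probability $\alpha_i$. From $s_i$, the density of absorption at time $t$ into $s^0_k$ is $(e^{Tt}D_k)_i$. Hence the numerator is $\alpha_i(e^{Tt}D_k)_i$. For $E_{ik}$, the event $\{\tau\in dt,\kappa=k,X(\tau-)=s_i\}$ requires the chain to be in $s_i$ just before $t$, with probability $(\alpha e^{Tt})_i$, and then to jump to $s^0_k$ in $[t,t+dt)$, with probability $D_{ik}\,dt$. This gives $(\alpha e^{Tt})_iD_{ik}$. For $k'\neq k$, the event $\{\kappa=k\}$ forces $E_{ik'}=0$ identically.

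\emph{Occupation time and jump counts.} Write $Z_i=\int_0^\tau\ind{X(u)=s_i}\,du$. By Fubini, $\ex{Z_i;\tau\in dt,\kappa=k}=\int_0^t \Prob(X(u)=s_i,\ \tau\in dt,\ \kappa=k)\,du$. By the Markov property at time $u$, the integrand density is $(\alpha e^{Tu})_i(e^{T(t-u)}D_k)_i$, which gives $C_{ii}(t,k)$. For $M_{ij}$, the counting process of $s_i\to s_j$ jumps has compensator $\int \lambda_{ij}\ind{X(u-)=s_i}\,du$. I will therefore write
\[
\ex{M_{ij};\tau\in dt,\kappa=k}=\int_0^t \Prob(X(u-)=s_i)\,\lambda_{ij}\,\Prob_j(\tau\in d(t-u),\kappa=k)\,du .
\]
The justification is the Lévy system / compensator formula applied to the predictable integrand $\ind{\text{jump }i\to j\text{ at }u}\cdot h(\text{future})$, or, more elementarily, a discretization of $[0,t]$ into small intervals with the strong Markov property at the jump. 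This yields $\lambda_{ij}C_{ij}(t,k)=T_{ij}C_{ij}(t,k)$. Since $M_{ii}\equiv0$, that case is trivial. The identity \eqref{eq:e-step-N-total} then follows by taking conditional expectations in the definition $N_i=\sum_j M_{ij}+\sum_{k'}E_{ik'}$ and using the vanishing of $E_{ik'}$ for $k'\neq k$.

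\emph{Main obstacle.} The step I expect to need the most care is making the conditioning on the null event $\{\tau=t\}$ rigorous. I would handle it by working with the joint measures $\ex{S;\tau\in A,\kappa=k}$ for Borel $A$ and showing that each has density equal to the stated numerator. The regular conditional expectation is then the ratio $f$-almost everywhere, and continuity in $t$ of both numerator and denominator lets one take the ratio as a version at every $t>0$. The $M_{ij}$ case is the delicate one, because of the sum over possibly many jumps. I would treat it by summing over the jump index $n$ and using the strong Markov property at $\sigma_n$, with Tonelli to interchange sum and integral. This reproduces the integral above, since $\sum_n\Prob(I_{n-1}=s_i,\sigma_{n-1}\le u<\sigma_n)=\Prob(X(u)=s_i)$.
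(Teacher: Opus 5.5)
Your proposal is correct and follows essentially the paper's route: each numerator is the joint density of the statistic with $\{\tau\in dt,\kappa=k\}$, factorized by the Markov property into a forward factor $(\alpha e^{Tu})_i$ and a backward factor $(e^{T(t-u)}D_k)_j$ and divided by $f(t,k)$, with \eqref{eq:e-step-N-total} read off the flow balance. The only variation is the $M_{ij}$ step. The paper discretizes time using $(e^{T\varepsilon})_{ij}=T_{ij}\varepsilon+o(\varepsilon)$ and dominated convergence, whereas you sum over the jump index with the strong Markov property at $\sigma_n$, which is equally valid. Note that it is this strong-Markov step, not the compensator formula alone, that handles the dependence on the future, because $h(\text{future})$ is not a predictable integrand.
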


\begin{proof}
Given in \Cref{app:e-step-exact}.
\end{proof}

\begin{rcrevision}
\aachg{For each record, package these expectations into the cause-resolved contributions}
\end{rcrevision}
\begin{equation}
\label{eq:estats-exact}
\begin{aligned}
b_{i}^\ell&=\ex{B_i \mid \tau=y^\ell,\ \kappa=k^\ell},
&z_i^\ell&=\ex{Z_i \mid \tau=y^\ell,\ \kappa=k^\ell},\\[4pt]
n_{i}^\ell&=\ex{N_i \mid \tau=y^\ell,\ \kappa=k^\ell},
&m_{ij}^\ell&=\ex{M_{ij} \mid \tau=y^\ell,\ \kappa=k^\ell},\\[4pt]
e_{ik}^\ell&=\ind{k^\ell=k}\,\ex{E_{ik} \mid \tau=y^\ell,\ \kappa=k^\ell}. &&
\end{aligned}
\end{equation}
\subchg{Evaluated at $(t,k)=(y^\ell,k^\ell)$ these are exactly the per-record contributions
\eqref{eq:estats-exact} of a record with $\delta^\ell=1$, which the M-step \eqref{eq:approx-m-step}
sums over $\ell$. No further resolution by cause is needed: the conditioning on $\kappa=k^\ell$ is
already in the expectations themselves, and the only statistic that retains a cause index in the
M-step is $E_{ik}$.}

\begin{rcrevision}
\subchg{Only the exit statistic carries an indicator. The others are needed by \eqref{eq:approx-m-step} only in their cause-aggregated form, and for an exact record the aggregation is trivial: the expectations are already conditional on the observed cause $k^\ell$, so summing over causes returns the single term at $k=k^\ell$.} Replacing the full-path statistics in \Cref{prop:approx-mle} by the sums of \aachg{\eqref{eq:estats-exact}}, the M-step is
\end{rcrevision}
\begin{equation}
\label{eq:approx-m-step}
\begin{aligned}
\hat\alpha_i &= \frac1L\sum_{\ell=1}^L b_i^\ell, &\qquad
\hat q_i &= \frac{\sum_{\ell=1}^L n_i^\ell}{\sum_{\ell=1}^L z_i^\ell}, \\[4pt]
\hat p^\lambda_{ij} &= \frac{\sum_{\ell=1}^L m_{ij}^\ell}{\sum_{\ell=1}^L n_{i}^\ell}\quad(i\neq j),\quad \hat p^\lambda_{ii}=0,
&\qquad
\hat p^\mu_{ik} &= \frac{\sum_{\ell=1}^L e_{ik}^\ell}{\sum_{\ell=1}^L n_{i}^\ell}.
\end{aligned}
\end{equation}
\begin{rcrevision}
\subchg{This gives $(\hat\alpha,\hat q,\widehat P^\lambda,\widehat P^\mu)$, and the absorption matrix, when it is wanted, follows from \eqref{eq:R-from-mle}. Every completed path enters $\widehat P^\lambda$ and $\widehat P^\mu$, whatever its cause. By \Cref{prop:approx-mle} the update is the exact maximizer of the expected complete-data log-likelihood \eqref{eq:em-objective}, and it lands in the parameter set automatically: the row sums are one by the flow balance \eqref{eq:flow-balance}, so no constraint has to be enforced afterwards. The only requirement is that each phase be visited, that is $\sum_\ell n^\ell_i>0$; no absorption probability has to be bounded away from zero. In implementation the denominators should nevertheless be checked before an update. The parameters are converted to $\text{MAPH}_{m,n}(\alpha,T,D)$ by $T=\mathrm{Diag}(\hat q)(\widehat P^\lambda - I)$ and $D=\mathrm{Diag}(\hat q)\widehat P^\mu$.}
\end{rcrevision}

\begin{rcrevision}
The algorithm is then an iteration of these two steps. It takes the data ${\cal D}$, with
$d_k>0$ for every cause, and a phase order $m$, and starts from a strictly positive
$\text{MAPH}_{m,n}(\alpha,T,D)$, for which we use the censoring-compatible simplified
initialization of \Cref{sec:init-simplified}. Each sweep computes the exact-event expectations of
\Cref{prop:e-step} and forms the per-record contributions \eqref{eq:estats-exact} (the E-step),
forms $(\hat\alpha,\hat q,\widehat P^\lambda,\widehat P^\mu)$ via \eqref{eq:approx-m-step}
(the M-step), and converts back to $\text{MAPH}_{m,n}(\alpha,T,D)$ as above. Sweeps repeat until
the stopping rule of \Cref{sec:em-properties} is met, and the current
$\text{MAPH}_{m,n}(\alpha,T,D)$ is returned.
\end{rcrevision}

%% file: sections/censoring.tex
\section{Adaptations for Right Censoring}
\label{sec:censoring}
\label{sec:em-censored}

\begin{aarevision}
We now admit records with $\delta^\ell=0$. Everything in \Cref{sec:em-exact} carries over; this section states only what changes.
\end{aarevision}

\begin{rcrevision}
For the right-censored extension, ``full observation'' continues to mean the entire path through eventual absorption, not merely the path up to a censoring time. A censored record will therefore have its unobserved transient path, its post-censoring continuation, and its eventual absorbing state completed in the E-step. With this augmentation the sufficient statistics and complete-data likelihood derived above apply without alteration.
\end{rcrevision}

\begin{aarevision}
\Cref{prop:maph-props} describes what is observable when the pair $(\tau,\kappa)$ is seen in full. Under right censoring at a time $c$ the pair is not seen: all that is observed is the event $\{\tau>c\}$, and the eventual cause remains unknown. The quantities this leaves available are collected separately in the following proposition, which supplies the likelihood contribution of a censored record and the conditional law of what remains unobserved.

\begin{proposition}[Right-censored observation]
\label{prop:maph-censored}
Let $\zeta=(\tau,\kappa)$ be an $\text{MAPH}_{m,n}(\alpha,T,D)$ random pair and let $c>0$. Write
\[
S(c)=\Prob(\tau>c)=\alpha e^{Tc}\mathbf{1}_m = 1-F(c),
\]
which is strictly positive. Conditionally on the censoring event $\{\tau>c\}$ the following hold.
\begin{itemize}
    \item[(i)] The likelihood contribution of a record censored at $c$ is the survival probability $S(c)$.
    \item[(ii)] The conditional probability of eventual absorption through cause $k$ is
    $\Prob(\kappa = k \mid \tau > c) = \dfrac{-\alpha e^{Tc}T^{-1}D_k}{S(c)}$, and these probabilities sum to unity over $k=1,\ldots,n$.
    \item[(iii)] The conditional law of the residual pair is again multi-absorbing phase-type,
    \[
    (\tau-c,\;\kappa)\ \big|\ \{\tau>c\}\ \sim\ \text{MAPH}_{m,n}(\alpha_c,T,D),
    \qquad
    \alpha_c=\frac{\alpha e^{Tc}}{S(c)},
    \]
    with the same $T$ and $D$ and a restarted initial vector $\alpha_c$. In particular the conditional residual sub-density is $f(u,k\mid\tau>c)=\alpha_c e^{Tu}D_k$ for $u \ge 0$.
\end{itemize}
\end{proposition}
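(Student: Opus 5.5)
The plan is to derive all three items from the Markov property of $X$ at the deterministic time $c$, together with the formulas of \Cref{prop:maph-props}. First, positivity and item (i). By \Cref{prop:maph-props}(i), $S(c)=1-F(c)=\alpha e^{Tc}\mathbf{1}_m$. To see that this is strictly positive, I would write $e^{Tc}=e^{-\bar q c}e^{(T+\bar q I)c}$ with $\bar q=\max_i q_i$. The matrix $T+\bar q I$ is non-negative, so $e^{(T+\bar q I)c}\ge I$ entrywise. Hence $e^{Tc}\mathbf{1}_m\ge e^{-\bar q c}\mathbf{1}_m>0$, and since $\alpha$ is a probability vector, $S(c)>0$. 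Item (i) is then immediate. Under independent non-informative censoring, the observed information of a censored record is exactly the event $\{\tau>c\}$, so its $\theta$-dependent likelihood factor is $\Prob(\tau>c)=S(c)$.

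The core step is item (iii). On $\{\tau>c\}$ we have $X(c)\in\cs$, and $\Prob(X(c)=s_j,\tau>c)=(\alpha e^{Tc})_j$. This holds because the transient block of the transition semigroup of $Q$ is $e^{Tc}$, given the block-triangular form of $Q$. By the Markov property at time $c$, the shifted process $X(c+\cdot)$, conditioned on $X(c)=s_j$, is again the MJP with generator $Q$ started at $s_j$. Its absorption time is $\tau-c$ and its absorbing state is still $\kappa$. Conditioning on $\{\tau>c\}$, the law of $X(c)$ becomes $\alpha e^{Tc}/S(c)=\alpha_c$, which is a probability vector supported on $\cs$. The residual pair $(\tau-c,\kappa)$ is therefore the hitting time and hitting state of a chain with generator $Q$ and initial law $\alpha_c$, so it is $\text{MAPH}_{m,n}(\alpha_c,T,D)$. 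Applying \Cref{prop:maph-props}(ii) with $\alpha_c$ in place of $\alpha$ gives the residual sub-density $\alpha_c e^{Tu}D_k$. As a cross-check, $f(c+u,k)/S(c)=\alpha e^{T(c+u)}D_k/S(c)=\alpha_c e^{Tu}D_k$ by the semigroup property.

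Item (ii) then follows from (iii) and \Cref{prop:maph-props}(iii): $\Prob(\kappa=k\mid\tau>c)=-\alpha_c T^{-1}D_k=-\alpha e^{Tc}T^{-1}D_k/S(c)$. These sum to one because $R=-T^{-1}D$ is row-stochastic, so that
\[
-\alpha e^{Tc}T^{-1}D\mathbf{1}_n=\alpha e^{Tc}\mathbf{1}_m=S(c).
\]

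The only delicate point I expect is making the Markov-property step rigorous, that is, identifying the joint law of $X(c)$ and $\{\tau>c\}$ through the block $e^{Tc}$ of $e^{Qc}$. I would handle this by noting that the upper-left block of $e^{Qc}$ equals $e^{Tc}$, because the lower-left block of $Q$ vanishes. After that, the strong Markov property is not needed, since $c$ is deterministic and the ordinary Markov property suffices.
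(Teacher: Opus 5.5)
Your proposal is correct and follows the paper's proof essentially step for step. Item (i) comes from the survival formula of \Cref{prop:maph-props}(i). Item (iii) comes from the Markov property at the deterministic time $c$, with restarted vector $\alpha_c=\alpha e^{Tc}/S(c)$. Item (ii) then follows by applying \Cref{prop:maph-props}(iii) to that residual law and using $-T^{-1}D\mathbf{1}_n=\mathbf{1}_m$. Your bound $e^{Tc}\ge e^{-\bar q c}I$ is a cleaner, explicit version of the paper's brief argument that $S(c)>0$; it is the positive-diagonal fact the paper later invokes in the censored E-step proof.
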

\begin{proof}
The proof is given in \Cref{app:proofs-cens}.
\end{proof}
Item (i) is the censored counterpart of the exact-event sub-density of item (ii) of \Cref{prop:maph-props}: an exact record contributes $\alpha e^{T\tau}D_\kappa$, a censored record contributes $\alpha e^{Tc}\mathbf{1}_m$. Item~(iii) is the Markov property of the underlying MJP read at the censoring time: conditioning on survival past $c$ leaves the dynamics unchanged and only re-initializes the phase distribution to $\alpha_c$, the conditional distribution of the phase occupied at time $c$. Items (ii) and (iii) are what the E-step below exploits when it completes a censored path through its eventual absorption.
\end{aarevision}

\begin{aarevision}
Equation~\eqref{eq:complete-log-like-jump} is the complete-data log-likelihood, evaluated on paths completed through eventual absorption, so its algebraic form is unchanged by censoring. What changes is the observed-data contribution: by \Cref{prop:maph-censored}(i) a record censored at $c$ contributes the survival probability $S(c)=\alpha e^{Tc}\mathbf 1_m$ rather than the exact-event density $\alpha e^{Tc}D_k$. Censoring therefore enters through the E-step expectations, not as a separate term in \eqref{eq:complete-log-like-jump}, and the M-step keeps its unchanged form.
\end{aarevision}

\begin{aarevision}
\paragraph{The E-step.} A censored contribution conditions on $\{\tau>y\}$ instead of $\{\tau=y,\kappa=k\}$, with the E-step completing the latent path through its eventual absorption. \end{aarevision}

Write
\begin{equation}
\label{eq:censored-definitions}
S(c)=\Prob(\tau>c)=\alpha e^{Tc}\mathbf 1_m .
\end{equation}
Define the row vectors
\begin{equation}
\label{eq:censored-ag}
a(c)=\alpha e^{Tc},
\qquad
g(c)=a(c)(-T)^{-1},
\end{equation}
and, for any non-negative column vector $b\in\mathbb R^m$, define the $m\times m$ matrix
\begin{equation}
\label{eq:censored-Cb}
C_{ij}(c;b)=\int_0^c
       \big(\alpha e^{Tu}\big)_i
       \big(e^{T(c-u)}b\big)_j\,du.
\end{equation}
The vector $g(c)$ records the unnormalised expected occupation times after $c$, while $C(c;b)$ accounts for the part of the path before $c$ with terminal weight $b$.

\begin{proposition}[Right-censored E-step]
\label{prop:e-step-censored}
For an $\text{MAPH}_{m,n}(\alpha,T,D)$ distribution and a record censored at $c>0$, the conditional expectations of the complete-path sufficient statistics, given $\{\tau>c\}$, are
\begin{equation}
\label{eq:censored-estats}
\begin{aligned}
\ex{B_{i} \mid \tau>c}
  &=\frac{\alpha_i\big(e^{Tc}\mathbf 1_m\big)_i}{S(c)},
&\qquad
\ex{Z_i \mid \tau>c}
  &=\frac{C_{ii}(c;\mathbf 1_m)+g_i(c)}{S(c)},\\[3pt]
\ex{M_{ij} \mid \tau>c}
  &=\frac{T_{ij}\left[C_{ij}(c;\mathbf 1_m)+g_i(c)\right]}{S(c)},\quad i\ne j,
&
\ex{E_{ik} \mid \tau>c}
  &=\frac{g_i(c)\,D_{ik}}{S(c)},
\end{aligned}
\end{equation}
with $\ex{M_{ii} \mid \tau>c}=0$, and the expected number of exits from $s_i$ is
\begin{equation}
\label{eq:censored-N}
\ex{N_i \mid \tau>c}=\sum_{j=1}^m\ex{M_{ij} \mid \tau>c}+\sum_{k=1}^n\ex{E_{ik} \mid \tau>c}.
\end{equation}
\end{proposition}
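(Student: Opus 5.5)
The plan is to split every complete-path statistic of \Cref{tab:stats} into a part accrued on $[0,c]$ and a part accrued on $(c,\infty)$. The expectations are computed unnormalised, as $\ex{\,\cdot\,\ind{\tau>c}}$, and divided by $S(c)$ at the end; $S(c)>0$ by \Cref{prop:maph-censored}.

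\emph{Pre-censoring part.} The initial-state statistic follows from the Markov property at time $0$:
\[
\ex{B_i\ind{\tau>c}}=\alpha_i\,\Prob(\tau>c\mid X(0)=s_i)=\alpha_i\big(e^{Tc}\mathbf 1_m\big)_i .
\]
For the occupation time in $s_i$ on $[0,c]$, write $\int_0^c \ind{X(u)=s_i}\,du$. Taking expectations against $\ind{\tau>c}$ and using Fubini and the Markov property at each fixed $u$ gives $\int_0^c (\alpha e^{Tu})_i\,(e^{T(c-u)}\mathbf 1_m)_i\,du=C_{ii}(c;\mathbf 1_m)$.

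For the transitions $s_i\to s_j$ occurring in $[0,c]$, I would use the compensator of the counting process of $i\to j$ jumps, which is $\lambda_{ij}\int\ind{X(u-)=s_i}\,du$. The indicator $\ind{\tau>c}$ depends on the path after each jump, so I would apply the strong Markov property at the jump time: a jump into $s_j$ at time $u$ is followed by survival to $c$ with probability $(e^{T(c-u)}\mathbf 1_m)_j$. This gives
\[
\lambda_{ij}\int_0^c(\alpha e^{Tu})_i\,(e^{T(c-u)}\mathbf 1_m)_j\,du=T_{ij}\,C_{ij}(c;\mathbf 1_m).
\]
Equivalently, one can discretise time, count jumps in $[u,u+h)$, and let $h\to0$. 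This mirrors the proof of \Cref{prop:e-step}, with the terminal weight $D_k$ replaced by $\mathbf 1_m$. No absorption can occur before $c$ on $\{\tau>c\}$, so $E_{ik}$ has no pre-censoring part.

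\emph{Post-censoring part.} By \Cref{prop:maph-censored}(iii), conditionally on $\{\tau>c\}$ the shifted process $X(c+\cdot)$ is an MJP with the same $(T,D)$ and initial law $\alpha_c$. Its expected occupation time in $s_i$ is
\[
\int_0^\infty(\alpha_c e^{Tu})_i\,du=\big(\alpha_c(-T)^{-1}\big)_i=g_i(c)/S(c),
\]
where the integral converges because $e^{Tu}\to0$ and $-T$ is a non-singular $M$-matrix. The compensator argument, now applied on $(c,\infty)$ with no future conditioning, shows that the expected numbers of $i\to j$ jumps and of $i\to s^0_k$ jumps equal the rates $\lambda_{ij}=T_{ij}$ and $D_{ik}$ times this occupation time. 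These give the $g_i(c)$ terms in $Z_i$ and $M_{ij}$, and the entire $E_{ik}$ expectation.

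\emph{Assembly.} Adding the two parts and dividing by $S(c)$ yields \eqref{eq:censored-estats}. Since $M_{ii}\equiv0$, its expectation vanishes. Formula \eqref{eq:censored-N} follows from the pathwise flow balance \eqref{eq:flow-balance} and linearity of expectation; $N_i$ is finite in mean because $T$ is stable.

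The main obstacle is making the pre-$c$ jump count rigorous while conditioning on an event that lies after the jump. I would first try the Lévy-system/compensator identity combined with the strong Markov property at jump epochs. If that becomes awkward, I would fall back on the discretisation argument, bounding the probability of two jumps in an interval of length $h$ by $O(h^2)$ and passing to the limit by dominated convergence.
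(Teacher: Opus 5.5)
Your proof is correct, but it takes a different route from the paper. You split each statistic pathwise at $c$ and handle the two halves separately: the Markov property and a compensator argument with terminal weight $\mathbf 1_m$ for $[0,c]$, and the restart of \Cref{prop:maph-censored}(iii) with the rate-times-occupation identity for $(c,\infty)$.

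The paper never works pathwise at $c$. It applies the tower property over the unobserved pair,
\[
\ex{H\mid\tau>c}=\frac{1}{S(c)}\sum_{k=1}^n\int_c^\infty \ex{H\mid\tau=t,\kappa=k}\,f(t,k)\,dt .
\]
It then substitutes the exact-event expectations of \Cref{prop:e-step}, cancels $f(t,k)$, and collapses the sum over causes with $\sum_k D_k=-T\mathbf 1_m$. What remains is elementary integration. For $Z_i$ and $M_{ij}$, a Tonelli split of $\{c\le t<\infty,\ 0\le u\le t\}$ into $\{u<c\}$ and $\{u\ge c\}$ yields $C_{ij}(c;\mathbf 1_m)$ and $g_i(c)$. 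That split is the analytic counterpart of your pre/post decomposition.

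The paper's route is economical: the delicate jump-count limit is proved once, in the exact-event E-step, and simply inherited. It also shows the censored E-step explicitly as the exact-event E-step averaged over the tail of $(\tau,\kappa)$.

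Your route does not depend on \Cref{prop:e-step}. You never condition on the null events $\{\tau=t,\kappa=k\}$, so you need neither $f(t,k)>0$ nor an interpretation of the cancellation for causes of probability zero. The pathwise meaning of the two summands, which the paper only notes afterwards, is built into your argument. The price is redoing the jump-count argument with a conditioning event that lies after the jump; your strong-Markov/compensator step, or the discretization fallback, handles this correctly.

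One bookkeeping point needs fixing. You compute the pre-$c$ parts unnormalized but the post-$c$ parts as conditional expectations via $\alpha_c$. Multiply the latter by $S(c)$ before adding, which gives $g_i(c)$, $T_{ij}g_i(c)$ and $D_{ik}g_i(c)$, and only then divide the total by $S(c)$.
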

\begin{proof}
Given in \Cref{app:e-step-censored}.
\end{proof}

\begin{aarevision}
These replace \eqref{eq:estats-exact} by
\end{aarevision}
\begin{equation}
\label{eq:estats-cens}
\begin{aligned}
b_{i}^\ell&=\ex{B_i \mid \tau>y^\ell},
&z_i^\ell&=\ex{Z_i \mid \tau>y^\ell},\\[4pt]
n_{i}^\ell&=\ex{N_i \mid \tau>y^\ell},
&m_{ij}^\ell&=\ex{M_{ij} \mid \tau>y^\ell},\\[4pt]
e_{ik}^\ell&=\ex{E_{ik} \mid \tau>y^\ell}. &&
\end{aligned}
\end{equation}
\begin{aarevision}
Where an exact record carries an indicator that assigns its completed path to the observed cause, a censored record has no observed cause, and the current model instead spreads its contribution over every possible eventual cause in the proportions given by \Cref{prop:maph-censored}(ii).

\paragraph{The M-step.} The M-step \eqref{eq:approx-m-step} is unchanged. It is applied to the sums over the whole sample of the per-record contributions, taking \eqref{eq:estats-exact} for a record with $\delta^\ell=1$ and \eqref{eq:estats-cens} for one with $\delta^\ell=0$. \subchg{All completed paths enter $\widehat P^\lambda$ regardless of cause, and a censored subject contributes to it through its posterior distribution over the eventual cause rather than through the posterior probability of one distinguished cause.}

\paragraph{The algorithm.} The M-step and the conversion are unchanged, and the initialization differs only in that it uses the censoring-compatible simplified initialization of \Cref{sec:init-simplified} as soon as any record is censored. The E-step acquires a second branch: if $\delta^\ell=1$, compute the exact-event expectations of \Cref{prop:e-step} and form \eqref{eq:estats-exact}; if $\delta^\ell=0$, compute the cause-resolved right-censored expectations of \Cref{prop:e-step-censored} and form \eqref{eq:estats-cens}.
\end{aarevision}

\begin{subrevision}
\subsection{Properties of the iteration}
\label{sec:em-properties}

Two properties of the algorithm follow from \Cref{prop:approx-mle} and are worth stating, since they
are what distinguishes the present iteration from one whose M-step maximizes a surrogate.

First, no feasibility step is required. The M-step returns the exact maximizer of the expected
complete-data log-likelihood over $\Theta$, and that maximizer lies in $\Theta$ by construction: the
updated rows of $[\,\widehat P^\lambda \mid \widehat P^\mu\,]$ are probability vectors by the flow
balance \eqref{eq:flow-balance}, and $I-\widehat P^\lambda$ is non-singular, so the conversion of
the conversion always returns a valid $\text{MAPH}_{m,n}$ distribution. The absorption matrix
\eqref{eq:R-from-mle} of that distribution satisfies the bound \eqref{eq:R-bound} automatically.

Second, the iteration is a genuine EM iteration, so by \Cref{cor:monotone} the observed-data
log-likelihood is non-decreasing along it,
$\ell_{\mathrm{obs}}(\theta^{(\nu+1)}) \ge \ell_{\mathrm{obs}}(\theta^{(\nu)})$. A single-threshold
stopping rule on the change in $\ell_{\mathrm{obs}}$ is therefore sound: the sequence
$\{\ell_{\mathrm{obs}}(\theta^{(\nu)})\}$ is monotone and, being bounded above on any sample with
$L<\infty$, converges. We stop when its increment falls below $\epsilon$, or when a maximum number
of iterations is reached. Monotonicity of the likelihood does not by itself imply convergence of the
iterates $\theta^{(\nu)}$, on which we comment in \Cref{sec:conclusion}.
\end{subrevision}

%% file: sections/implementation.tex
\section{Implementation and application examples}
\label{sec:impl}

\subsection{Verification via Synthetic Data}
\label{sec:verification}

\begin{rcrevision}
We verify the exact-event implementation by fitting it to uncensored data simulated from known MAPH laws and checking that the recovered distribution matches the data-generating one. These particular experiments exercise only the exact-event path; the censored E-step is verified separately, \aachg{in the companion package's own test suite --- against direct numerical integration of the exact-event E-step over the tail $\{t>c\}$, against the structural identities of \Cref{prop:e-step-censored}, and against their $c\to0$ limit, in which they must reduce to the unconditional complete-path expectations --- and} end to end on the censored ICU analysis of \Cref{sec:realdata}. Because MAPH distributions are not identifiable (\Cref{sec:maph-dist}), the generator parameters $(\alpha,T,D)$ cannot themselves be the object of comparison; we therefore compare the fitted and true laws through parameterization-invariant summaries.
\end{rcrevision}

As ground truth we use the running $\text{MAPH}_{4,3}$ example of \Cref{ex:running} -- the sparse law of \Cref{fig:example-maph} with rates \eqref{eq:verif-truth}, whose three competing causes we read as a completely healthy discharge ($k=1$) and two discharge-requiring-further-care types ($k=2,3$). Its causes are well balanced ($\Prob(\kappa=k)=0.384,\,0.282,\,0.335$) and its conditional squared coefficients of variation straddle unity ($1.18$, $0.96$, $0.87$), so that the moment-based initialization classifies the causes into both its hyper-exponential and its hypo-exponential block families; the worked example of \Cref{sec:init-example} traces this construction in detail. For each sample size $L\in\{200,500,1000,2000,5000\}$ we draw $L$ i.i.d.\ pairs $(\tau,\kappa)$, fit a fully connected $\text{MAPH}_{4,3}$ from the moment-based initialization with the phase order $m=4$ assumed known, and average the results over $12$ independent replications. The fitted generator is not expected to reproduce the sparsity pattern of \eqref{eq:verif-truth} -- by non-identifiability it need not -- and the comparison is through the parameterization-free summaries.

\Cref{tab:verif} reports four parameterization-free recovery errors between the fitted law $\hat d$ and the truth $d$: the largest absolute error in the marginal absorption probabilities, $\max_k|\Prob_{\hat d}(\kappa=k)-\Prob_{d}(\kappa=k)|$; the largest relative error in the per-cause conditional mean $\ex{\tau\mid\kappa=k}$ and in the conditional squared coefficient of variation (SCV); and the integrated absolute error between the sub-distribution functions, $\sum_{k}\frac1{|G|}\sum_{u\in G}|F_{\hat d}(u,k)-F_{d}(u,k)|$ over a grid $G$ on $(0,6]$. The final column reports the per-observation log-likelihood gap to the truth, $\frac1L\big(\ell(\hat d)-\ell(d)\big)$, the identifiability-free check that the fitted model explains the data as well as the generating one. All recovery errors decrease with $L$ (up to Monte-Carlo fluctuation at $L=2000$); the log-likelihood gap is positive at small $L$ -- $1.3\times10^{-2}$ nats at $L=200$, where the fitted model adapts to sampling fluctuations -- and shrinks to within about $2\times10^{-3}$ nats of the truth by $L=5000$. \aachg{The conditional SCVs are the exception, and \Cref{fig:verif-scaling} extends the same experiment to $L=20000$ to see whether they eventually follow. They do not: the absorption probabilities, conditional means and sub-distribution IAE all decay at close to the parametric $L^{-1/2}$ rate, whereas the conditional SCV error is flat at about $0.17$ throughout. The diagnosis is in the last column of \Cref{tab:verif}: the log-likelihood gap turns negative for $L\ge2000$, so at these sample sizes the iteration stops short of the maximizer rather than at it, and it is the second-moment functionals --- the most tail-sensitive of the four summaries --- that pay for it first. The distribution functions themselves are unaffected: as \Cref{fig:verif-cdf} shows for a single $L=5000$ replication, they are recovered nearly exactly for all three causes.}

\begin{table}[ht]
\centering
\begin{tabular}{r r r r r r}
\toprule
$L$ & $\pi$-error & cond.\ mean (rel.) & cond.\ SCV (rel.) & sub-cdf IAE & log-lik.\ gap / obs. \\
\midrule
 200 & 0.034 & 0.192 & 0.240 & 0.067 & $+1.7\times10^{-2}$ \\
 500 & 0.023 & 0.122 & 0.103 & 0.044 & $+7.0\times10^{-3}$ \\
1000 & 0.013 & 0.084 & 0.103 & 0.024 & $+3.3\times10^{-3}$ \\
2000 & 0.016 & 0.085 & 0.086 & 0.028 & $+5.6\times10^{-4}$ \\
5000 & 0.009 & 0.073 & 0.070 & 0.017 & $-5.6\times10^{-4}$ \\
\bottomrule
\end{tabular}
\caption{Recovery of the sparse structured $\text{MAPH}_{4,3}$ ground truth \eqref{eq:verif-truth} of \Cref{fig:example-maph} as the sample size $L$ grows, averaged over $12$ replications, \subchg{with the moment-based initialization capped at $5000$ iterations. All errors are parameterization-free distributional summaries, and all of them decrease with $L$.}}
\label{tab:verif}
\end{table}

\begin{figure}[ht]
\centering
\begin{tikzpicture}
\begin{axis}[width=0.72\textwidth, height=0.42\textwidth,
  xlabel={$u$}, ylabel={$F(u,k)=\Prob(\tau\le u,\,\kappa=k)$},
  xmin=0, xmax=4, ymin=0, ymax=0.45, legend pos=south east,
  legend cell align=left, legend style={font=\scriptsize}, tick label style={font=\small}, label style={font=\small}]
\addplot[blue, thick] coordinates {(0.00,0.0000)(0.20,0.1278)(0.40,0.2093)(0.60,0.2626)(0.80,0.2983)(1.00,0.3228)(1.20,0.3398)(1.40,0.3520)(1.60,0.3606)(1.80,0.3669)(2.00,0.3715)(2.20,0.3748)(2.40,0.3772)(2.60,0.3790)(2.80,0.3803)(3.00,0.3812)(3.20,0.3819)(3.40,0.3824)(3.60,0.3828)(3.80,0.3831)(4.00,0.3833)};
\addlegendentry{$k=1$, true}
\addplot[blue, dashed, thick] coordinates {(0.00,0.0000)(0.20,0.1263)(0.40,0.2080)(0.60,0.2615)(0.80,0.2970)(1.00,0.3209)(1.20,0.3371)(1.40,0.3483)(1.60,0.3561)(1.80,0.3616)(2.00,0.3655)(2.20,0.3683)(2.40,0.3703)(2.60,0.3718)(2.80,0.3729)(3.00,0.3736)(3.20,0.3742)(3.40,0.3746)(3.60,0.3749)(3.80,0.3751)(4.00,0.3753)};
\addlegendentry{$k=1$, fitted}
\addplot[red, thick] coordinates {(0.00,0.0000)(0.20,0.0712)(0.40,0.1260)(0.60,0.1672)(0.80,0.1979)(1.00,0.2204)(1.20,0.2369)(1.40,0.2490)(1.60,0.2578)(1.80,0.2643)(2.00,0.2690)(2.20,0.2724)(2.40,0.2749)(2.60,0.2767)(2.80,0.2780)(3.00,0.2790)(3.20,0.2797)(3.40,0.2802)(3.60,0.2806)(3.80,0.2809)(4.00,0.2811)};
\addlegendentry{$k=2$, true}
\addplot[red, dashed, thick] coordinates {(0.00,0.0000)(0.20,0.0709)(0.40,0.1241)(0.60,0.1640)(0.80,0.1939)(1.00,0.2164)(1.20,0.2333)(1.40,0.2460)(1.60,0.2555)(1.80,0.2627)(2.00,0.2681)(2.20,0.2722)(2.40,0.2753)(2.60,0.2776)(2.80,0.2793)(3.00,0.2806)(3.20,0.2816)(3.40,0.2823)(3.60,0.2829)(3.80,0.2833)(4.00,0.2836)};
\addlegendentry{$k=2$, fitted}
\addplot[teal, thick] coordinates {(0.00,0.0000)(0.20,0.0725)(0.40,0.1316)(0.60,0.1792)(0.80,0.2168)(1.00,0.2460)(1.20,0.2684)(1.40,0.2854)(1.60,0.2982)(1.80,0.3077)(2.00,0.3148)(2.20,0.3201)(2.40,0.3239)(2.60,0.3267)(2.80,0.3288)(3.00,0.3304)(3.20,0.3315)(3.40,0.3323)(3.60,0.3329)(3.80,0.3333)(4.00,0.3337)};
\addlegendentry{$k=3$, true}
\addplot[teal, dashed, thick] coordinates {(0.00,0.0000)(0.20,0.0825)(0.40,0.1450)(0.60,0.1923)(0.80,0.2280)(1.00,0.2551)(1.20,0.2756)(1.40,0.2911)(1.60,0.3029)(1.80,0.3118)(2.00,0.3185)(2.20,0.3236)(2.40,0.3275)(2.60,0.3304)(2.80,0.3326)(3.00,0.3343)(3.20,0.3356)(3.40,0.3366)(3.60,0.3373)(3.80,0.3379)(4.00,0.3383)};
\addlegendentry{$k=3$, fitted}
\end{axis}
\end{tikzpicture}
\caption{True (solid) versus fitted (dashed) sub-distribution functions $F(u,k)$ for the three causes of the sparse $\text{MAPH}_{4,3}$ of \eqref{eq:verif-truth}, on a single sample of size $L=5000$. The fitted curves track the truth closely and their right limits recover the marginal absorption probabilities $\Prob(\kappa=k)$.}
\label{fig:verif-cdf}
\end{figure}

\begin{figure}[ht]
\centering
\begin{aarevision}
\definecolor{ourcol}{RGB}{0,114,178}
\definecolor{cmcol}{RGB}{0,158,115}
\definecolor{scvcol}{RGB}{213,94,0}
\definecolor{iaecol}{RGB}{120,94,180}
\begin{tikzpicture}
\begin{loglogaxis}[width=0.66\textwidth, height=0.44\textwidth,
  xlabel={sample size $L$}, ylabel={recovery error},
  xmin=150, xmax=28000, ymin=0.004, ymax=0.45,
  legend cell align=left, legend columns=2,
  legend style={at={(0.5,-0.30)}, anchor=north, draw=none, font=\scriptsize,
                /tikz/every even column/.append style={column sep=10pt}},
  tick label style={font=\small}, label style={font=\small}]
\addplot[ourcol, thick, mark=*, mark size=1.6pt, error bars/.cd, y dir=both, y explicit] coordinates {(200,0.03371) +- (0,0.00585) (500,0.02324) +- (0,0.00283) (1000,0.01286) +- (0,0.00180) (2000,0.01578) +- (0,0.00188) (5000,0.00867) +- (0,0.00145) (10000,0.00610) +- (0,0.00068) (20000,0.00392) +- (0,0.00069)};
\addlegendentry{$\max_k|\Delta\Prob(\kappa=k)|$}
\addplot[cmcol, thick, mark=square*, mark size=1.6pt, error bars/.cd, y dir=both, y explicit] coordinates {(200,0.19433) +- (0,0.01441) (500,0.10929) +- (0,0.01689) (1000,0.08668) +- (0,0.01268) (2000,0.05912) +- (0,0.00611) (5000,0.03592) +- (0,0.00517) (10000,0.03016) +- (0,0.00658) (20000,0.01845) +- (0,0.00229)};
\addlegendentry{rel.\ err.\ $\ex{\tau\mid\kappa=k}$}
\addplot[scvcol, thick, mark=triangle*, mark size=1.6pt, error bars/.cd, y dir=both, y explicit] coordinates {(200,0.21754) +- (0,0.01878) (500,0.19464) +- (0,0.01371) (1000,0.19654) +- (0,0.01008) (2000,0.16012) +- (0,0.00485) (5000,0.17157) +- (0,0.01449) (10000,0.16422) +- (0,0.00391) (20000,0.16731) +- (0,0.00433)};
\addlegendentry{rel.\ err.\ SCV}
\addplot[iaecol, thick, mark=diamond*, mark size=1.6pt, error bars/.cd, y dir=both, y explicit] coordinates {(200,0.06567) +- (0,0.01002) (500,0.04482) +- (0,0.00519) (1000,0.02477) +- (0,0.00303) (2000,0.02868) +- (0,0.00308) (5000,0.01720) +- (0,0.00219) (10000,0.01315) +- (0,0.00120) (20000,0.00908) +- (0,0.00104)};
\addlegendentry{sub-cdf IAE}
\addplot[black, dashed, thin, forget plot] coordinates {(200,0.03371) (500,0.02132) (1000,0.01508) (2000,0.01066) (5000,0.00674) (10000,0.00477) (20000,0.00337)};
\node[font=\scriptsize, anchor=south west, black] at (axis cs:6000,0.0075) {$\propto L^{-1/2}$};
\end{loglogaxis}
\end{tikzpicture}
\caption{Recovery error against sample size for the running $\text{MAPH}_{4,3}$ of \eqref{eq:verif-truth}, fitted from the moment-based initialization with $m=4$ known; $12$ replications per $L$, error bars one standard error across replications, both axes logarithmic. The dashed guide is the parametric $L^{-1/2}$ rate. Three of the four errors follow it, with fitted log-log slopes $-0.45$ for the absorption probabilities, $-0.49$ for the conditional means and $-0.41$ for the sub-distribution IAE. The conditional SCV does not: its slope is $-0.06$ and its relative error sits at about $0.17$ across a hundredfold increase in $L$. That plateau should not be read as an asymptotic property of the estimator. Over the same range the per-observation log-likelihood gap $\frac1L(\ell(\hat d)-\ell(d))$ turns negative, reaching $-3.3\times10^{-3}$ at $L=20000$, so the iteration is finishing \emph{below} the likelihood of the data-generating parameter on its own sample --- something a converged maximizer cannot do. At these sample sizes the relaxed M-step of \Cref{sec:em-alg} and the iteration cap, not the sampling error, are what limits the second-moment functionals; the first-order summaries are already at the parametric rate.}
\label{fig:verif-scaling}
\end{aarevision}
\end{figure}

\paragraph{Effect of the phase order.} The fit above fixes the order at the true value $m=4$. To see how the order affects both the fit and the cost of the M-step, we refit the same $L=5000$ sample for $m=2,\ldots,7$ from the moment-based initialization, each run capped at $500$ iterations. \Cref{tab:msweep} summarizes the runs. The fit is already excellent at the smallest orders -- the per-observation log-likelihood sits within $10^{-3}$ nats of the truth and the sub-distribution IAE is about $0.014$ throughout -- and barely changes as $m$ grows, reflecting that this ground-truth law is of low effective order: a small MAPH already captures its smooth sub-distribution functions, so extra phases buy almost no fit. \subchg{Two further points stand out. First, every fitted model attains a \emph{higher} per-observation log-likelihood than the ground truth on the sample it was fitted to, at every order; this is what a maximum-likelihood estimator must do, and it is a check that the M-step of \Cref{prop:approx-mle} is exact. Second, what grows steeply with the order is not the fit but the number of iterations: reaching a tolerance of $10^{-7}$ on the log-likelihood increment takes about $1300$ iterations at $m=2$ and about $1700$ at $m=3,4$, while the runs at $m=5,6,7$ had not reached it after $5000$. The larger and flatter the parameterization, the slower the ascent, so the computational character of the method is governed by the phase budget far more than by the fit, which saturates early. The observed-data log-likelihood increased monotonically in every one of these runs, as \Cref{cor:monotone} requires.}

\begin{table}[ht]
\centering
\begin{tabular}{r r r r r r}
\toprule
$m$ & \#\,params & per-obs.\ $\ell$ & $(\ell-\ell_{\text{true}})/L$ & sub-cdf IAE & iterations \\
\midrule
2 & 9  & $-1.6318$ & $+8.7\times10^{-4}$ & 0.0140 & 1343 \\
3 & 17 & $-1.6320$ & $+6.7\times10^{-4}$ & 0.0153 & 1741 \\
4 & 27 & $-1.6320$ & $+6.7\times10^{-4}$ & 0.0153 & 1724 \\
5 & 39 & $-1.6318$ & $+9.0\times10^{-4}$ & 0.0142 & $>5000$ \\
6 & 53 & $-1.6318$ & $+9.0\times10^{-4}$ & 0.0142 & $>5000$ \\
7 & 69 & $-1.6316$ & $+1.1\times10^{-3}$ & 0.0137 & $>5000$ \\
\bottomrule
\end{tabular}
\caption{\subchg{Refitting the running $\text{MAPH}_{4,3}$ ground truth at orders $m=2,\ldots,7$ on a single $L=5000$ sample, from the moment-based initialization, each run capped at $5000$ iterations. The fit summaries -- per-observation log-likelihood $\ell/L$, its gap to the truth ($\ell_{\text{true}}/L=-1.6327$), and the sub-distribution-function IAE against the truth -- are near-optimal already at the smallest orders and change little with $m$, while the iterations needed to reach a tolerance of $10^{-7}$ on the log-likelihood increment grow with the order; the runs at $m\ge5$ had not reached it at the cap. The true order is $m=4$.}}
\label{tab:msweep}
\end{table}

\subchg{\textbf{[TO BE RE-RUN.]} A remark on the iteration behaviour stood here. It reported that the observed-data log-likelihood was not monotone, that the iteration rode the feasibility boundary -- the projection being active on about $0.7$ of the $m=4$ rows per iteration -- and that a robust stopping rule was therefore needed in place of a single threshold. None of this applies to the algorithm of \Cref{sec:em-alg} as it now stands: the M-step is exact, no projection is performed, and by \Cref{cor:monotone} the observed-data log-likelihood is non-decreasing, so a single-threshold rule is sound (\Cref{sec:em-properties}). The convergence behaviour of the present iteration -- iterations to a given tolerance, and its dependence on the initialization -- has to be measured afresh before this paragraph can be rewritten.}

\begin{rcrevision}
\subsection{Application: Intensive-Care Length of Stay}
\label{sec:realdata}
\end{rcrevision}

\begin{rcrevision}
We apply the method to the intensive-care length-of-stay data of \citet{beyersmann_2012}. It is the natural test case for the extension of this paper --- it contains right-censored records, which can now be used rather than discarded --- and it has been modelled before with a phase-type competing-risks model \citep{lindqvist_phase-type_2022}, so it also permits a direct comparison. \aachg{Goodness of fit is measured against the nonparametric Aalen--Johansen estimator \citep{aalen_johansen_1978}, which handles the censoring through its risk sets. With only $0.9\%$ of records censored it is within $0.008$ of the raw empirical sub-distribution $L^{-1}\#\{\ell:y^\ell\le u,\ k^\ell=k\}$, which is what \Cref{fig:icu-cif} plots against.} For simulated data, where the data-generating law is known, we compare against it directly instead; see \Cref{fig:verif-cdf}.

The dataset contains $650$ intensive-care patients without pneumonia at admission \citep[Ch.~1]{beyersmann_2012}: $589$ alive discharges, $55$ hospital deaths and $6$ stays right censored at $8$, $8$, $8$, $14$, $20$ and $31$ days. Earlier versions of this analysis set the six censored records aside and fitted the remaining $L=644$ pairs. With the censored E-step of \Cref{prop:e-step-censored} in place we fit all $650$ records, each censored stay entering the likelihood through its survival term $\alpha e^{Ty^\ell}\mathbf 1_m$. We work throughout on the original day scale rather than on the mean-normalized scale used previously, so that the reported log-likelihoods may be compared directly with a model fitted outside our software; rescaling time shifts every log-likelihood by the same constant $L\log s$ and leaves all AIC \aachg{{\em differences}} unchanged.

\Cref{tab:icu} reports the order selection, alongside the same exercise restricted to the Coxian competing-risks model of \citet{lindqvist_phase-type_2022} --- that is, $\alpha=e_1$ with an upper-bidiagonal $T$ and a full $m\times n$ matrix $D$. Since the EM update preserves the zeros of $\alpha$ and of the off-diagonal of $T$ (\Cref{sec:em-alg}), starting from a Coxian-patterned generator fits exactly that submodel, and the two columns of the table are produced by the same code on the same records. \subchg{Because the likelihood surface has several local optima, every entry is the best of a fixed panel of starts:} the two initializations of \Cref{sec:init-heuristic}, slightly densified Coxian fits of the same order, and twenty random dense generators. A Coxian fit is itself a MAPH of the same order, so it is included in the panel for the general column as well.

\subchg{The general model attains its best AIC at $m=4$ and its best BIC at $m=3$; the Coxian submodel, which spends far fewer parameters for a nearly identical fit, is best at $m=5$, though its AIC at $m=4$ and $m=5$ differs by only half a unit, and it stays below the general model's AIC at every order. The extra freedom of a full $T$ buys $0.2$ nats of log-likelihood at $m=3$, and $0.8$ at $m=4$, at a cost of six and twelve parameters respectively --- a fair summary of what the unrestricted parameterization is worth on these data.} Across orders the fitted cause probabilities sit at $(0.914,0.086)$, matching the Aalen--Johansen limits $(0.909,0.084)$ on $[0,80]$ days.
\end{rcrevision}

\begin{table}[ht]
\centering
\begin{rcrevision}
\begin{tabular}{r r r r @{\hspace{2.2em}} r r r}
\toprule
& \multicolumn{3}{c}{general $\text{MAPH}_{m,2}$} & \multicolumn{3}{c}{Coxian $\text{MAPH}_{m,2}$} \\
\cmidrule(r){2-4}\cmidrule(l){5-7}
$m$ & log-lik. & \#\,par & AIC & log-lik. & \#\,par & AIC \\
\midrule
1 & $-2450.3$ & 2  & $4904.5$ & $-2450.3$ & 2  & $4904.5$ \\
2 & $-2404.6$ & 7  & $4823.1$ & $-2404.6$ & 5  & $4819.1$ \\
3 & $-2374.8$ & 14 & $4777.7$ & $-2375.0$ & 8  & $4766.0$ \\
4 & $-2359.9$ & 23 & $\mathbf{4765.7}$ & $-2360.6$ & 11 & $4743.2$ \\
5 & $-2355.7$ & 34 & $4779.4$ & $-2357.4$ & 14 & $\mathbf{4742.7}$ \\
\bottomrule
\end{tabular}
\caption{Order selection for the intensive-care data, fitted to all $L=650$ records ($n=2$ causes) with the right-censored EM of \Cref{sec:em-alg}, on the day scale. The left block is the general $\text{MAPH}_{m,2}$; the right block restricts the same fitting code to the Coxian competing-risks submodel of \citet{lindqvist_phase-type_2022}. Free parameters are $(m-1)+m(m-1)+mn$ and $(m-1)+mn$ respectively, the diagonal of $T$ being fixed by the zero-row-sum constraint. Each entry is the best of a fixed panel of starting points (see the text). \subchg{The panel now matters much less than it did for the earlier relaxed-and-projected M-step: running the general model once from its default initialization reaches $-2404.56$ at $m=2$, matching the panel, $-2374.96$ at $m=3$, short of it by $0.12$ nats, and $-2373.31$ at $m=4$, short by $13.5$.}}
\label{tab:icu}
\end{rcrevision}
\end{table}

\begin{figure}[!tb]
\centering
\begin{rcrevision}
\definecolor{empcol}{RGB}{60,60,60}
\definecolor{ourcol}{RGB}{0,114,178}
\definecolor{lindcol}{RGB}{213,94,0}
\begin{minipage}[t]{0.32\textwidth}\centering
\begin{tikzpicture}
\begin{axis}[width=\textwidth, height=0.92\textwidth,
  title={sub-distribution functions $F(u,k)$}, title style={font=\small},
  xlabel={length of stay $u$ (days)}, ylabel={$F(u,k)$},
  xmin=0, xmax=80, ymin=0, ymax=1,
  tick label style={font=\scriptsize}, label style={font=\small}]
\addplot[empcol, thick, const plot] coordinates {(0,0.0000)(2,0.0831)(4,0.2631)(6,0.4123)(8,0.5077)(10,0.5877)(12,0.6400)(14,0.6862)(16,0.7262)(18,0.7615)(20,0.7800)(22,0.7908)(24,0.8092)(26,0.8169)(28,0.8292)(30,0.8431)(32,0.8508)(34,0.8600)(36,0.8615)(38,0.8662)(40,0.8677)(42,0.8708)(44,0.8738)(46,0.8785)(48,0.8815)(50,0.8831)(52,0.8846)(54,0.8877)(56,0.8938)(58,0.8969)(60,0.8985)(62,0.8985)(64,0.9000)(66,0.9000)(68,0.9000)(70,0.9015)(72,0.9015)(74,0.9015)(76,0.9015)(78,0.9015)(80,0.9015)};
\addplot[ourcol, dashed, thick] coordinates {(0,0.0000)(2,0.0719)(4,0.2078)(6,0.3450)(8,0.4623)(10,0.5556)(12,0.6275)(14,0.6820)(16,0.7234)(18,0.7548)(20,0.7791)(22,0.7982)(24,0.8133)(26,0.8257)(28,0.8359)(30,0.8445)(32,0.8519)(34,0.8582)(36,0.8638)(38,0.8687)(40,0.8731)(42,0.8770)(44,0.8805)(46,0.8837)(48,0.8866)(50,0.8891)(52,0.8915)(54,0.8936)(56,0.8955)(58,0.8972)(60,0.8988)(62,0.9002)(64,0.9015)(66,0.9027)(68,0.9038)(70,0.9047)(72,0.9056)(74,0.9064)(76,0.9072)(78,0.9078)(80,0.9084)};
\addplot[lindcol, dotted, very thick] coordinates {(0,0.0000)(2,0.0450)(4,0.1588)(6,0.2794)(8,0.3861)(10,0.4762)(12,0.5512)(14,0.6134)(16,0.6649)(18,0.7075)(20,0.7428)(22,0.7720)(24,0.7962)(26,0.8162)(28,0.8327)(30,0.8464)(32,0.8578)(34,0.8672)(36,0.8749)(38,0.8813)(40,0.8867)(42,0.8911)(44,0.8947)(46,0.8977)(48,0.9002)(50,0.9023)(52,0.9040)(54,0.9054)(56,0.9066)(58,0.9075)(60,0.9083)(62,0.9090)(64,0.9096)(66,0.9100)(68,0.9104)(70,0.9107)(72,0.9110)(74,0.9112)(76,0.9114)(78,0.9115)(80,0.9116)};
\addplot[empcol, thick, const plot] coordinates {(0,0.0000)(2,0.0015)(4,0.0092)(6,0.0215)(8,0.0338)(10,0.0462)(12,0.0538)(14,0.0538)(16,0.0569)(18,0.0600)(20,0.0631)(22,0.0646)(24,0.0692)(26,0.0738)(28,0.0754)(30,0.0754)(32,0.0754)(34,0.0769)(36,0.0769)(38,0.0769)(40,0.0769)(42,0.0769)(44,0.0769)(46,0.0769)(48,0.0769)(50,0.0785)(52,0.0785)(54,0.0785)(56,0.0800)(58,0.0800)(60,0.0800)(62,0.0815)(64,0.0815)(66,0.0815)(68,0.0815)(70,0.0815)(72,0.0815)(74,0.0815)(76,0.0815)(78,0.0831)(80,0.0831)};
\addplot[ourcol, dashed, thick] coordinates {(0,0.0000)(2,0.0079)(4,0.0190)(6,0.0297)(8,0.0389)(10,0.0464)(12,0.0524)(14,0.0572)(16,0.0610)(18,0.0641)(20,0.0667)(22,0.0688)(24,0.0706)(26,0.0722)(28,0.0735)(30,0.0747)(32,0.0758)(34,0.0767)(36,0.0776)(38,0.0783)(40,0.0790)(42,0.0796)(44,0.0802)(46,0.0807)(48,0.0811)(50,0.0815)(52,0.0819)(54,0.0823)(56,0.0826)(58,0.0828)(60,0.0831)(62,0.0833)(64,0.0835)(66,0.0837)(68,0.0839)(70,0.0841)(72,0.0842)(74,0.0843)(76,0.0844)(78,0.0846)(80,0.0846)};
\addplot[lindcol, dotted, very thick] coordinates {(0,0.0000)(2,0.0042)(4,0.0151)(6,0.0267)(8,0.0370)(10,0.0457)(12,0.0530)(14,0.0590)(16,0.0639)(18,0.0680)(20,0.0715)(22,0.0743)(24,0.0766)(26,0.0785)(28,0.0801)(30,0.0815)(32,0.0825)(34,0.0835)(36,0.0842)(38,0.0848)(40,0.0853)(42,0.0858)(44,0.0861)(46,0.0864)(48,0.0866)(50,0.0868)(52,0.0870)(54,0.0871)(56,0.0873)(58,0.0874)(60,0.0874)(62,0.0875)(64,0.0875)(66,0.0876)(68,0.0876)(70,0.0877)(72,0.0877)(74,0.0877)(76,0.0877)(78,0.0877)(80,0.0877)};
\node[font=\scriptsize, anchor=west] at (axis cs:40,0.82) {alive discharge};
\node[font=\scriptsize, anchor=west] at (axis cs:40,0.17) {hospital death};
\end{axis}
\end{tikzpicture}
\end{minipage}\hfill
\begin{minipage}[t]{0.32\textwidth}\centering
\begin{tikzpicture}
\begin{axis}[width=\textwidth, height=0.92\textwidth,
  title={sub-density $f(u,1)$, alive discharge}, title style={font=\small},
  xlabel={length of stay $u$ (days)}, ylabel={$f(u,1)$},
  xmin=0, xmax=80, ymin=0, ymax=0.095,
  tick label style={font=\scriptsize}, label style={font=\small},
  legend style={at={(0.5,-0.28)}, anchor=north, draw=none, font=\scriptsize,
                legend columns=3, /tikz/every even column/.append style={column sep=8pt}},
  legend cell align=left,
  ]
\addplot[empcol, fill=empcol!18, draw=empcol!55, ybar interval, forget plot] coordinates {(1.0,0.00000)(3.0,0.08538)(5.0,0.08615)(7.0,0.06154)(9.0,0.04462)(11.0,0.02923)(13.0,0.02462)(15.0,0.02462)(17.0,0.01769)(19.0,0.01154)(21.0,0.00538)(23.0,0.00692)(25.0,0.00846)(27.0,0.00538)(29.0,0.00769)(31.0,0.00538)(33.0,0.00385)(35.0,0.00231)(37.0,0.00077)(39.0,0.00231)(41.0,0.00077)(43.0,0.00154)(45.0,0.00231)(47.0,0.00154)(49.0,0.00077)(51.0,0.00154)(53.0,0.00154)(55.0,0.00231)(57.0,0.00154)(59.0,0.00154)(61.0,0.00000)(63.0,0.00077)(65.0,0.00000)(67.0,0.00000)(69.0,0.00000)(71.0,0.00077)(73.0,0.00000)(75.0,0.00000)(77.0,0.00000)(79.0,0.00000)};
\addplot[ourcol, dashed, thick] coordinates {(0.0,0.00000)(0.5,0.02227)(1.0,0.03897)(1.5,0.05120)(2.0,0.05983)(2.5,0.06560)(3.0,0.06911)(3.5,0.07084)(4.0,0.07119)(4.5,0.07049)(5.0,0.06899)(5.5,0.06691)(6.0,0.06442)(6.5,0.06165)(7.0,0.05870)(7.5,0.05567)(8.0,0.05262)(8.5,0.04959)(9.0,0.04662)(9.5,0.04375)(10.0,0.04099)(10.5,0.03835)(11.0,0.03584)(11.5,0.03347)(12.0,0.03124)(12.5,0.02915)(13.0,0.02719)(13.5,0.02536)(14.0,0.02365)(14.5,0.02206)(15.0,0.02059)(15.5,0.01922)(16.0,0.01795)(16.5,0.01678)(17.0,0.01569)(17.5,0.01469)(18.0,0.01376)(18.5,0.01290)(19.0,0.01210)(19.5,0.01137)(20.0,0.01069)(20.5,0.01007)(21.0,0.00949)(21.5,0.00895)(22.0,0.00845)(22.5,0.00800)(23.0,0.00757)(23.5,0.00718)(24.0,0.00681)(24.5,0.00647)(25.0,0.00616)(25.5,0.00586)(26.0,0.00559)(26.5,0.00534)(27.0,0.00510)(27.5,0.00488)(28.0,0.00467)(28.5,0.00448)(29.0,0.00430)(29.5,0.00413)(30.0,0.00397)(30.5,0.00382)(31.0,0.00367)(31.5,0.00354)(32.0,0.00341)(32.5,0.00330)(33.0,0.00318)(33.5,0.00308)(34.0,0.00297)(34.5,0.00288)(35.0,0.00279)(35.5,0.00270)(36.0,0.00262)(36.5,0.00254)(37.0,0.00246)(37.5,0.00239)(38.0,0.00232)(38.5,0.00225)(39.0,0.00219)(39.5,0.00213)(40.0,0.00207)(40.5,0.00201)(41.0,0.00195)(41.5,0.00190)(42.0,0.00185)(42.5,0.00180)(43.0,0.00175)(43.5,0.00171)(44.0,0.00166)(44.5,0.00162)(45.0,0.00158)(45.5,0.00154)(46.0,0.00150)(46.5,0.00146)(47.0,0.00142)(47.5,0.00139)(48.0,0.00135)(48.5,0.00132)(49.0,0.00129)(49.5,0.00125)(50.0,0.00122)(50.5,0.00119)(51.0,0.00116)(51.5,0.00114)(52.0,0.00111)(52.5,0.00108)(53.0,0.00105)(53.5,0.00103)(54.0,0.00100)(54.5,0.00098)(55.0,0.00096)(55.5,0.00093)(56.0,0.00091)(56.5,0.00089)(57.0,0.00087)(57.5,0.00085)(58.0,0.00083)(58.5,0.00081)(59.0,0.00079)(59.5,0.00077)(60.0,0.00075)(60.5,0.00073)(61.0,0.00071)(61.5,0.00070)(62.0,0.00068)(62.5,0.00066)(63.0,0.00065)(63.5,0.00063)(64.0,0.00062)(64.5,0.00060)(65.0,0.00059)(65.5,0.00057)(66.0,0.00056)(66.5,0.00055)(67.0,0.00053)(67.5,0.00052)(68.0,0.00051)(68.5,0.00050)(69.0,0.00049)(69.5,0.00047)(70.0,0.00046)(70.5,0.00045)(71.0,0.00044)(71.5,0.00043)(72.0,0.00042)(72.5,0.00041)(73.0,0.00040)(73.5,0.00039)(74.0,0.00038)(74.5,0.00037)(75.0,0.00036)(75.5,0.00036)(76.0,0.00035)(76.5,0.00034)(77.0,0.00033)(77.5,0.00032)(78.0,0.00032)(78.5,0.00031)(79.0,0.00030)(79.5,0.00029)(80.0,0.00029)};
\addlegendentry{$\text{MAPH}_{3,2}$ fitted here}
\addplot[lindcol, dotted, very thick] coordinates {(0.0,0.00000)(0.5,0.00856)(1.0,0.02279)(1.5,0.03610)(2.0,0.04646)(2.5,0.05371)(3.0,0.05828)(3.5,0.06077)(4.0,0.06171)(4.5,0.06156)(5.0,0.06064)(5.5,0.05921)(6.0,0.05744)(6.5,0.05548)(7.0,0.05341)(7.5,0.05128)(8.0,0.04916)(8.5,0.04706)(9.0,0.04501)(9.5,0.04302)(10.0,0.04109)(10.5,0.03924)(11.0,0.03745)(11.5,0.03574)(12.0,0.03411)(12.5,0.03254)(13.0,0.03105)(13.5,0.02962)(14.0,0.02825)(14.5,0.02695)(15.0,0.02571)(15.5,0.02452)(16.0,0.02339)(16.5,0.02231)(17.0,0.02128)(17.5,0.02030)(18.0,0.01936)(18.5,0.01847)(19.0,0.01761)(19.5,0.01680)(20.0,0.01602)(20.5,0.01528)(21.0,0.01458)(21.5,0.01390)(22.0,0.01326)(22.5,0.01265)(23.0,0.01206)(23.5,0.01151)(24.0,0.01098)(24.5,0.01047)(25.0,0.00998)(25.5,0.00952)(26.0,0.00908)(26.5,0.00866)(27.0,0.00826)(27.5,0.00788)(28.0,0.00752)(28.5,0.00717)(29.0,0.00684)(29.5,0.00652)(30.0,0.00622)(30.5,0.00593)(31.0,0.00566)(31.5,0.00540)(32.0,0.00515)(32.5,0.00491)(33.0,0.00468)(33.5,0.00447)(34.0,0.00426)(34.5,0.00406)(35.0,0.00388)(35.5,0.00370)(36.0,0.00353)(36.5,0.00336)(37.0,0.00321)(37.5,0.00306)(38.0,0.00292)(38.5,0.00278)(39.0,0.00266)(39.5,0.00253)(40.0,0.00242)(40.5,0.00230)(41.0,0.00220)(41.5,0.00210)(42.0,0.00200)(42.5,0.00191)(43.0,0.00182)(43.5,0.00173)(44.0,0.00165)(44.5,0.00158)(45.0,0.00151)(45.5,0.00144)(46.0,0.00137)(46.5,0.00131)(47.0,0.00125)(47.5,0.00119)(48.0,0.00113)(48.5,0.00108)(49.0,0.00103)(49.5,0.00098)(50.0,0.00094)(50.5,0.00089)(51.0,0.00085)(51.5,0.00081)(52.0,0.00078)(52.5,0.00074)(53.0,0.00071)(53.5,0.00067)(54.0,0.00064)(54.5,0.00061)(55.0,0.00058)(55.5,0.00056)(56.0,0.00053)(56.5,0.00051)(57.0,0.00048)(57.5,0.00046)(58.0,0.00044)(58.5,0.00042)(59.0,0.00040)(59.5,0.00038)(60.0,0.00036)(60.5,0.00035)(61.0,0.00033)(61.5,0.00032)(62.0,0.00030)(62.5,0.00029)(63.0,0.00027)(63.5,0.00026)(64.0,0.00025)(64.5,0.00024)(65.0,0.00023)(65.5,0.00022)(66.0,0.00021)(66.5,0.00020)(67.0,0.00019)(67.5,0.00018)(68.0,0.00017)(68.5,0.00016)(69.0,0.00016)(69.5,0.00015)(70.0,0.00014)(70.5,0.00013)(71.0,0.00013)(71.5,0.00012)(72.0,0.00012)(72.5,0.00011)(73.0,0.00011)(73.5,0.00010)(74.0,0.00010)(74.5,0.00009)(75.0,0.00009)(75.5,0.00008)(76.0,0.00008)(76.5,0.00008)(77.0,0.00007)(77.5,0.00007)(78.0,0.00007)(78.5,0.00006)(79.0,0.00006)(79.5,0.00006)(80.0,0.00005)};
\addlegendentry{Lindqvist \citeyearpar{lindqvist_phase-type_2022}}
\end{axis}
\end{tikzpicture}
\end{minipage}\hfill
\begin{minipage}[t]{0.32\textwidth}\centering
\begin{tikzpicture}
\begin{axis}[width=\textwidth, height=0.92\textwidth,
  title={sub-density $f(u,2)$, hospital death}, title style={font=\small},
  xlabel={length of stay $u$ (days)}, ylabel={$f(u,2)$},
  xmin=0, xmax=80, ymin=0, ymax=0.0075,
  tick label style={font=\scriptsize}, label style={font=\small},
  ]
\addplot[empcol, fill=empcol!18, draw=empcol!55, ybar interval, forget plot] coordinates {(1.0,0.00000)(3.0,0.00231)(5.0,0.00538)(7.0,0.00615)(9.0,0.00462)(11.0,0.00615)(13.0,0.00231)(15.0,0.00154)(17.0,0.00077)(19.0,0.00154)(21.0,0.00077)(23.0,0.00154)(25.0,0.00231)(27.0,0.00154)(29.0,0.00077)(31.0,0.00000)(33.0,0.00000)(35.0,0.00077)(37.0,0.00000)(39.0,0.00000)(41.0,0.00000)(43.0,0.00000)(45.0,0.00000)(47.0,0.00000)(49.0,0.00077)(51.0,0.00000)(53.0,0.00000)(55.0,0.00000)(57.0,0.00077)(59.0,0.00000)(61.0,0.00000)(63.0,0.00077)(65.0,0.00000)(67.0,0.00000)(69.0,0.00000)(71.0,0.00000)(73.0,0.00000)(75.0,0.00000)(77.0,0.00000)(79.0,0.00077)};
\addplot[ourcol, dashed, thick] coordinates {(0.0,0.00204)(0.5,0.00325)(1.0,0.00414)(1.5,0.00478)(2.0,0.00521)(2.5,0.00547)(3.0,0.00561)(3.5,0.00565)(4.0,0.00561)(4.5,0.00552)(5.0,0.00538)(5.5,0.00521)(6.0,0.00501)(6.5,0.00481)(7.0,0.00459)(7.5,0.00438)(8.0,0.00416)(8.5,0.00395)(9.0,0.00374)(9.5,0.00354)(10.0,0.00335)(10.5,0.00316)(11.0,0.00299)(11.5,0.00282)(12.0,0.00267)(12.5,0.00252)(13.0,0.00238)(13.5,0.00225)(14.0,0.00213)(14.5,0.00202)(15.0,0.00191)(15.5,0.00181)(16.0,0.00172)(16.5,0.00163)(17.0,0.00155)(17.5,0.00147)(18.0,0.00140)(18.5,0.00134)(19.0,0.00128)(19.5,0.00122)(20.0,0.00117)(20.5,0.00111)(21.0,0.00107)(21.5,0.00102)(22.0,0.00098)(22.5,0.00094)(23.0,0.00091)(23.5,0.00087)(24.0,0.00084)(24.5,0.00081)(25.0,0.00078)(25.5,0.00075)(26.0,0.00073)(26.5,0.00070)(27.0,0.00068)(27.5,0.00066)(28.0,0.00063)(28.5,0.00061)(29.0,0.00060)(29.5,0.00058)(30.0,0.00056)(30.5,0.00054)(31.0,0.00053)(31.5,0.00051)(32.0,0.00050)(32.5,0.00048)(33.0,0.00047)(33.5,0.00046)(34.0,0.00044)(34.5,0.00043)(35.0,0.00042)(35.5,0.00041)(36.0,0.00040)(36.5,0.00039)(37.0,0.00038)(37.5,0.00037)(38.0,0.00036)(38.5,0.00035)(39.0,0.00034)(39.5,0.00033)(40.0,0.00032)(40.5,0.00031)(41.0,0.00031)(41.5,0.00030)(42.0,0.00029)(42.5,0.00028)(43.0,0.00028)(43.5,0.00027)(44.0,0.00026)(44.5,0.00026)(45.0,0.00025)(45.5,0.00024)(46.0,0.00024)(46.5,0.00023)(47.0,0.00023)(47.5,0.00022)(48.0,0.00022)(48.5,0.00021)(49.0,0.00021)(49.5,0.00020)(50.0,0.00020)(50.5,0.00019)(51.0,0.00019)(51.5,0.00018)(52.0,0.00018)(52.5,0.00017)(53.0,0.00017)(53.5,0.00017)(54.0,0.00016)(54.5,0.00016)(55.0,0.00015)(55.5,0.00015)(56.0,0.00015)(56.5,0.00014)(57.0,0.00014)(57.5,0.00014)(58.0,0.00013)(58.5,0.00013)(59.0,0.00013)(59.5,0.00012)(60.0,0.00012)(60.5,0.00012)(61.0,0.00012)(61.5,0.00011)(62.0,0.00011)(62.5,0.00011)(63.0,0.00010)(63.5,0.00010)(64.0,0.00010)(64.5,0.00010)(65.0,0.00010)(65.5,0.00009)(66.0,0.00009)(66.5,0.00009)(67.0,0.00009)(67.5,0.00008)(68.0,0.00008)(68.5,0.00008)(69.0,0.00008)(69.5,0.00008)(70.0,0.00007)(70.5,0.00007)(71.0,0.00007)(71.5,0.00007)(72.0,0.00007)(72.5,0.00007)(73.0,0.00006)(73.5,0.00006)(74.0,0.00006)(74.5,0.00006)(75.0,0.00006)(75.5,0.00006)(76.0,0.00006)(76.5,0.00005)(77.0,0.00005)(77.5,0.00005)(78.0,0.00005)(78.5,0.00005)(79.0,0.00005)(79.5,0.00005)(80.0,0.00005)};
\addplot[lindcol, dotted, very thick] coordinates {(0.0,0.00000)(0.5,0.00075)(1.0,0.00211)(1.5,0.00340)(2.0,0.00442)(2.5,0.00514)(3.0,0.00559)(3.5,0.00584)(4.0,0.00594)(4.5,0.00593)(5.0,0.00584)(5.5,0.00571)(6.0,0.00554)(6.5,0.00535)(7.0,0.00515)(7.5,0.00495)(8.0,0.00474)(8.5,0.00454)(9.0,0.00434)(9.5,0.00415)(10.0,0.00397)(10.5,0.00379)(11.0,0.00361)(11.5,0.00345)(12.0,0.00329)(12.5,0.00314)(13.0,0.00300)(13.5,0.00286)(14.0,0.00273)(14.5,0.00260)(15.0,0.00248)(15.5,0.00237)(16.0,0.00226)(16.5,0.00215)(17.0,0.00205)(17.5,0.00196)(18.0,0.00187)(18.5,0.00178)(19.0,0.00170)(19.5,0.00162)(20.0,0.00155)(20.5,0.00148)(21.0,0.00141)(21.5,0.00134)(22.0,0.00128)(22.5,0.00122)(23.0,0.00116)(23.5,0.00111)(24.0,0.00106)(24.5,0.00101)(25.0,0.00096)(25.5,0.00092)(26.0,0.00088)(26.5,0.00084)(27.0,0.00080)(27.5,0.00076)(28.0,0.00073)(28.5,0.00069)(29.0,0.00066)(29.5,0.00063)(30.0,0.00060)(30.5,0.00057)(31.0,0.00055)(31.5,0.00052)(32.0,0.00050)(32.5,0.00047)(33.0,0.00045)(33.5,0.00043)(34.0,0.00041)(34.5,0.00039)(35.0,0.00037)(35.5,0.00036)(36.0,0.00034)(36.5,0.00032)(37.0,0.00031)(37.5,0.00030)(38.0,0.00028)(38.5,0.00027)(39.0,0.00026)(39.5,0.00024)(40.0,0.00023)(40.5,0.00022)(41.0,0.00021)(41.5,0.00020)(42.0,0.00019)(42.5,0.00018)(43.0,0.00018)(43.5,0.00017)(44.0,0.00016)(44.5,0.00015)(45.0,0.00015)(45.5,0.00014)(46.0,0.00013)(46.5,0.00013)(47.0,0.00012)(47.5,0.00011)(48.0,0.00011)(48.5,0.00010)(49.0,0.00010)(49.5,0.00009)(50.0,0.00009)(50.5,0.00009)(51.0,0.00008)(51.5,0.00008)(52.0,0.00007)(52.5,0.00007)(53.0,0.00007)(53.5,0.00007)(54.0,0.00006)(54.5,0.00006)(55.0,0.00006)(55.5,0.00005)(56.0,0.00005)(56.5,0.00005)(57.0,0.00005)(57.5,0.00004)(58.0,0.00004)(58.5,0.00004)(59.0,0.00004)(59.5,0.00004)(60.0,0.00004)(60.5,0.00003)(61.0,0.00003)(61.5,0.00003)(62.0,0.00003)(62.5,0.00003)(63.0,0.00003)(63.5,0.00003)(64.0,0.00002)(64.5,0.00002)(65.0,0.00002)(65.5,0.00002)(66.0,0.00002)(66.5,0.00002)(67.0,0.00002)(67.5,0.00002)(68.0,0.00002)(68.5,0.00002)(69.0,0.00002)(69.5,0.00001)(70.0,0.00001)(70.5,0.00001)(71.0,0.00001)(71.5,0.00001)(72.0,0.00001)(72.5,0.00001)(73.0,0.00001)(73.5,0.00001)(74.0,0.00001)(74.5,0.00001)(75.0,0.00001)(75.5,0.00001)(76.0,0.00001)(76.5,0.00001)(77.0,0.00001)(77.5,0.00001)(78.0,0.00001)(78.5,0.00001)(79.0,0.00001)(79.5,0.00001)(80.0,0.00001)};
\end{axis}
\end{tikzpicture}
\end{minipage}
\caption{\aachg{Intensive-care length of stay, all $650$ records, compared with the data on two footings. Grey is the empirical sub-distribution (left, as a step function) and the per-cause histogram of observed event times on $2$-day bins (centre and right, scaled to a density on the full sample); blue is the $\text{MAPH}_{3,2}$ fitted here with the censored EM, orange the published fit of \citet{lindqvist_phase-type_2022}. Left: both causes, the upper group of curves alive discharge and the lower hospital death. Centre and right: the corresponding sub-densities, on separate scales because discharge is an order of magnitude the more frequent. The sub-densities localize the discrepancy that the sub-distribution functions integrate away --- both fits miss the early discharge peak, but the orange curve is the flatter of the two, which is what the supremum distances of \Cref{tab:icu-lindqvist} ($0.133$ against $0.067$) record. Death is captured closely by both. The Coxian refit, omitted for legibility, lies between the two throughout.}}
\label{fig:icu-cif}
\end{rcrevision}
\end{figure}

\begin{rcrevision}
\paragraph{Comparison with Lindqvist's fit.}
\aachg{\citet{lindqvist_phase-type_2022} fits a Coxian competing-risks model with $m=3$ to exactly these $650$ patients. Re-expressing his published estimates in our $(\alpha,T,D)$ coordinates and scoring them on the same records with the same likelihood puts all three fits on one footing, in \Cref{tab:icu-lindqvist}.}

Our fitting procedure attains a higher likelihood. Within Lindqvist's own model class and at his own order, refitting with the censored EM of \Cref{sec:em-alg} \subchg{improves the log-likelihood by $21.3$ nats using the same eight free parameters, and more than halves the supremum discrepancy against the Aalen--Johansen discharge incidence, from $0.133$ to $0.062$; the general $\text{MAPH}_{3,2}$, with its six extra parameters, adds only a further $0.2$ nats, so on these data the Coxian restriction costs essentially nothing at this order.} \subchg{The gap is a matter of which optimum is reached rather than of model class. Re-expressed in our coordinates the published canonical mixture is an $8$-phase MAPH, since each of its three mixture components contributes its own generalized-Erlang block; started {\em at} those estimates and left to run, our EM climbs from $-2396.34$ to $-2361.99$. That fit uses more phases than any row of \Cref{tab:icu-lindqvist} and is not comparable with them, but it does show that the published point is far from stationary for the likelihood used here.}

\aachg{All three rows are scored on our continuous-time likelihood, which is the appropriate common yardstick but need not be the objective under which the published estimates were obtained.}
\end{rcrevision}

\begin{table}[ht]
\centering
\begin{rcrevision}
\begin{tabular}{l r r r r r}
\toprule
fit & log-lik. & \#\,par & $\Prob(\kappa=1)$ & $\ex{\tau}$ (d) & $\sup|F-F_{\text{AJ}}|$ \\
\midrule
\citet{lindqvist_phase-type_2022}, published & $-2396.34$ & 8  & $0.9122$ & $12.65$ & $0.133$ \, / \, $0.011$ \\
Coxian $\text{MAPH}_{3,2}$, refitted here             & $-2375.02$ & 8  & $0.9142$ & $12.36$ & $0.062$ \, / \, $0.009$ \\
general $\text{MAPH}_{3,2}$, fitted here              & $\mathbf{-2374.84}$ & 14 & $0.9142$ & $12.36$ & $\mathbf{0.062}$ \, / \, $0.009$ \\
\bottomrule
\end{tabular}
\caption{The same $650$ intensive-care records, the same order $m=3$, and the same likelihood on the day scale. The first row is the fit of \citet{lindqvist_phase-type_2022} re-expressed in our coordinates from its published canonical parameters; the second restricts our censored EM to that same Coxian model class; the third is the unrestricted fit. The last column gives the supremum distance to the Aalen--Johansen cumulative incidence for alive discharge and for hospital death.}
\label{tab:icu-lindqvist}
\end{rcrevision}
\end{table}

\begin{rcrevision}
\aachg{\Cref{fig:icu-cif} sets the $m=3$ fits against the data, as sub-distribution functions and as sub-densities.} Both capture hospital death very accurately, and they differ mainly in how well they follow the steep early rise of the discharge incidence --- the feature \citet{lindqvist_phase-type_2022} singled out in reporting that ``the fit seems very good for the outcome `hospital death', while the model with $m=3$ is seemingly not able to pick up completely the steepness of the first part of the cumulative incidence function for `alive discharge'.'' \aachg{The sub-density panels show it directly: the observed discharge histogram peaks at about $0.086$ per day within the first two weeks, and neither fit reaches that peak --- ours rises to $0.071$, the published fit to $0.062$.} That difficulty is genuine and shared by both curves, but it is roughly halved for the fits obtained here, and it continues to shrink with the phase order, reaching $0.062$ at $m=5$. It is a limitation of low-order phase-type incidence rather than of the estimator.

Finally, the practical effect of using the censored records rather than discarding them is small on this dataset, as it should be: only $6$ of $650$ records are censored, $0.9\%$. \subchg{Refitting at $m=4$ from the $644$ events alone and scoring the result on all $650$ gives $-2359.88$ against $-2359.85$ for the censored fit, with absorption probabilities agreeing to three decimals and mean stays of $12.20$ against $12.37$ days} --- the censored fit's slightly longer mean reflecting that a stay censored at $31$ days is known to have lasted {\em at least} that long. At this censoring fraction the likelihood difference is dominated by which local optimum each multi-start run reached rather than by the censoring itself, and we would not read anything into it.

The value of the extended method here is therefore not that it moves these particular estimates far, but that it removes the need to choose between discarding records and misrepresenting them. That choice does become decisive when censoring is heavy. In a controlled check in the package's test suite, a known $\text{MAPH}_{3,2}$ observed under an administrative horizon that censors $54\%$ of subjects is recovered to within $2.3\%$ of its true mean absorption time by the censored fit, whereas discarding the censored records understates that mean by $74\%$, and recording each censoring time as an event understates it by $53\%$ while also displacing the absorption probabilities by $0.30$.
\end{rcrevision}
\begin{rcrevision}
The example shows that MAPH distributions fit real competing-risks data well when incidence accumulates early, as it does here. Where it does not --- when a cause has a delayed onset, so that its incidence stays at zero over an initial interval --- a low-order MAPH cannot follow it, since absorption is possible from the very first phase and the fitted incidence rises from $u=0$; reproducing a sharp delay needs many phases in an Erlang-like chain, at a corresponding cost in parameters. This is a property of phase-type modelling rather than of the estimator, and it is the kind of diagnostic \aachg{the sub-density panels of \Cref{fig:icu-cif} make visible}.

The right-censored extension of \Cref{sec:em-alg,prop:e-step-censored} is implemented, checked against numerical tail integration, and exercised on the full $650$-record ICU sample, where it improves on the previously published fit to those data. What remains is a systematic synthetic study across censoring levels, censoring mechanisms and phase orders, of the kind reported in \Cref{sec:verification} for complete data; and, more substantially, the convergence behaviour of the relaxed-and-projected iteration. The ICU analysis makes the latter concrete: the fits reported above depend visibly on the starting point, and a single run from the default initialization stalls tens of nats below the multi-start best at every order tried. A principled starting rule, or a modification of the M-step that restores monotonicity, would matter more in practice than further extensions of the observation model.
\end{rcrevision}

\subsection{Software}
\label{sec:software}

\begin{rcrevision}
The computations above were carried out with two companion \textsf{Julia} packages.
\texttt{PhaseTypeDistributions.jl} constructs, evaluates and simulates phase-type and MAPH
distributions, representing a MAPH law as a \texttt{MAPHDist($\alpha$,T,D)} object in the
\texttt{Distributions.jl} interface. \texttt{PhaseTypeDistributionsFitting.jl} implements the
exact-event and right-censored E-steps of \Cref{prop:e-step,prop:e-step-censored}, the M-step
\eqref{eq:approx-m-step} and the initializations of \Cref{sec:init-heuristic}, and exposes them
through a \texttt{fit\_mle} interface that takes the exact \texttt{(time, cause)} pairs and the
right-censoring times as separate arguments, so that a censored record never carries a placeholder
cause label. Both are available at
\url{https://github.com/Julia-Matrix-Analytic-Probability/PhaseTypeDistributions.jl} and
\url{https://github.com/Julia-Matrix-Analytic-Probability/PhaseTypeDistributionsFitting.jl}.
\end{rcrevision}

%% file: sections/conclusion.tex
\section{Conclusion}
\label{sec:conclusion}

\begin{rcrevision}
We introduced the Multi-Absorbing Phase-Type (MAPH) family as a model for competing-risks data and developed \subchg{an} EM algorithm for exact event-time/cause records and independently right-censored records. The censored E-step conditions on survival to the censoring time and completes the subject's latent path through its eventual cause; the resulting cause-resolved sufficient-statistic expectations \subchg{feed the same closed-form M-step}, \subchg{and cost one block matrix exponential per censored record, whatever the number of causes.} Both E-steps are implemented in two companion Julia packages. On the intensive-care data of \Cref{sec:realdata}, analysed in full for the first time, the method attains a higher likelihood than the fit previously published for those data, at the same order and within the same model class.
\end{rcrevision}

\begin{rcrevision}
\subchg{Several directions remain open. The M-step being exact, the observed-data likelihood is non-decreasing along the iteration (\Cref{cor:monotone}), so the ascent guarantee that motivated much of the earlier discussion now holds. What monotone ascent does not deliver is convergence of the iterates themselves, nor convergence to a global maximum: the likelihood surface of a $\text{MAPH}_{m,n}$ law is not concave and the ICU fits show a marked sensitivity to the starting point, single-start runs stalling tens of nats below the best fit found by a multi-start panel. Stationary-point convergence under the usual EM regularity conditions, and a principled starting rule, are the natural next questions.} For the censoring extension specifically, the remaining priorities are an independent algebraic review of \Cref{prop:e-step-censored}, a systematic simulation study across censoring mechanisms and rates, and, for large or stiff generators, matrix-exponential actions or uniformization in place of the dense block exponentials used here. \subchg{Principled phase-order selection also remains open.}
\end{rcrevision}

\section*{Acknowledgements}

We thank Benoit Liquet for initial discussions. We also acknowledge the use of Claude Code (Anthropic) for assistance with formulations and proofreading.

%% file: sections/appendix.tex
\appendix
\appendixpage
\addappheadtotoc

\input{sections/initialization}
\input{sections/appendix-proofs-maph}
\input{sections/appendix-proofs-est}
\input{sections/appendix-proofs-cens}

%% file: sections/initialization.tex
\section{An Initialization Heuristic}
\label{sec:init-heuristic}

\begin{rcrevision}
This appendix presents the two initializations used to start the algorithm of \Cref{sec:em-alg}. For exact and right-censored data ${\cal D}=\{(y^\ell,\delta^\ell,k^\ell)\}_{\ell=1}^L$, partition the observed event indices by cause,
\[
\eta_k = \{\ell \in \{1,\ldots,L\}:\delta^\ell=1,~k^\ell = k\},
\qquad
k = 1,\ldots,n,
\]
and write $d_k=|\eta_k|$ and $d=\sum_{k=1}^n d_k$. We assume $d_k>0$ and use the observed-event proportions as heuristic routing weights,
\[
\hat{\pi}_k = \frac{d_k}{d},
\qquad
\text{with}
\qquad
\hat{\pi} := (\hat{\pi}_1,\ldots,\hat{\pi}_n).
\]
For the simplified initialization, use the exposure-per-event scale
\begin{equation}
\label{eq:censored-init-scale}
\bar\mu=\frac{\sum_{\ell=1}^L y^\ell}{d}.
\end{equation}
Its reciprocal is the all-cause exponential maximum-likelihood rate under independent right censoring, and it reduces to the ordinary sample mean when every record is an event. The raw event-time mean within cause $k$ is generally biased under censoring, so it must not be substituted for $\ex{\tau\mid\kappa=k}$. The moment-based initialization below is therefore used with censored data only when censoring-adjusted or external targets $(\hat\mu_k,\hat c_k^2)$ are supplied; otherwise the algorithm is started from the simplified initialization. For uncensored data, $\hat\mu_k$ and $\hat c_k^2$ retain their original meanings as the empirical conditional mean and SCV, and $\bar\mu=\hat\pi^\intercal\hat\mu$. \subchg{The phase budget below is filled greedily from $k=1$ upward, so we index causes in decreasing observed-event frequency in order to spend it on the most frequent causes,}
\begin{equation}
\label{eq:pi-order}
\hat{\pi}_{1} \ge \hat{\pi}_{2} \ge \ldots \ge \hat{\pi}_{n} > 0.
\end{equation}
The number of absorbing states $n$ is read off the data, while the number of transient phases $m \ge 1$ is a model size choice, as discussed in \Cref{sec:em-alg}.

The first initialization below is deliberately crude: it uses only $\hat\pi$ and $\bar\mu$, is compatible with right censoring, requires no minimal phase count, and yields strictly positive parameters. The second, moment-based initialization additionally matches supplied conditional means $\hat\mu_k$ and SCVs $\hat c_k^2$, cause by cause, when $m$ is large enough.
\end{rcrevision}

\subsection{Simplified initialization}
\label{sec:init-simplified}

\begin{rcrevision}
For $m\ge2$, fix a routing constant $\beta\in(0,1)$ (e.g.\ $\beta=\tfrac12$); for $m=1$, set $\beta=0$. Fix a rate $\omega>0$ as below and initialize, for $i\ne j$ and $k=1,\ldots,n$,
\end{rcrevision}
\begin{equation}
\label{eq:simplified-init}
\alpha_i = \frac{1}{m},
\qquad
T_{ii} = -\omega,
\qquad
T_{ij} = \frac{\beta\,\omega}{m-1},
\qquad
D_{ik} = (1-\beta)\,\omega\,\hat{\pi}_k.
\end{equation}
Each row of $[T ~\vert~ D]$ sums to zero, so this is a valid $\text{MAPH}_{m,n}$ parameterization; moreover all entries of $\alpha$, all off-diagonal entries of $T$, and all entries of $D$ are strictly positive, so \subchg{the assumptions of \Cref{sec:maph-dist} hold} and no parameter is initialized at the boundary of the parameter space.

Under \eqref{eq:simplified-init} the law of $(\tau,\kappa)$ is explicit. From every phase the sojourn is exponential with rate $\omega$, after which the chain moves to another transient phase with probability $\beta$ (uniformly) and absorbs with probability $1-\beta$, choosing cause $k$ with probability $\hat{\pi}_k$. Since all phases are exchangeable, the number of sojourns until absorption is geometric with success probability $1-\beta$, and a geometric sum of i.i.d.\ exponentials of rate $\omega$ is again exponential, with rate $(1-\beta)\omega$. Hence
\[
\tau \sim \mathrm{Exp}\big((1-\beta)\,\omega\big)
\quad \text{independently of} \quad \kappa,
\qquad
\Prob(\kappa = k) = \hat{\pi}_k.
\]
Choosing
\begin{equation}
\label{eq:omega-simplified}
\omega = \frac{1}{(1-\beta)\,\bar{\mu}}
\end{equation}
\begin{rcrevision}
sets the initialization's all-cause mean scale to $\ex{\tau}=\bar\mu$. For uncensored data this matches the overall sample mean; with censoring it matches the exposure-per-event scale \eqref{eq:censored-init-scale}. The routing probabilities are $\hat\pi$, and the common conditional distributions are exponential with mean $\bar\mu$, so every conditional SCV is initialized at $1$.
\end{rcrevision}

\paragraph{Symmetry breaking.} As constructed, \eqref{eq:simplified-init} is {\em exchangeable}: every phase has the same initial probability, sojourn rate, routing row and absorption row -- and the EM update preserves exchangeability, since each conditional expectation of the E-step then takes the same value in every phase, making every subsequent iterate exchangeable as well. An EM run started exactly at \eqref{eq:simplified-init} is therefore confined to the exchangeable submanifold, on which $\tau$ is exponential and independent of $\kappa$, and can never differentiate the phases; in practice such a run ``converges'' within a couple of iterations to that degenerate fit. A symmetry-breaking perturbation is therefore part of the specification, for $m \ge 2$. We perturb deterministically: for a jitter parameter $\delta \in (0,1)$ (default $\delta = \tfrac12$), row $i$ of $[T \,\vert\, D]$ is scaled by $1 + \delta z_i$, where $z_i = 2(i-1)/(m-1) - 1$ spreads $z_1,\ldots,z_m$ evenly over $[-1,1]$. Scaling whole rows changes the per-phase sojourn rates but not the embedded jump chain, so the matched absorption probabilities $\hat{\pi}$ are untouched; the generator is then rescaled by a common factor so that $\ex{\tau} = \bar{\mu}$ again holds exactly. Only the exponential shape of $\tau$ (and its independence of $\kappa$) becomes approximate.

\subsection{Moment based initialization}
\label{sec:init-moment}

\begin{rcrevision}
Given valid targets $(\hat\mu_k,\hat c_k^2)$---empirical for uncensored data or externally censoring-adjusted---we construct an $\text{MAPH}_{m,n}$ that attempts to match them. The construction routes the chain through a common front end into per-cause phase-type blocks. The transient phases are partitioned as $\cs = \cs_1 \cup \cs_2$ with $|\cs_1| \ge 1$. The phases of $\cs_1$ do not communicate with one another, and from every phase of $\cs_1$ the chain exits after an exponential sojourn of rate $\omega$; hence the time $\xi$ spent in $\cs_1$ satisfies $\xi \sim \mathrm{Exp}(\omega)$ regardless of the entry phase. Upon leaving $\cs_1$ the chain routes, with probabilities $\hat{\pi}_k$, either into a phase-type block built for cause $k$ (occupying phases of $\cs_2$) or, for causes for which no block is built, directly into the absorbing state $s^0_k$. Each block is a small phase-type distribution -- a two-phase hyper-exponential, a hypo-exponential (generalized Erlang), or a single exponential -- whose absorption leads only to its own cause. Thus, conditional on absorption through block $k$, the absorption time decomposes as $\tau = \xi + \tau_k$ with $\xi$ and $\tau_k$ independent and $\tau_k \sim \mathrm{PH}(\alpha_k, T_k)$, the duration of block $k$.
\end{rcrevision}

\paragraph{Block targets.} Since the front end contributes the independent summand $\xi \sim \mathrm{Exp}(\omega)$, the block for cause $k$ must not match $(\hat{\mu}_k, \hat{c}_k^2)$ itself, but corrected targets, as follows.

\begin{proposition}
\label{prop:init-correction}
Fix $\omega > 0$ and let $\xi \sim \mathrm{Exp}(\omega)$ be independent of $\tau_k$. Then $\ex{\xi + \tau_k} = \hat{\mu}_k$ and $\mathrm{SCV}(\xi+\tau_k) = \hat{c}_k^2$ hold if and only if the mean $\mu_k = \ex{\tau_k}$ and SCV $c_k^2 = \mathrm{SCV}(\tau_k)$ of the block duration are
\begin{equation}
\label{eq:mu_k-correction-omega}
    \mu_{k} = \hat{\mu}_{k} - \frac{1}{\omega},
\end{equation}
and,
\begin{equation}
\label{eq:scv_calculated}
    c_{k}^2 = \frac{\hat{c}_{k}^2(\omega \mu_{k}+1)^2-1}{(\omega \mu_{k})^2}.
\end{equation}
\end{proposition}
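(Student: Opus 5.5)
The plan is a direct moment computation using independence of $\xi$ and $\tau_k$, followed by solving for the block quantities. Since $\xi\sim\mathrm{Exp}(\omega)$, we have $\ex{\xi}=1/\omega$ and $\var{\xi}=1/\omega^2$. Linearity of expectation gives $\ex{\xi+\tau_k}=1/\omega+\mu_k$. Independence gives additivity of variances,
\[
\var{\xi+\tau_k}=\frac{1}{\omega^2}+c_k^2\mu_k^2 ,
\]
where we write $\var{\tau_k}=c_k^2\mu_k^2$ from the definition of the SCV as variance over squared mean.

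For the mean condition, $\ex{\xi+\tau_k}=\hat\mu_k$ is equivalent to $\mu_k=\hat\mu_k-1/\omega$, which is \eqref{eq:mu_k-correction-omega}. Substituting it, the SCV condition $\mathrm{SCV}(\xi+\tau_k)=\hat c_k^2$ reads
\[
\frac{1}{\omega^2}+c_k^2\mu_k^2=\hat c_k^2\Big(\mu_k+\frac1\omega\Big)^2 .
\]
Multiplying through by $\omega^2$ gives $1+c_k^2(\omega\mu_k)^2=\hat c_k^2(\omega\mu_k+1)^2$, and solving for $c_k^2$ gives \eqref{eq:scv_calculated}.

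Every step is an equivalence, so the ``if and only if'' follows by running the same chain in reverse. Suppose $\mu_k$ and $c_k^2$ are given by \eqref{eq:mu_k-correction-omega}--\eqref{eq:scv_calculated}. Then the mean is $\hat\mu_k$, and the variance is $\hat c_k^2\hat\mu_k^2$, so the SCV is $\hat c_k^2$.

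The only delicate point is well-definedness. The SCV of $\tau_k$ requires $\mu_k\neq0$, and a genuine block duration requires $\mu_k>0$, i.e.\ $\omega>1/\hat\mu_k$. I would state this as an implicit standing requirement of the equivalence, in which division by $\omega\mu_k$ is legitimate. I would also remark that $c_k^2$ being a feasible phase-type SCV is a separate matter, not part of this statement.
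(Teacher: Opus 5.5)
Your proposal is correct and follows the paper's proof: the mean comes from linearity, the variance from additivity under independence, and the formulas come from solving for $\mu_k$ and $c_k^2$. The paper normalizes the SCV ratio by $\mu_k^2$ before rearranging, whereas you multiply through by $\omega^2$; your explicit treatment of the converse and of the standing requirement $\mu_k>0$ (i.e.\ $\omega>1/\hat\mu_k$) makes precise what the paper leaves implicit in ``rearranging'' and imposes later as requirement (1) in the choice of $\omega$.
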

\begin{proof}
For the mean,
\[
\hat{\mu}_{k} = \EX[\xi+\tau_{k}] = \frac{1}{\omega}+\mu_k,
\]
which is \eqref{eq:mu_k-correction-omega}. For the SCV, using independence,
\[
\hat{c}^2_{k} = \frac{\var{\xi}+\var{\tau_{k}}}{\EX[\xi + \tau_k]^2}  =
\frac{\frac{1}{\omega^2}+\var{\tau_{k}}}{\frac{1}{\omega^2}+\frac{2}{\omega}\mu_k+\mu_k^2}
= \frac{\frac{1}{\omega^2 \mu_{k}^2}+c_{k}^2}{\frac{1}{\omega^2 \mu_{k}^2}+\frac{2}{\omega \mu_{k}}+1},
\]
and rearranging yields \eqref{eq:scv_calculated}.
\end{proof}

\paragraph{Choice of $\omega$.} The corrected targets are usable only if $\omega$ is large enough. Specifically we require, for every $k$: (1) the corrected mean is positive, $\mu_k > 0$; (2) the corrected SCV is positive, $c_k^2 > 0$; and (3) the correction does not change the sign of $c_k^2 - 1$ relative to $\hat{c}_k^2 - 1$, so that the family of block selected for cause $k$ (hyper- versus hypo-exponential, below) agrees with the empirical $\hat{c}_k^2$.

\begin{proposition}
\label{prop:init-omega}
Requirements (1)--(3) hold for every $k = 1,\ldots,n$ whenever $\omega > \max_{k} \omega^*_k$, where
\begin{equation}
\label{eq:omega-threshold}
\omega^*_k
=
\frac{1}{\hat{\mu}_k}
\times
\begin{cases}
1, & \hat{c}_k^2 \ge 1,\\[1ex]
\dfrac{1}{\sqrt{\hat{c}_k^2}}, & 0 < \hat{c}_k^2 < \tfrac12,\\[2ex]
\dfrac{1+\sqrt{2\hat{c}_k^2-1}}{1-\hat{c}_k^2}, & \tfrac12 \le \hat{c}_k^2 < 1.
\end{cases}
\end{equation}
Moreover, in the regime $\tfrac12 \le \hat{c}_k^2 < 1$ the threshold is sharp: there are values $\omega$ with $\omega\hat{\mu}_k > 1$ but $\omega < \omega^*_k$ for which $c_k^2 > 1$, violating (3).
\end{proposition}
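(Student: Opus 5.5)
The plan is to reduce everything to one positive scalar and then read off the three requirements. Put $x = \omega\mu_k$. By \eqref{eq:mu_k-correction-omega}, $x = \omega\hat\mu_k - 1$. Requirement (1), $\mu_k>0$, is therefore equivalent to $x>0$, that is, $\omega > 1/\hat\mu_k$. Assuming $x>0$, \eqref{eq:scv_calculated} becomes
\[
c_k^2 = \frac{\hat c_k^2 (x+1)^2 - 1}{x^2},
\qquad
c_k^2 - 1 = \frac{(\hat c_k^2-1)x^2 + 2\hat c_k^2 x + (\hat c_k^2 - 1)}{x^2}.
\]
Requirement (2) is then $\hat c_k (x+1) > 1$, i.e.\ $\omega\hat\mu_k > 1/\sqrt{\hat c_k^2}$. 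Requirement (3) is a sign condition on the quadratic
\[
p(x) = (\hat c_k^2-1)x^2 + 2\hat c_k^2 x + (\hat c_k^2-1),
\]
whose discriminant is $4\hat c_k^4 - 4(\hat c_k^2-1)^2 = 4(2\hat c_k^2 - 1)$. The proof then splits into the three regimes of \eqref{eq:omega-threshold}.

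\textbf{Case $\hat c_k^2 \ge 1$.} All coefficients of $p$ are non-negative and the linear coefficient is positive, so $p(x)>0$ for every $x>0$. Hence $c_k^2>1$, which gives (3), with the boundary value $\hat c_k^2=1$ read as belonging to the hyper-exponential family. It also gives (2), since $c_k^2>1>0$. So only (1) binds, and the threshold is $\omega^*_k = 1/\hat\mu_k$.

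\textbf{Case $0<\hat c_k^2<\tfrac12$.} The leading coefficient of $p$ is negative and the discriminant is negative, so $p<0$ everywhere. Requirement (3) is then automatic, and the binding constraint is (2), $\omega\hat\mu_k > 1/\sqrt{\hat c_k^2}$. Because $\hat c_k^2 < 1$, this threshold exceeds $1/\hat\mu_k$, so it also implies (1).

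\textbf{Case $\tfrac12 \le \hat c_k^2 < 1$.} Now $p$ opens downward and has real roots
\[
x_\pm = \frac{\hat c_k^2 \pm \sqrt{2\hat c_k^2-1}}{1-\hat c_k^2},
\]
both positive since their product is $1$ and their sum $2\hat c_k^2/(1-\hat c_k^2)$ is positive. On $x>0$ we have $p(x)<0$ exactly when $x<x_-$ or $x>x_+$, and (3) needs $p<0$. Requiring $x>x_+$ and adding $1$ gives
\[
\omega\hat\mu_k > x_+ + 1 = \frac{1+\sqrt{2\hat c_k^2-1}}{1-\hat c_k^2},
\]
which is the stated threshold. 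It remains to check that (2) is implied. The threshold is at least $1/(1-\hat c_k^2)$, so it suffices that $\hat c_k/(1-\hat c_k^2) \ge 1$, i.e.\ $\hat c_k + \hat c_k^2 \ge 1$. This holds because $\hat c_k \ge 1/\sqrt2$ gives $\hat c_k + \hat c_k^2 \ge 1/\sqrt2 + 1/2 > 1$.

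\textbf{Sharpness.} For $x\in(x_-,x_+)$ we have $p(x)>0$, so $c_k^2>1$ and (3) fails. Every such $x$ is positive, so the corresponding $\omega = (x+1)/\hat\mu_k$ satisfies $\omega\hat\mu_k > 1$ while staying below $\omega^*_k$. This interval is nonempty only when $2\hat c_k^2 - 1 > 0$. At $\hat c_k^2 = \tfrac12$ the two roots coincide, so the sharpness claim should be stated for $\hat c_k^2 > \tfrac12$, with the endpoint $\hat c_k^2=\tfrac12$ noted as a degenerate case.

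Taking $\omega > \max_k \omega^*_k$ then secures all three requirements for every cause simultaneously. The algebra is routine. I expect the main obstacle to be the boundary conventions: how the sign condition (3) is read at $\hat c_k^2 = 1$, and the collapse of the sharpness interval at $\hat c_k^2 = \tfrac12$.
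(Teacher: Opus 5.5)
Your proof is correct and is essentially the paper's. The paper also reduces to one scalar, $\psi=\omega\hat\mu_k$ (your $x+1$), and runs the same three-regime sign analysis of the quadratic controlling $c_k^2-1$; it differs only in how it places the roots relative to $1$ (via $g(1)<0$ rather than Vieta) and how it checks requirement (2) (the numerator equals $(\psi_+-1)^2$ at $\psi_+$, rather than your $\hat c_k+\hat c_k^2\ge1$).

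Both boundary caveats you raise are right. The paper silently places $\hat c_k^2=1$ on the hyper-exponential side when it calls (3) automatic. At $\hat c_k^2=\tfrac12$ one has $\psi_-=\psi_+=2$, so the paper's sharpness step $1<\psi_-<\psi_+$, and with it the claimed existence of $\omega<\omega^*_k$ giving $c_k^2>1$, holds only for $\hat c_k^2>\tfrac12$.
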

\begin{proof}
Fix $k$ and write $\psi = \omega\hat{\mu}_k$. By \eqref{eq:mu_k-correction-omega}, $\omega\mu_k = \psi - 1$, so requirement (1) is $\psi > 1$, and \eqref{eq:scv_calculated} becomes
\begin{equation}
\label{eq:scv-psi}
c_k^2 = \frac{\hat{c}_k^2\,\psi^2 - 1}{(\psi-1)^2}.
\end{equation}
Hence requirement (2) reads $\psi > 1/\sqrt{\hat{c}_k^2}$, and
\[
c_k^2 - 1 = \frac{g(\psi)}{(\psi-1)^2},
\qquad
g(\psi) := (\hat{c}_k^2-1)\,\psi^2 + 2\psi - 2 .
\]
We consider the three regimes of \eqref{eq:omega-threshold}.

If $\hat{c}_k^2 \ge 1$ then for $\psi > 1$ we have $g(\psi) \ge 2\psi - 2 > 0$, so $c_k^2 > 1 > 0$: requirements (2) and (3) are automatic and only (1) binds, giving $\omega^*_k = 1/\hat{\mu}_k$.

If $\hat{c}_k^2 < 1$ then $g$ is a concave parabola with $g(1) = \hat{c}_k^2 - 1 < 0$ and discriminant $4(2\hat{c}_k^2-1)$. When $\hat{c}_k^2 < \tfrac12$ the discriminant is negative, so $g < 0$ everywhere and $c_k^2 < 1$ for every $\psi$: requirement (3) is automatic, and the binding requirement is (2), $\psi > 1/\sqrt{\hat{c}_k^2}$, which also implies (1) since $1/\sqrt{\hat{c}_k^2} > 1$.

When $\tfrac12 \le \hat{c}_k^2 < 1$ the parabola $g$ has real roots
\[
\psi_\pm = \frac{1 \pm \sqrt{2\hat{c}_k^2-1}}{1-\hat{c}_k^2},
\]
with $g > 0$ on $(\psi_-,\psi_+)$ and $g < 0$ outside $[\psi_-,\psi_+]$. For $\psi > \psi_+$ we therefore have $c_k^2 < 1$, which is (3). For (2), note that at $\psi = \psi_+$ equation \eqref{eq:scv-psi} gives $c_k^2 = 1$, so its numerator satisfies $\hat{c}_k^2\psi_+^2 - 1 = (\psi_+-1)^2 > 0$; since the numerator is increasing in $\psi$, it remains positive for all $\psi \ge \psi_+$, giving $c_k^2 > 0$. For (1), $\psi_+ \ge 1/(1-\hat{c}_k^2) > 1$. Hence $\omega^*_k = \psi_+/\hat{\mu}_k$ suffices. For sharpness, $g(1) < 0$ places $1$ outside the interval $(\psi_-,\psi_+)$, so $1 < \psi_- < \psi_+$; any $\psi \in (\psi_-,\psi_+)$ satisfies (1) yet has $g(\psi) > 0$, i.e.\ $c_k^2 > 1$, violating (3).
\end{proof}

Any $\omega$ exceeding the bound may be used, but the multiple matters. Small multiples place the construction far from the per-cause moment match: at $\omega = 2\max_k \omega^*_k$, for example, a cause with $\hat{c}_k^2 \ge 1$ has $\omega \approx 2/\hat{\mu}_k$, so the exponential front end consumes about half of its conditional mean, and empirically the EM started there tends to land in poorer local optima. Larger multiples shrink the front-end share $1/\omega$ of each conditional mean and drive $c_k^2 \to \hat{c}_k^2$, at the cost of a stiffer generator (larger rates) in the matrix-exponential computations of the E-step. Our implementation defaults to $\omega = 10\max_k \omega^*_k$, which we have found to balance the two effects well.

\paragraph{Block sizes and the phase budget.} With $\omega$ fixed, compute the corrected targets $(\mu_k, c_k^2)$ for all $k$. The number of phases needed by the block for cause $k$ is
\begin{equation}\label{eq:num_phase}
m_{k} = \begin{cases}
2, & c^2_{k}  > 1, \qquad\qquad (\text{hyper-exponential})\\[0.5ex]
\big\lceil 1/c^2_{k} \big\rceil, & 0 < c^2_k \le 1. \qquad (\text{hypo-exponential})
\end{cases}
\end{equation}
Note that $\lceil 1/c_k^2 \rceil = 1$ exactly when $c_k^2 = 1$, in which case the block is a single exponential phase. Since at least one phase is reserved for the front end $\cs_1$, at most $m-1$ phases are available for blocks, and we build blocks for as many causes as the budget allows, in the priority order \eqref{eq:pi-order}:
\begin{equation}
\label{eq:num-fitted}
p = \max\Big\{p' \in \{0,1,\ldots,n\} ~:~ \sum_{k=1}^{p'} m_k \le m-1\Big\},
\end{equation}
with the empty sum equal to zero, so $p$ is well defined. We then set $|\cs_2| = \sum_{k=1}^p m_k$ and $m_1 := |\cs_1| = m - |\cs_2| \ge 1$. If $p = n$, every cause is moment matched (and any leftover phases simply enlarge the front end with additional, exchangeable, entry phases). If $p < n$, the causes $k = p+1,\ldots,n$ are matched in absorption probability only: they absorb directly out of $\cs_1$, so their conditional time is initialized as $\mathrm{Exp}(\omega)$. In the extreme case $p = 0$ no blocks are built at all and, comparing with \Cref{sec:init-simplified}, the construction reduces to the simplified initialization with $\beta = 0$ (and a different choice of $\omega$); in that case the simplified initialization of \Cref{sec:init-simplified} is preferable.

\paragraph{Hyper-exponential blocks.} For $c_k^2 > 1$ the block is a two-phase hyper-exponential: an exponential of rate $\nu_{k1}/\mu_k$ with probability $a_k$, mixed with an exponential of rate $\nu_{k2}/\mu_k$ with probability $1-a_k$, where $(a_k, \nu_{k1}, \nu_{k2})$ describe the unit-mean version of the block. Matching the unit mean and the second moment (which equals $c_k^2+1$ for a unit-mean random variable with SCV $c_k^2$) requires
\begin{equation}
\label{eq:hyper-equations}
\frac{a_{k}}{\nu_{k1}}+\frac{1-a_{k}}{\nu_{k2}} = 1
\qquad\text{and}\qquad
\frac{a_{k}}{\nu_{k1}^2}+\frac{1-a_{k}}{\nu_{k2}^2} = \frac{c_{k}^2+1}{2}.
\end{equation}
These are two equations in three unknowns; we pin the remaining degree of freedom by fixing $\nu_{k1} = \frac{1}{c_k^2+1}$, under which the unique solution of \eqref{eq:hyper-equations} is
\begin{equation}
\label{eq:hyper-solution}
a_{k} = \frac{c^2_{k}-1}{(c_{k}^2+1)(2c_{k}^2-1)},
\qquad
\nu_{k2} =  \frac{2c_{k}^2}{c_{k}^2+1},
\end{equation}
as is verified by direct substitution. For $c_k^2 > 1$ we have $a_k \in (0,1)$ and $\nu_{k1},\nu_{k2} > 0$, and as $c_k^2 \downarrow 1$ the block degenerates continuously to a single unit-rate exponential ($a_k \to 0$, $\nu_{k2} \to 1$). The phase-type representation of the block is
\[
\alpha_{k} = [a_{k}~~~1-a_{k}],
\qquad
T_{k} =
\frac{1}{\mu_{k}}
\begin{bmatrix}
-\nu_{k1} & 0 \\
0 & -\nu_{k2}
\end{bmatrix}.
\]

\paragraph{Hypo-exponential blocks.} For $0 < c_k^2 \le 1$ the block is a hypo-exponential (generalized Erlang): a sum of $m_k = \lceil 1/c_k^2 \rceil$ independent exponentials, the first of rate $\nu_{k1}/\mu_k$ and the remaining $m_k - 1$ of common rate $\nu_{k2}/\mu_k$, where the unit-mean rates are
\begin{equation}
\label{eq:hypo-rates}
\nu_{k1} = \frac{m_k}{1+\sqrt{(m_k-1)(m_k c_{k}^2-1)}},
\qquad
\nu_{k2} =  \frac{(m_k-1)\,\nu_{k1}}{\nu_{k1}-1}.
\end{equation}
The radicand is non-negative since $m_k c_k^2 \ge 1$ by the choice of $m_k$ in \eqref{eq:num_phase}, and for $c_k^2 < 1$ we have $\nu_{k1} > 1$ so that $\nu_{k2} > 0$; when $m_k = 1$ (i.e.\ $c_k^2 = 1$) the block is the single unit-rate exponential $\nu_{k1} = 1$ and $\nu_{k2}$ is not needed. A direct computation confirms the unit mean and the SCV,
\[
\frac{1}{\nu_{k1}} + \frac{m_k-1}{\nu_{k2}} = 1,
\qquad
\frac{1}{\nu_{k1}^2} + \frac{m_k-1}{\nu_{k2}^2} = c_k^2 .
\]
The phase-type representation of the block is $\alpha_k = [1~~0~~\cdots~~0]$ and the $m_k \times m_k$ bidiagonal sub-generator
\[
T_{k} = \frac{1}{\mu_{k}}
\begin{bmatrix}
-\nu_{k1} & \nu_{k1} & & \\
 & -\nu_{k2} & \ddots & \\
 & & \ddots & \nu_{k2}\\
 & & & -\nu_{k2}
\end{bmatrix},
\]
with absorption (at rate $\nu_{k2}/\mu_k$) out of the last phase only.

\paragraph{Assembly.} Order the phases with $\cs_1$ first, followed by the blocks $k = 1,\ldots,p$, and let $d_k = -T_k \mathbf{1}_{m_k}$ denote the exit-rate (column) vector of block $k$. The initialized $\text{MAPH}_{m,n}(\alpha, T, D)$ is
\begin{equation}
\label{eq:init-assembly}
\alpha = \Big[\tfrac{1}{m_1}\mathbf{1}_{m_1}^\top ~~\Big\vert~~ 0 ~\cdots~ 0\Big],
\qquad
T=\begin{bmatrix}
-\omega I_{m_1} & \omega\hat{\pi}_1\,\mathbf{1}_{m_1}\alpha_1 & \cdots & \omega\hat{\pi}_p\,\mathbf{1}_{m_1}\alpha_p\\
 & T_1 & &\\
 & & \ddots &\\
 & & & T_p
\end{bmatrix},
\end{equation}
where blank blocks are zero, and the $m \times n$ matrix $D$ has the following non-zero entries: in the rows of $\cs_1$, $D_{ik} = \omega\hat{\pi}_k$ for the unfitted causes $k = p+1,\ldots,n$ (these columns are zero elsewhere); and in the rows of block $k \le p$, the $k$'th column equals $d_k$ (these rows are zero elsewhere). Each $\cs_1$ row of $[T \,\vert\, D]$ sums to $-\omega + \omega\sum_{k\le p}\hat{\pi}_k\,\alpha_k\mathbf{1}_{m_k} + \omega\sum_{k>p}\hat{\pi}_k = 0$, and each block row sums to zero by $T_k\mathbf{1}_{m_k} + d_k = 0$, so \eqref{eq:init-assembly} is a valid parameterization.

\begin{proposition}
\label{prop:init-matching}
Let $\omega$ satisfy the bound of \Cref{prop:init-omega} and let $p$ be as in \eqref{eq:num-fitted}. The $\text{MAPH}_{m,n}$ of \eqref{eq:init-assembly} satisfies
\[
\Prob(\kappa = k) = \hat{\pi}_k, \quad k = 1,\ldots,n;
\qquad
\ex{\tau \mid \kappa = k} = \hat{\mu}_k
~~\text{and}~~
\mathrm{SCV}(\tau \mid \kappa = k) = \hat{c}_k^2, \quad k \le p;
\]
while $\tau \mid \kappa = k \sim \mathrm{Exp}(\omega)$ for $k > p$.
\end{proposition}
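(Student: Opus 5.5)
The plan is to argue probabilistically through the path structure of the assembled chain rather than through the matrix formulas of \Cref{prop:maph-props}. The chain starts uniformly in $\cs_1$, and since the phases of $\cs_1$ do not communicate, it spends a sojourn $\xi\sim\mathrm{Exp}(\omega)$ in its entry phase and then leaves $\cs_1$. Reading off row $i\in\cs_1$ of \eqref{eq:init-assembly}, the exit is routed into block $k\le p$ with probability $\omega\hat\pi_k\alpha_k\mathbf 1_{m_k}/\omega=\hat\pi_k$, entering its phases according to $\alpha_k$. It is routed directly to $s^0_k$, for $k>p$, with probability $\hat\pi_k$. The routing destination is chosen independently of $\xi$, because the jump chain is independent of the holding times. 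No block leaves its own phases except to absorb: the rows of block $k$ in $T$ contain only $T_k$, and the rows of block $k$ in $D$ have only the $k$'th column, equal to $d_k$. Hence a chain entering block $k$ is absorbed in $s^0_k$ with probability one, after a further time $\tau_k\sim\mathrm{PH}(\alpha_k,T_k)$. By the strong Markov property at the exit time from $\cs_1$, this $\tau_k$ is independent of $\xi$.

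From this description I would conclude the three claims.

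\textbf{Absorption probabilities.} The event $\{\kappa=k\}$ coincides with the event that the routing from $\cs_1$ selects $k$, so $\Prob(\kappa=k)=\hat\pi_k$ for every $k$.

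\textbf{Unmodelled causes, $k>p$.} Here $\tau=\xi$ on $\{\kappa=k\}$. Since $\xi$ is independent of the routing choice, $\tau\mid\kappa=k\sim\mathrm{Exp}(\omega)$.

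\textbf{Modelled causes, $k\le p$.} Here $\tau=\xi+\tau_k$ conditionally on $\kappa=k$, with $\xi\sim\mathrm{Exp}(\omega)$ and $\tau_k$ independent. The conditional law of $\tau_k$ is unaffected by the conditioning on $\{\kappa=k\}$, which is just the routing event. The "if" direction of \Cref{prop:init-correction} then gives $\ex{\tau\mid\kappa=k}=\hat\mu_k$ and $\mathrm{SCV}(\tau\mid\kappa=k)=\hat c_k^2$, provided the block duration has mean $\mu_k$ and SCV $c_k^2$ as defined by \eqref{eq:mu_k-correction-omega} and \eqref{eq:scv_calculated}.

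The remaining step, which I expect to be the main obstacle, is to verify that each block has the stated moments in its regime. The block is built on unit-mean rates scaled by $1/\mu_k$, so scaling gives mean $\mu_k$ and leaves the SCV unchanged. It therefore suffices to check unit mean and SCV $c_k^2$ for the unit-mean version. \Cref{prop:init-omega} guarantees $\mu_k>0$ and $c_k^2>0$, and also that $c_k^2-1$ has the sign of $\hat c_k^2-1$. This makes the choice between the hyper- and hypo-exponential forms in \eqref{eq:num_phase} consistent and keeps all rates positive.

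\textbf{Hyper-exponential regime, $c_k^2>1$.} I substitute $\nu_{k1}=1/(c_k^2+1)$ and \eqref{eq:hyper-solution} into \eqref{eq:hyper-equations}, and confirm $a_k\in(0,1)$. This is a direct algebraic check.

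\textbf{Hypo-exponential regime, $0<c_k^2\le1$.} I check that \eqref{eq:hypo-rates} gives $1/\nu_{k1}+(m_k-1)/\nu_{k2}=1$ and $1/\nu_{k1}^2+(m_k-1)/\nu_{k2}^2=c_k^2$. Write $x=1/\nu_{k1}$. The first relation fixes $(m_k-1)/\nu_{k2}=1-x$, so the second becomes the quadratic
\[
x^2+\frac{(1-x)^2}{m_k-1}=c_k^2 .
\]
Its relevant root is $x=\big(1+\sqrt{(m_k-1)(m_kc_k^2-1)}\big)/m_k$, which matches \eqref{eq:hypo-rates}. The radicand is non-negative because $m_k\ge 1/c_k^2$. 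For $c_k^2=1$ the block is the single exponential with $m_k=1$, so it needs no further check.

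Finally, the budget condition \eqref{eq:num-fitted} ensures that the phases fit and that $m_1\ge1$, so the assembly is well defined and the path argument applies.
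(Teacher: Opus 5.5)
Your proposal is correct and follows essentially the same argument as the paper. The chain starts in $\cs_1$ and spends an $\mathrm{Exp}(\omega)$ front-end sojourn, independent of a routing step that selects each cause with probability $\hat\pi_k$; a chain routed into block $k$ is absorbed in $s^0_k$ with certainty, and the ``if'' direction of \Cref{prop:init-correction} is then applied to $\tau=\xi+\tau_k$. The one difference is that you re-verify the block moments explicitly, namely the hyper-exponential substitution and the root of $x^2+(1-x)^2/(m_k-1)=c_k^2$ for the hypo-exponential rates, whereas the paper simply takes them as given by the block constructions.
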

\begin{proof}
The chain starts in $\cs_1$, holds for $\xi \sim \mathrm{Exp}(\omega)$, and then routes: into block $k$ (entering according to $\alpha_k$) with probability $\hat{\pi}_k$ for $k \le p$, or directly to $s^0_k$ with probability $\hat{\pi}_k$ for $k > p$; the routing is independent of $\xi$. A path routed into block $k$ is absorbed in $s^0_k$ with probability one, so $\Prob(\kappa=k) = \hat{\pi}_k$ for every $k$. Given routing into block $k \le p$, $\tau = \xi + \tau_k$ with $\xi$ and $\tau_k$ independent and $\tau_k \sim \mathrm{PH}(\alpha_k, T_k)$ of mean $\mu_k$ and SCV $c_k^2$ by the block constructions; \Cref{prop:init-correction} then gives the conditional mean $\hat{\mu}_k$ and conditional SCV $\hat{c}_k^2$. Given routing directly to $s^0_k$ with $k > p$, $\tau = \xi \sim \mathrm{Exp}(\omega)$.
\end{proof}

\paragraph{Regularization.} \subchg{As constructed, $\alpha$ vanishes on $\cs_2$, so the phases of $\cs_2$ carry no initial mass and the M-step ratio $\hat\rho_{ik}=\sum_\ell b^\ell_{ik}/\sum_\ell b^\ell_i$ of \eqref{eq:approx-m-step} has a vanishing denominator there. We therefore regularize with a small constant $\epsilon>0$, replacing $\alpha$ by $(1-\epsilon)\,\alpha + \tfrac{\epsilon}{m}\mathbf{1}_m^\top$. After regularization every phase carries positive initial probability, which is all the M-step requires: the remaining denominators count expected visits to a phase over all causes, and are positive as soon as the phase is visited. That block $k$ leads only to $s^0_k$, so that $\rho_{ik}=0$ off the block, is harmless -- these are exactly the vanishing absorption probabilities admitted by \Cref{lem:cond-jump-probs}. The construction is exact at $\epsilon=0$ and its law is continuous in $\epsilon$, so a small value trades a small perturbation of the matched quantities of \Cref{prop:init-matching} for the regularity required by the EM iteration.}

\subsection{A worked example}
\label{sec:init-example}

We illustrate the moment-based initialization on the running $\text{MAPH}_{4,3}$ of \Cref{ex:running}, the ground truth of \Cref{sec:verification}. Drawing $L=5000$ competing-risks pairs from \eqref{eq:verif-truth} and forming the empirical summaries at the start of this appendix gives
\[
\hat\pi = (0.378,\,0.271,\,0.351),\qquad
\hat\mu = (0.534,\,0.625,\,0.750),\qquad
\hat c^2 = (1.19,\,0.95,\,0.88).
\]
The \subchg{most frequent} cause is the healthy discharge $k=1$, whose conditional SCV exceeds one, while the two further-care causes have conditional SCV below one. The thresholds of \eqref{eq:omega-threshold} are $\omega^* = (1.9,\,56.3,\,21.6)$, dominated by the type-2 cause whose $\hat c^2_2\approx0.95$ lies just below one -- the regime $\tfrac12\le\hat c^2<1$, in which $\omega^*_k$ grows without bound as $\hat c^2_k\uparrow1$ -- so the default $\omega = 10\max_k\omega^*_k \approx 563$.

With $m=5$ phases, at most $m-1=4$ are available for blocks. The corrected targets \eqref{eq:scv_calculated} are $c^2 = (1.20,\,0.95,\,0.89)$, so by \eqref{eq:num_phase} each cause requests a two-phase block: a hyper-exponential for $k=1$ (since $c^2_1>1$) and a hypo-exponential for each of $k=2,3$ (since $c^2<1$). Taking causes in order of decreasing empirical frequency -- $k=1$, then $k=3$, then $k=2$ -- the budget admits the first two blocks ($2+2=4$ phases) but not the third, so $p=2$: the healthy cause $k=1$ and the type-3 cause $k=3$ are moment matched, while the type-2 cause -- last in priority, and the one inflating $\omega$ -- absorbs directly out of the single front-end phase as $\mathrm{Exp}(\omega)$. The resulting (regularized) initialization is $\alpha = (0.99,\,0.00,\,0.00,\,0.00,\,0.00)$ together with
\[
{\small
T \approx \begin{pmatrix}
-562.80 & 13.64 & 198.98 & 197.43 & 0\\
0.02 & -0.87 & 0 & 0 & 0\\
0.02 & 0 & -2.06 & 0 & 0\\
0.02 & 0 & 0 & -1.44 & 1.42\\
0.02 & 0 & 0 & 0 & -22.48
\end{pmatrix},
\qquad
D \approx \begin{pmatrix}
0 & 152.74 & 0\\
0.86 & 0 & 0\\
2.05 & 0 & 0\\
0 & 0 & 0\\
0 & 0 & 22.47
\end{pmatrix}.}
\]
Here phase $1$ is the front end; phases $2$--$3$ form the hyper-exponential block for $k=1$ (two parallel exponentials, both absorbing into $s^0_1$); and phases $4$--$5$ form the bidiagonal hypo-exponential block for $k=3$ (absorbing into $s^0_3$ out of phase $5$). The entry $D_{12}\approx153 = \omega\hat\pi_2$ routes the front end directly to the unmatched cause $k=2$, and the small column-$1$ entries of $T$ are the feedback $\vartheta$ of the regularization. By \Cref{prop:init-matching} the induced law reproduces the absorption probabilities and matches the conditional mean and SCV of the two fitted causes -- numerically $\Prob(\kappa=k) = (0.378,\,0.271,\,0.350)$, $\ex{\tau\mid\kappa=k} = (0.54,\,0.007,\,0.74)$ and $\mathrm{SCV} = (1.19,\,154,\,0.89)$ -- while the conditional time of the unmatched cause $k=2$ is the front-end $\mathrm{Exp}(\omega)$, with near-zero mean and a large SCV.

This single initialization thus exercises both block families: the hyper-exponential, for the supra-unit-SCV \subchg{leading} cause, and the hypo-exponential, for a sub-unit-SCV cause. It also exposes two practical points. First, the phase budget binds: at the order $m=4$ actually used to fit this law in \Cref{sec:verification}, only \subchg{the first} block fits and just the cause $k=1$ is moment matched, so capturing a hypo-exponential block as well requires $m\ge5$. Second, a conditional SCV close to one (here $\hat c^2_2\approx0.95$) forces a large $\omega$ through \eqref{eq:omega-threshold}, which both leaves that cause poorly timed while it is unmatched and yields a stiff generator; the subsequent EM iterations correct the conditional times, but a less aggressive $\omega$ rule would be a worthwhile refinement.

%% file: sections/appendix-proofs-maph.tex
\section{Proofs: MAPH distributions}
\label{app:proofs-maph}
\label{app:proof-maph-props}

This appendix proves the results stated in \Cref{sec:maph-dist}.

In this appendix we prove \Cref{prop:maph-props} \aachg{and \Cref{prop:maph-censored}}.

Throughout, $\zeta=(\tau,\kappa)$ is the random pair induced by the MJP $\{X(t)\}_{t\ge0}$ on $\overline{\cs}=\cs\cup\cs^0$ with generator $Q$ parameterized by $(\alpha,T,D)$ as in \Cref{sec:maph-dist}, and $D_k$ denotes the $k$'th column of $D$. As established preceding \Cref{prop:maph-props}, $-T$ is a non-singular $M$-matrix, so $T^{-1}$ exists, $-T^{-1}\ge0$ entrywise, and every eigenvalue of $T$ has strictly negative real part; in particular $e^{Tu}\to 0$ as $u\to\infty$ and the integrals below converge.

\paragraph{Transient transition probabilities.}
For transient states $s_i,s_j\in\cs$ let
\[
P_{ij}(u) \;=\; \Prob\big(X(u)=s_j,\ \tau>u \;\big|\; X(0)=s_i\big)
\]
be the probability that the chain is in transient state $s_j$ at time $u$ and has not yet been absorbed, and collect these in the $m\times m$ matrix $P(u)=\big(P_{ij}(u)\big)$. Since the absorbing states are never left, the chain stays unabsorbed only by moving among the transient states, whose transition rates are the entries of $T$. The forward (Kolmogorov) equations of the chain, restricted to the transient states, are therefore $P'(u)=P(u)\,T$ with $P(0)=I$, whose unique solution is the matrix exponential
\begin{equation}\label{eq:transient-prob}
P(u)=e^{Tu}.
\end{equation}

\paragraph{(ii) Sub-density.}
The event $\{\tau\in[u,u+\mathrm{d}u),\ \kappa=k\}$ occurs when the chain is in some transient state $s_j$ just before $u$ and jumps to the absorbing state $s^0_k$ in $[u,u+\mathrm{d}u)$, an event of probability $\mu_{jk}\,\mathrm{d}u=D_{jk}\,\mathrm{d}u$. Conditioning on the initial distribution $\alpha$ and summing over $s_j$,
\[
f(u,k)\,\mathrm{d}u \;=\; \sum_{i=1}^m \alpha_i \sum_{j=1}^m P_{ij}(u)\, D_{jk}\,\mathrm{d}u
\;=\; \alpha\, e^{Tu} D_k\,\mathrm{d}u,
\]
using \eqref{eq:transient-prob}. Hence $f(u,k)=\alpha e^{Tu}D_k$.

\paragraph{(i) Sub-distribution and all-cause distribution functions.}
Integrating the sub-density and using that $T^{-1}$ and $e^{Tu}$ commute,
\[
F(u,k)=\int_0^u f(s,k)\,\mathrm{d}s
=\alpha\Big(\int_0^u e^{Ts}\,\mathrm{d}s\Big)D_k
=\alpha\, T^{-1}\big(e^{Tu}-I\big)D_k
=-\alpha\big(I-e^{Tu}\big)T^{-1}D_k.
\]
Summing over $k$ and using $D\mathbf{1}_n=-T\mathbf{1}_m$,
\[
F(u)=\sum_{k=1}^n F(u,k)
=-\alpha\big(I-e^{Tu}\big)T^{-1}D\mathbf{1}_n
=\alpha\big(I-e^{Tu}\big)\mathbf{1}_m
=1-\alpha e^{Tu}\mathbf{1}_m,
\]
where the last step uses $\alpha\mathbf{1}_m=1$.

\paragraph{(iii) Marginal absorption probability and conditional density.}
Letting $u\to\infty$ in (i) and using $e^{Tu}\to0$,
\[
\Prob(\kappa=k)=F(\infty,k)=-\alpha T^{-1}D_k,
\]
which is the $k$'th entry of $\alpha R$ with $R=-T^{-1}D$. \subchg{For a cause with $\Prob(\kappa=k)>0$}, the conditional density of $\tau$ given $\kappa=k$ is then $f(u\mid \kappa=k)=f(u,k)/\Prob(\kappa=k)$ by definition of conditional density.

\paragraph{(iv) Cause-specific hazard.}
In the competing-risks framework the cause-specific hazard for cause $k$ is
\[
H(u,k)=\lim_{\Delta\downarrow0}\frac{1}{\Delta}\,\Prob\big(\tau\in[u,u+\Delta),\ \kappa=k \;\big|\; \tau>u\big)
=\frac{f(u,k)}{\Prob(\tau>u)}=\frac{f(u,k)}{1-F(u)}.
\]
Note that the denominator is the all-cause survival function $1-F(u)=\alpha e^{Tu}\mathbf{1}_m$ from (i), \emph{not} the cause-specific tail. Substituting (ii) gives $H(u,k)=\dfrac{\alpha e^{Tu}D_k}{\alpha e^{Tu}\mathbf{1}_m}$.

\paragraph{(v) Laplace--Stieltjes transform.}
For $s\ge0$ every eigenvalue of $T-sI$ still has strictly negative real part (subtracting $s$ shifts the eigenvalues of $T$ further left), so the integral below converges and
\[
\phi(s,k)=\int_0^\infty e^{-su} f(u,k)\,\mathrm{d}u
=\alpha\Big(\int_0^\infty e^{(T-sI)u}\,\mathrm{d}u\Big)D_k
=-\alpha\,(T-sI)^{-1}D_k
=\alpha\,(sI-T)^{-1}D_k.
\]

\paragraph{(vi) Moment generating function.}
For $s$ in a neighbourhood of $0$ (small enough that every eigenvalue of $T+sI$ still has strictly negative real part), the same computation with $-s$ in place of $s$ gives
\[
M(s,k)=\int_0^\infty e^{su} f(u,k)\,\mathrm{d}u
=-\alpha\,(sI+T)^{-1}D_k.
\]

\paragraph{(vii) Partial moments.}
Using $\int_0^\infty u^j e^{Tu}\,\mathrm{d}u = j!\,(-T)^{-(j+1)}=(-1)^{j+1}j!\,T^{-(j+1)}$, valid since every eigenvalue of $T$ has strictly negative real part,
\[
\mathcal{M}_{j,k}=\int_0^\infty u^j f(u,k)\,\mathrm{d}u
=\alpha\Big(\int_0^\infty u^j e^{Tu}\,\mathrm{d}u\Big)D_k
=(-1)^{j+1}\,j!\,\alpha\,T^{-(j+1)}D_k.
\]
This completes the proof. \qed

\begin{proof}[Proof of \Cref{lem:cond-jump-probs}]
Fix $i \neq j$. Conditional probabilities given $\{I_0 = s_i\}$ are understood with respect to the
jump chain started at $s_i$, so that no assumption on $\alpha_i$ is required; the conditioning event
of \textup{(i)} then has probability $\rho_{ik} > 0$. Under this proviso,
\begin{align*}
p^\lambda_{ij|k}=\mathbb{P}(I_{1}=s_j\vert I_0=s_i, I_{\infty}=s_k^0) =&\frac{\mathbb{P}(I_{1}=s_j, I_0=s_i, I_{\infty}=s^0_k)}{\mathbb{P}(I_0=s_i, I_{\infty}=s_k^0)}\\
=&\frac{\mathbb{P}(I_0=s_i)\mathbb{P}(I_{1}=s_j\vert I_0=s_i)\mathbb{P}(I_{\infty}=s_k^0\vert I_{1}=s_j,I_0=s_i)}{\mathbb{P}(I_0=s_i)\mathbb{P}(I_{\infty}=s^0_k\vert I_0=s_i)}\\
=& \mathbb{P}(I_{1}=s_j\vert I_0=s_i) \frac{\mathbb{P}(I_{\infty}=s^0_k\vert I_{0}=s_j)}{\mathbb{P}(I_{\infty}=s^0_k\vert I_0=s_i)},
\end{align*}
where the last equality uses the Markov property and time homogeneity of $\{I_j\}_{j=0}^\infty$ to
replace $\mathbb{P}(I_{\infty}=s_k^0\vert I_{1}=s_j,I_0=s_i)$ by $\mathbb{P}(I_{\infty}=s_k^0\vert
I_{0}=s_j)$. With the notation \eqref{eq:transition_prob} and \eqref{eq:rho_ik} this is
\eqref{eq:main-pijgivenk-equation}, proving \textup{(i)}. Note that no positivity of $\rho_{jk}$ is
used here, only $\rho_{ik}>0$.

For \textup{(ii)}, if $\rho_{jk}>0$ the factor $\rho_{jk}/\rho_{ik}$ in
\eqref{eq:main-pijgivenk-equation} is strictly positive and may be divided out, which gives
\eqref{eq:pij-from-pijk-cond}. That identity expresses the single quantity $p^\lambda_{ij}$, which
carries no dependence on $k$, so equating its right hand side for two indices $k$ and $l$ with
$\rho_{ik},\rho_{jk},\rho_{il},\rho_{jl}>0$ yields \eqref{eq:pijk-consistency}.

For \textup{(iii)}, setting $\rho_{jk}=0$ in \eqref{eq:main-pijgivenk-equation} gives
$p^\lambda_{ij|k}=0$ irrespective of the value of $p^\lambda_{ij}$, so the map
$p^\lambda_{ij} \mapsto p^\lambda_{ij|k}$ is constant and cannot be inverted; that $p^\lambda_{ij}>0$
is compatible with $\rho_{jk}=0$ and $\rho_{ik}>0$ is seen from the row-wise form of
\eqref{eq:r-matrix}, namely $\rho_{ik} = \sum_{l=1}^m p^\lambda_{il}\rho_{lk} + p^\mu_{ik}$, in which
$\rho_{ik}>0$ may be produced by terms other than $j$.

For \textup{(iv)}, when $\rho_{ik}=0$ the event $\{I_0=s_i, I_\infty=s^0_k\}$ has probability
$\rho_{ik}=0$ for the chain started at $s_i$ (and $\alpha_i\rho_{ik}=0$ for a general initial
distribution), so the conditional probability defining $p^\lambda_{ij|k}$ is a ratio with zero
denominator and is undefined; the convention $p^\lambda_{ij|k}:=0$ is consistent with \textup{(iii)}, since $s^0_k$ is
then not absorbed from $s_i$ at all. The stated consequences follow from the same row-wise identity
$\rho_{ik} = \sum_{l=1}^m p^\lambda_{il}\rho_{lk} + p^\mu_{ik}$: all summands are non-negative, so
$\rho_{ik}=0$ forces $p^\mu_{ik}=0$ and $p^\lambda_{il}\rho_{lk}=0$ for every $l$, whence
$\rho_{lk}=0$ whenever $p^\lambda_{il}>0$.

For \textup{(v)}, split the sum according to the sign of $\rho_{ik}$. Indices with $\rho_{ik}=0$
contribute $0$ by the convention of \textup{(iv)}. For the remaining indices \textup{(i)} applies, so
\[
\sum_{k=1}^n \rho_{ik}\, p^\lambda_{ij|k}
= \sum_{k \,:\, \rho_{ik}>0} \rho_{ik}\, p^\lambda_{ij}\, \frac{\rho_{jk}}{\rho_{ik}}
= p^\lambda_{ij} \sum_{k \,:\, \rho_{ik}>0} \rho_{jk}.
\]
If $p^\lambda_{ij}=0$ both sides vanish. If $p^\lambda_{ij}>0$ then, by \textup{(iv)}, $\rho_{ik}=0$
forces $\rho_{jk}=0$, so the omitted indices contribute nothing to $\sum_k \rho_{jk}$, which equals
$1$ by the row-stochasticity of $R$ established after \eqref{eq:r-matrix}. In either case the sum is
$p^\lambda_{ij}$. Probabilistically, \eqref{eq:pijk-aggregation} is the law of total probability for
the first jump, decomposed over the cause of absorption.

Finally, for $i=j$ we have $p^\lambda_{ii}=0$ by \eqref{eq:transition_prob}, and hence
$p^\lambda_{ii|k}=0$ by \eqref{eq:main-pijgivenk-equation} when $\rho_{ik}>0$, and by the convention
of \textup{(iv)} otherwise.
\end{proof}

\begin{proof}[Proof of \Cref{lem:feasible-jump}]
For \textup{(i)}, sum the definition of $p^\mu_{ik}$ over $k$ and use $\sum_k \rho_{ik}=1$ and
$\sum_k \rho_{jk}=1$:
\[
\sum_{k=1}^n p^\mu_{ik}
= \sum_{k=1}^n \rho_{ik} - \sum_{j=1}^m p^\lambda_{ij} \sum_{k=1}^n \rho_{jk}
= 1 - \sum_{j=1}^m p^\lambda_{ij}.
\]
For \textup{(ii)}, the inequality $p^\mu_{ik}\ge0$ is by definition
$\sum_j \rho_{jk}p^\lambda_{ij} \le \rho_{ik}$, which is
\eqref{eq:second-param-main-constraint}. Under it $p^\mu_{ik} \le \rho_{ik} \le 1$, since the
subtracted sum is non-negative, and \textup{(i)} then says that row $i$ of
$[\,P^\lambda \mid P^\mu\,]$ is a probability vector; in particular
$\sum_j p^\lambda_{ij} \le 1$. Part \textup{(iii)} is immediate from \textup{(i)}: the $n$
constraints of row $i$ are equalities exactly when $p^\mu_{ik}=0$ for every $k$, which by
\textup{(i)} is exactly $\sum_j p^\lambda_{ij}=1$. For \textup{(iv)}, the $P^\mu$ defined above is the one-step absorption matrix by
\eqref{eq:first-step-1}, and the recoveries are parts \textup{(i)}, \textup{(iv)} and \textup{(v)} of
\Cref{lem:cond-jump-probs}.
\end{proof}

%% file: sections/appendix-proofs-est.tex
\section{Proofs: estimation and the EM algorithm}
\label{app:proofs-est}
\label{app:e-step}
\label{app:e-step-exact}

This appendix proves the results stated in \Cref{sec:full-path-mle}.

\begin{proof}[Proof of \Cref{prop:path-likelihood}]
For \textup{(i)}, fix a path and suppress $\ell$; let it have $n_0$ jumps through the states
$I_0,\ldots,I_{n_0}$ of indices $i_0,\ldots,i_{n_0-1}$, absorbed at $I_{n_0}=s^0_{k_0}$, at times
$0=\sigma_0<\sigma_1<\cdots<\sigma_{n_0}$. Conditionally on the embedded chain $\{I_n\}$, the sojourn
in the $n$-th visited state is exponential with rate $q_{i_{n-1}}$, independently across $n$, and the
jump chain moves with the probabilities \eqref{eq:transition_prob}. Multiplying the initial
probability $\alpha_{i_0}$, the $n_0$ sojourn densities, the $n_0-1$ transient jump probabilities and
the final exit probability $p^\mu_{i_{n_0-1}k_0}$ gives
\begin{equation}
\label{eq:path-likelihood-raw}
f(X^\ell~;~\theta)
=
\alpha_{i_0}
\Big(
\prod_{n=1}^{n_0} q_{i_{n-1}} e^{-q_{i_{n-1}}(\sigma_n-\sigma_{n-1})}
\Big)
\Big(
\prod_{n=1}^{n_0-1} p^\lambda_{i_{n-1} i_n}
\Big)
p^\mu_{i_{n_0-1}\,k_0}.
\end{equation}
No conditioning on the cause is involved -- the path determines it -- which is why the probabilities
appearing here are the unconditional ones of \eqref{eq:transition_prob}.

To express this through the statistics, partition the sojourn product by visited phase,
\[
\prod_{n=1}^{n_{0}} q_{i_{n-1}}e^{-q_{i_{n-1}}(\sigma_n-\sigma_{n-1})} = \prod_{i=1}^m ~~\prod_{n \in \{1,\ldots,n_0\} ~:~ i_{n-1} = i} ~~q_{i}e^{-q_{i}(\sigma_n-\sigma_{n-1})}
=
\prod_{i=1}^m
q_i^{N_i^\ell} e^{-q_i Z_i^\ell},
\]
and regroup the jump probabilities by transition type, the factor $p^\lambda_{ij}$ occurring
$M^\ell_{ij}$ times. Exactly one $B^\ell_i$ equals one, so $\alpha_{i_0}=\prod_i\alpha_i^{B^\ell_i}$;
and $E^\ell_{ik}$ vanishes unless $i=i_{n_0-1}$ and $k=k_0$, so
$p^\mu_{i_{n_0-1}k_0}=\prod_i\prod_k (p^\mu_{ik})^{E^\ell_{ik}}$, every other factor of that product
being $1$. This is \eqref{eq:path-likelihood-general}, in which no free cause index remains.

For \textup{(ii)}, the paths are independent, so ${\cal L}(\theta)$ is the product of the per-path
densities. Taking logarithms and summing over $\ell$, each per-path exponent aggregates into its
per-sample counterpart, $B_i=\sum_\ell B^\ell_i$, $N_i=\sum_\ell N^\ell_i$, $Z_i=\sum_\ell Z^\ell_i$,
$M_{ij}=\sum_\ell M^\ell_{ij}$ and $E_{ik}=\sum_\ell E^\ell_{ik}$.
\end{proof}

\begin{proof}[Proof of \Cref{lem:cause-specific}]
If a path visits $s_i$ and is absorbed in $s^0_k$, absorption in $s^0_k$ from $s_i$ has positive
probability, so $\rho_{ik}>0$; paths visiting a phase with $\rho_{ik}=0$ and absorbed in $s^0_k$ form
a null set. Case \textup{(i)} of \Cref{lem:cond-jump-probs} therefore applies at every visited phase.

Start from \eqref{eq:path-likelihood-general} and replace the jump and exit
probabilities by their conditional counterparts, keeping track of the factors this introduces. The
initial and sojourn groups contain no jump probability and are left alone. For the jump group, by
$p^\lambda_{ij|k}=p^\lambda_{ij}\,\rho_{jk}/\rho_{ik}$,
\[
\prod_{i=1}^m\prod_{j=1}^m \big(p^\lambda_{ij|k}\big)^{M^\ell_{ij}}
=
\Big( \prod_{i=1}^m\prod_{j=1}^m \big(p^\lambda_{ij}\big)^{M^\ell_{ij}} \Big)
\prod_{i=1}^m\prod_{j=1}^m \Big( \frac{\rho_{jk}}{\rho_{ik}} \Big)^{M^\ell_{ij}} ,
\]
and the trailing factor collects, after renaming the dummy index in the numerator, into
\[
\prod_{i=1}^m\prod_{j=1}^m \Big( \frac{\rho_{jk}}{\rho_{ik}} \Big)^{M^\ell_{ij}}
=
\prod_{j=1}^m \rho_{jk}^{\;\sum_{i} M^\ell_{ij}}
\cdot
\prod_{i=1}^m \rho_{ik}^{\;-\sum_{j} M^\ell_{ij}}
=
\prod_{i=1}^m \rho_{ik}^{\;\sum_{j} M^\ell_{ji} \,-\, \sum_{j} M^\ell_{ij}} .
\]
For the exit group, by $p^\mu_{ik|k}=p^\mu_{ik}/\rho_{ik}$,
\[
\prod_{i=1}^m \big(p^\mu_{ik|k}\big)^{E^\ell_{ik}}
=
\Big( \prod_{i=1}^m \big(p^\mu_{ik}\big)^{E^\ell_{ik}} \Big)
\prod_{i=1}^m \rho_{ik}^{\,-E^\ell_{ik}} .
\]
Multiplying the two trailing factors, the exponent of $\rho_{ik}$ is
\[
\underbrace{\sum_{j=1}^m M^\ell_{ji}}_{=\,N^\ell_i - B^\ell_i}
\;-\;
\underbrace{\Big( \sum_{j=1}^m M^\ell_{ij} + E^\ell_{ik} \Big)}_{=\,N^\ell_i}
\;=\;
\big( N^\ell_i - B^\ell_i \big) - N^\ell_i
\;=\;
- B^\ell_i ,
\]
the two braces being the two halves of the flow balance \eqref{eq:flow-balance}: the left counts the
entries into $s_i$ other than the start of the path, the right counts all the exits from $s_i$, jumps
to another transient phase together with the final jump to absorption. The exit term must be grouped
with the jump term, since it is the pair of them that forms one side of the balance.

Splitting the initial group of \eqref{eq:path-likelihood-cause} as
$\prod_i (\alpha_i\rho_{ik})^{B^\ell_i} = \big(\prod_i \alpha_i^{B^\ell_i}\big)\prod_i
\rho_{ik}^{B^\ell_i}$ and substituting the above,
\begin{align*}
&\Big( \prod_{i=1}^m (\alpha_i \rho_{ik})^{B^\ell_i} \Big)
\Big( \prod_{i=1}^m q_i^{N^\ell_i} e^{-q_i Z^\ell_i} \Big)
\Big( \prod_{i,j} \big(p^\lambda_{ij|k}\big)^{M^\ell_{ij}} \Big)
\Big( \prod_{i} \big(p^\mu_{ik|k}\big)^{E^\ell_{ik}} \Big)
\\[2pt]
&\qquad =
\Big( \prod_{i=1}^m \alpha_i^{B^\ell_i} \Big)
\Big( \prod_{i=1}^m \rho_{ik}^{B^\ell_i} \Big)
\Big( \prod_{i=1}^m q_i^{N^\ell_i} e^{-q_i Z^\ell_i} \Big)
\Big( \prod_{i,j} \big(p^\lambda_{ij}\big)^{M^\ell_{ij}} \Big)
\Big( \prod_{i} \big(p^\mu_{ik}\big)^{E^\ell_{ik}} \Big)
\Big( \prod_{i=1}^m \rho_{ik}^{-B^\ell_i} \Big)
\\[2pt]
&\qquad =
\Big( \prod_{i=1}^m \alpha_i^{B^\ell_i} \Big)
\Big( \prod_{i=1}^m q_i^{N^\ell_i} e^{-q_i Z^\ell_i} \Big)
\Big( \prod_{i=1}^m \prod_{j=1}^m \big(p^\lambda_{ij}\big)^{M^\ell_{ij}} \Big)
\Big( \prod_{i=1}^m \prod_{k'=1}^n \big(p^\mu_{ik'}\big)^{E^\ell_{ik'}} \Big) ,
\end{align*}
the second equality cancelling $\prod_i \rho_{ik}^{B^\ell_i}$ against $\prod_i \rho_{ik}^{-B^\ell_i}$
and, in the last group, restoring the product over all causes,
\[
\prod_{i=1}^m \big(p^\mu_{ik}\big)^{E^\ell_{ik}}
\;=\;
\prod_{i=1}^m \prod_{k'=1}^n \big(p^\mu_{ik'}\big)^{E^\ell_{ik'}} ,
\qquad
\text{since } E^\ell_{ik'}=0 \text{ and so } \big(p^\mu_{ik'}\big)^{E^\ell_{ik'}}=1
\text{ for every } k'\neq k .
\]
The expression obtained is the right hand side of \eqref{eq:path-likelihood-general}, which by
\Cref{prop:path-likelihood}\,\textup{(i)} is $f(X^\ell~;~\theta)$. The two factorizations therefore
define the same density.
\end{proof}

\begin{proof}[Proof of \Cref{prop:approx-mle}]
\emph{The objective separates.} The set $\Theta$ is a product: $\alpha$ ranges over the simplex in
$\R^m$, each $q_i$ over $(0,\infty)$, and, for each $i$, the $i$-th row
$\big((p^\lambda_{ij})_{j\neq i},(p^\mu_{ik})_k\big)$ over the simplex in $\R^{m-1+n}$.
Correspondingly \eqref{eq:complete-log-like-jump} splits as
\[
\log{\cal L}(\theta)
=
\sum_{i=1}^m B_i\log\alpha_i
\;+\;
\sum_{i=1}^m \big( N_i\log q_i - q_i Z_i \big)
\;+\;
\sum_{i=1}^m \Big( \sum_{j \neq i} M_{ij}\log p^\lambda_{ij} + \sum_{k=1}^n E_{ik}\log p^\mu_{ik} \Big),
\]
the term $j=i$ being absent because $p^\lambda_{ii}=0$ and $M_{ii}=0$. No variable occurs in two of
these groups and they vary independently over $\Theta$, so each may be maximized on its own. Each is
concave, and strictly concave in those coordinates whose count is positive; a coordinate whose count
vanishes contributes $0\log 0 = 0$ whatever its value, and the maximum places no mass on it, since
that mass would otherwise be taken from a coordinate with a positive count. It is therefore enough,
in each group, to locate the stationary point subject to the constraint.

\emph{The initial distribution.} Maximize $\sum_i B_i\log\alpha_i$ subject to $\sum_i \alpha_i=1$.
With a multiplier $\gamma$ for the constraint, the stationary conditions are
$B_i/\alpha_i = \gamma$, that is $\alpha_i = B_i/\gamma$. Summing over $i$ and using
$\sum_i \alpha_i = 1$ gives $\gamma = \sum_i B_i$, and each path starts in exactly one phase, so
$\sum_i B_i = L$. Hence
\[
\hat\alpha_i = \frac{B_i}{L}.
\]

\emph{The exit rates.} For each $i$, maximize $N_i\log q_i - q_i Z_i$ over $q_i>0$. Its derivative is
$N_i/q_i - Z_i$ and its second derivative $-N_i/q_i^2<0$, so it is strictly concave with the single
stationary point
\[
\hat q_i = \frac{N_i}{Z_i},
\]
where $N_i>0$ by hypothesis and $Z_i>0$ because a visited phase is occupied for a positive length of
time.

\emph{The jump and exit probabilities.} Fix $i$ and maximize
$\sum_{j\neq i} M_{ij}\log p^\lambda_{ij} + \sum_k E_{ik}\log p^\mu_{ik}$ subject to
$\sum_{j \neq i} p^\lambda_{ij} + \sum_k p^\mu_{ik} = 1$. With a multiplier $\gamma_i$, the
stationary conditions are
\[
\frac{M_{ij}}{p^\lambda_{ij}} = \gamma_i \quad (j \neq i),
\qquad
\frac{E_{ik}}{p^\mu_{ik}} = \gamma_i ,
\qquad\text{that is}\qquad
p^\lambda_{ij} = \frac{M_{ij}}{\gamma_i},
\quad
p^\mu_{ik} = \frac{E_{ik}}{\gamma_i} .
\]
Summing these over $j \neq i$ and over $k$ and using the constraint,
\[
1 = \frac{1}{\gamma_i}\Big( \sum_{j \neq i} M_{ij} + \sum_{k=1}^n E_{ik} \Big) = \frac{N_i}{\gamma_i},
\]
the last equality being the flow balance \eqref{eq:flow-balance}. Hence $\gamma_i = N_i$ and
\[
\hat p^\lambda_{ij} = \frac{M_{ij}}{N_i} \quad (j \neq i),
\qquad
\hat p^\mu_{ik} = \frac{E_{ik}}{N_i},
\]
which completes \eqref{eq:mle}.

\emph{Feasibility.} All the ratios in \eqref{eq:mle} are non-negative, and by the flow balance the
$i$-th row sums to $\big(\sum_{j\neq i} M_{ij} + \sum_k E_{ik}\big)/N_i = 1$, so
$\hat\theta\in\Theta$ with no adjustment. The derivations above used only concavity and the
constraints, never that the statistics are integers, which gives the second half of \textup{(ii)}.

\emph{Non-degeneracy, proving \textup{(i)}.} The matrix $\widehat P^\lambda$ is sub-stochastic, its
$i$-th row summing to $1-\sum_k \hat p^\mu_{ik}$, so any phase with $\sum_k E_{ik}>0$ has row sum
strictly below one. For a phase with $\sum_k E_{ik}=0$, the counts $M_{ij}$ record jumps out of $s_i$
on paths that are absorbed, so following indices with $\hat p^\lambda_{ij}>0$ traces path segments
and reaches, in finitely many steps, a phase from which absorption was recorded. Hence
$\widehat P^\lambda$ is transient and $I-\widehat P^\lambda$ is non-singular. Since
$-\widehat T = \mathrm{Diag}(\hat q)(I-\widehat P^\lambda)$ with $\hat q>0$, the matrix $-\widehat T$
is a non-singular $M$-matrix, so absorption is certain and $\widehat R$ of \eqref{eq:R-from-mle} is
well defined and, by the computation following \eqref{eq:r-matrix}, non-negative and row-stochastic.
Finally $\widehat P^\mu = \widehat R - \widehat P^\lambda \widehat R \ge 0$ by construction, which is
the bound \eqref{eq:R-bound} by \Cref{lem:feasible-jump}\,\textup{(ii)}.
\end{proof}

\begin{proof}[Proof of \Cref{cor:monotone}]
The first claim is \Cref{prop:approx-mle} applied to the expected statistics, legitimate by
\textup{(ii)} of that proposition, together with the linearity of
\eqref{eq:complete-log-like-jump} in the statistics, which makes $Q(\theta\mid\theta^{(\nu)})$ that
same expression evaluated at the conditional expectations. The second is the standard EM ascent
property: $\ell_{\mathrm{obs}}(\theta)=Q(\theta\mid\theta^{(\nu)})-H(\theta\mid\theta^{(\nu)})$ with
$H(\theta\mid\theta^{(\nu)})=\ex{\log p(\text{path}\mid{\cal D},\theta) \mid {\cal D}}$ maximized over
$\theta$ at $\theta=\theta^{(\nu)}$ by Jensen's inequality, so increasing $Q$ cannot decrease $\ell$.
\end{proof}

\subsection{The exact-event E-step}

This appendix proves the E-step expectations of the algorithm of \Cref{sec:em-alg}: the conditional expectations, given a single observation $(\tau,\kappa)=(t,k)$, of the complete-data sufficient statistics of \Cref{tab:stats}. We use the notation of \Cref{sec:maph-dist}: $T$ is the sub-generator with off-diagonal rates $T_{ij}=\lambda_{ij}\ge0$ ($i\neq j$) and diagonal entries $-q_i$, the matrix $D$ has entries $D_{ik}=\mu_{ik}\ge0$, $D_k$ denotes the $k$-th column of $D$, and $\alpha$ is the initial (row) distribution.

The statement of \Cref{prop:e-step} uses the density $f(t,k)$ and the matrix $\mathbf C(t,k)$ defined in \eqref{eq:estep-fC} of \Cref{sec:em-exact}.

\subchg{Fix a cause $k$ that can occur, that is $\Prob(\kappa=k)>0$, which by item (iii) of \Cref{prop:maph-props} is $-\alpha T^{-1}D_k>0$; this holds for every cause represented in the data. Then $f(t,k)>0$ for every $t>0$: writing $f(t,k)=\sum_{i,j}\alpha_i (e^{Tt})_{ij}\,\mu_{jk}$ and noting that $(e^{Tt})_{ij}>0$ for $t>0$ exactly when $s_j$ is reachable from $s_i$ within $\cs$, the sum is positive precisely when some phase in the support of $\alpha$ reaches cause $k$. Consequently} the regular conditional distribution given $\{\tau=t,\kappa=k\}$ is well defined and the conditional expectations below exist.

\begin{proof}[Proof of \Cref{prop:e-step}]
Write $f=f(t,k)$, and recall the cause-specific path density $\Prob(\tau\in dt,\kappa=k\mid X(0)=s_i)=\big(e^{Tt}D_k\big)_i\,dt$, so that $f\,dt = \alpha e^{Tt}D_k\,dt = \Prob(\tau\in dt,\kappa=k)$.

For the initial-state indicator $B_i=\mathbf 1\{X(0)=s_i\}$,
\[
\ex{B_i \mid \tau=t,\ \kappa=k}=\Prob(X(0)=s_i\mid \tau=t,\kappa=k)
=\frac{\alpha_i\,\big(e^{Tt}D_k\big)_i}{f}.
\]
For the occupation time $Z_i=\int_0^\infty \mathbf 1\{X(u)=s_i\}\,du$, only $u\le t$ contributes; the integrand is non-negative, so Tonelli's theorem lets us exchange expectation and integration, and by the Markov property $\Prob(X(u)=s_i,\tau\in dt,\kappa=k)=\big(\alpha e^{Tu}\big)_i\,\big(e^{T(t-u)}D_k\big)_i\,dt$, whence
\[
\ex{Z_i \mid \tau=t,\ \kappa=k}=\int_0^t \frac{\big(\alpha e^{Tu}\big)_i\,\big(e^{T(t-u)}D_k\big)_i}{f}\,du=\frac{C_{ii}(t,k)}{f}.
\]
For the transient jump counts $M_{ij}$ with $i\neq j$, discretize time in steps of $\varepsilon$ and use $\Prob\big(X((\ell{+}1)\varepsilon)=s_j\mid X(\ell\varepsilon)=s_i\big)=\big(e^{T\varepsilon}\big)_{ij}=T_{ij}\varepsilon+o(\varepsilon)$. Summing the contributions $\big(\alpha e^{T\ell\varepsilon}\big)_i\,\big(e^{T\varepsilon}\big)_{ij}\,\big(e^{T(t-(\ell+1)\varepsilon)}D_k\big)_j/f$ over $\ell$ and letting $\varepsilon\downarrow0$ yields a Riemann sum converging to
\[
\ex{M_{ij} \mid \tau=t,\ \kappa=k}=\frac{T_{ij}\displaystyle\int_0^t \big(\alpha e^{Tu}\big)_i\,\big(e^{T(t-u)}D_k\big)_j\,du}{f}=\frac{T_{ij}\,C_{ij}(t,k)}{f}.
\]
The exchange of the $\varepsilon\downarrow0$ limit with the conditional expectation is justified by dominated convergence: the discretized counts are bounded by the total number of transient jumps before absorption, which has finite conditional expectation, and converge pointwise as $\varepsilon\downarrow0$, while $(e^{T\varepsilon}-I)/\varepsilon\to T$ and the continuity of $u\mapsto e^{Tu}$ identify the limit (cf.\ \citet{Asmussen1996FittingPD}).
Finally $E_{ik}$ counts the single absorbing jump $s_i\to s_k^0$, which occurs at the terminal time $t$ (so no occupation integral arises); the same discretization gives $\Prob\big(X(t-\varepsilon)=s_i,\,\tau\in dt,\,\kappa=k\big)=\big(\alpha e^{T(t-\varepsilon)}\big)_i\,D_{ik}\,dt\,\big(1+o(1)\big)$, and letting $\varepsilon\downarrow0$,
\[
\ex{E_{ik} \mid \tau=t,\ \kappa=k}=\frac{\big(\alpha e^{Tt}\big)_i\,D_{ik}}{f}.
\]
Equation \eqref{eq:e-step-N-total} follows because, conditional on $\kappa=k$, the exits from $s_i$ are its transient jumps $\sum_j M_{ij}$ together with the absorbing jump counted by $E_{ik}$, which is the right hand equality of the flow balance \eqref{eq:flow-balance}.
\end{proof}

\paragraph{Relation to the phase-type case.} When $n=1$ the single column $D_1=-T\mathbf 1_m$ is the exit vector of an ordinary phase-type distribution and $\kappa$ is degenerate; \eqref{eq:cond-exp-estats} then reduces to the classical EM E-step for phase-type distributions \citep{Asmussen1996FittingPD}, \subchg{its four entries being the conditional expectations of the initial-state indicator, the occupation time, the transient jump counts and the absorption count}. The MAPH E-step is the cause-resolved generalization in which the single exit vector $-T\mathbf 1_m$ is replaced by the observed column $D_k$.

\paragraph{Computation.} For each observation, $e^{Tt}$ is obtained from any matrix-exponential routine, and all entries of $\mathbf C(t,k)$ from a single $2m\times2m$ exponential through the block-triangular identity of \citet{vanloan_1978},
\begin{equation}
\label{eq:vanloan-block}
\exp\!\left( t\begin{bmatrix} T^\intercal & \alpha^\intercal D_k^\intercal \\[2pt] 0 & T^\intercal \end{bmatrix}\right)
=\begin{bmatrix} e^{T^\intercal t} & \mathbf C(t,k) \\[2pt] 0 & e^{T^\intercal t}\end{bmatrix},
\end{equation}
where $\alpha^\intercal D_k^\intercal$ is the rank-one $m\times m$ matrix with entries $(\alpha^\intercal D_k^\intercal)_{ij}=\alpha_i D_{jk}$, so that $\mathbf C(t,k)$ is read off the top-right block. Observations sharing the same absorbing state $k$ reuse the same column $D_k$.

\paragraph{Alternative evaluations.} The block-triangular identity \eqref{eq:vanloan-block} is only one way to evaluate these conditional expectations. The same expected occupation times and expected transition counts can be obtained by \emph{uniformization} (randomization), which represents the transient chain through a Poisson-subordinated discrete-time chain and is a standard tool for transient Markov-chain analysis \citep{grassmann_1977, reibman_trivedi_1988}; this is the route taken in the refined phase-type EM of \citet{okamura_2011}. Alternatively, the integrals $C_{ij}(t,k)$ satisfy a linear system of ordinary differential equations that can be integrated directly by a Runge--Kutta scheme, as in the original phase-type EM of \citet{Asmussen1996FittingPD}. For large $m$, forming the full $2m\times2m$ exponential can be avoided altogether by computing the \emph{action} of the matrix exponential on a vector through Krylov-subspace or scaling-and-squaring methods \citep{sidje_1998, almohy_higham_2011}. We adopt the Van~Loan identity \eqref{eq:vanloan-block} for its simplicity at the modest phase counts of our examples; the alternatives above may be preferable at larger scale, and adding them to the companion software is left as future work.

%% file: sections/appendix-proofs-cens.tex
\section{Proofs: adaptations for right censoring}
\label{app:proofs-cens}
\label{app:e-step-censored}

This appendix proves the results stated in \Cref{sec:censoring}.

\begin{aarevision}
\paragraph{Proof of \Cref{prop:maph-censored}.}
By item (i) of \Cref{prop:maph-props}, $S(c)=\Prob(\tau>c)=1-F(c)=\alpha e^{Tc}\mathbf{1}_m$, which gives (i); it is strictly positive because $e^{Tc}$ is the transition matrix of the transient chain over $[0,c]$ and $\alpha$ is a probability vector on $\cs$, so absorption cannot be certain by any finite time.

For (iii), let $\alpha_c$ denote the conditional distribution of the phase occupied at time $c$ given $\{\tau>c\}$. Since $\Prob(X(c)=s_i,\ \tau>c)=(\alpha e^{Tc})_i$, we have
\[
\alpha_c=\frac{\alpha e^{Tc}}{S(c)},
\]
which is a probability vector on $\cs$ by (i). The MJP $\{X(t)\}_{t\ge0}$ is time-homogeneous and Markov, and $\{\tau>c\}=\{X(c)\in\cs\}$ is measurable with respect to $X(c)$; hence, conditionally on $\{\tau>c\}$, the post-censoring process $\{X(c+u)\}_{u\ge0}$ is again an MJP with the same generator $Q$, started from $\alpha_c$. Its absorption time is $\tau-c$ and its absorbing state is $s^0_\kappa$, so $(\tau-c,\kappa)$ given $\{\tau>c\}$ is the MAPH pair of $\text{MAPH}_{m,n}(\alpha_c,T,D)$. Applying item (ii) of \Cref{prop:maph-props} to this law gives the residual sub-density $f(u,k\mid\tau>c)=\alpha_c e^{Tu}D_k$.

For (ii), apply item (iii) of \Cref{prop:maph-props} to the law of (iii):
\[
\Prob(\kappa=k\mid\tau>c)=-\alpha_cT^{-1}D_k=\frac{-\alpha e^{Tc}T^{-1}D_k}{S(c)}.
\]
Summing over $k$ and using $-T^{-1}D\mathbf{1}_n=-T^{-1}(-T\mathbf{1}_m)=\mathbf{1}_m$, which is \eqref{eq:D-matrix-constraint}, gives $\sum_{k=1}^n\Prob(\kappa=k\mid\tau>c)=\alpha_c\mathbf{1}_m=1$. \qed
\end{aarevision}

For a subject right censored at time $c>0$, the observation is the event $\{\tau>c\}$; the eventual cause is unobserved. We retain the complete-data convention of \Cref{sec:full-path-mle}: the latent path is completed through its eventual absorption. \subchg{Hence the E-step must complete the path on both sides of $c$ and average over the unobserved eventual cause. Only the exit statistic $E_{ik}$ retains a cause index; the remaining statistics enter the M-step \eqref{eq:approx-m-step} aggregated over causes, and are computed here in that form.}

The statement of \Cref{prop:e-step-censored} uses the survival function $S(c)$ and the
quantities $a(c)$, $g(c)$ and $C(c;b)$ of \eqref{eq:censored-definitions}--\eqref{eq:censored-Cb}, defined in \Cref{sec:em-censored}.

\begin{proof}[Proof of \Cref{prop:e-step-censored}]
\subchg{We have $S(c)=\alpha e^{Tc}\mathbf{1}_m>0$ for every finite $c$, since $e^{Tc}$ has strictly positive diagonal entries and $\alpha$ is a probability vector.} For a non-negative complete-path statistic $H$, conditioning on $\{\tau>c\}$ and applying the tower property with respect to $(\tau,\kappa)$ gives
\begin{equation}
\label{eq:tail-mixture}
\ex{H\mid\tau>c}
 =\frac{\ex{H\ind{\tau>c}}}{S(c)}
 =\frac{1}{S(c)}\sum_{k=1}^n\int_c^\infty
   \ex{H\mid\tau=t,\kappa=k}\,f(t,k)\,dt ,
\end{equation}
the last step integrating against $\Prob(\tau\in dt,\kappa=k)=f(t,k)\,dt$ and summing over the eventual cause, which every completed path has exactly one of. \subchg{Substituting \eqref{eq:cond-exp-estats}, the denominators $f(t,k)$ cancel in every case, and the sum over $k$ may then be taken inside the integrals. Two elementary integrals suffice. Since $\sum_k D_k = D\mathbf 1_n = -T\mathbf 1_m$ and $\int_c^\infty e^{Tt}\,dt = e^{Tc}(-T)^{-1}$, valid because $T$ is a transient sub-generator,}
\begin{equation}
\label{eq:censored-two-integrals}
\subchg{\sum_{k=1}^n\int_c^\infty e^{Tt}D_k\,dt = e^{Tc}\mathbf 1_m ,
\qquad
\int_c^\infty \big(\alpha e^{Tt}\big)_i\,dt = g_i(c) .}
\end{equation}

\paragraph{Initial state.} \subchg{Taking $H=B_i$ and substituting $\ex{B_i\mid\tau=t,\kappa=k}f(t,k)=\alpha_i(e^{Tt}D_k)_i$,}
\[
S(c)\,\ex{B_{i} \mid \tau>c}
 =\sum_{k=1}^n\int_c^\infty \alpha_i\big(e^{Tt}D_k\big)_i\,dt
 =\alpha_i\big(e^{Tc}\mathbf 1_m\big)_i
\]
\subchg{by the first integral of \eqref{eq:censored-two-integrals}.}

\paragraph{Absorbing jump.} \subchg{Taking $H=E_{ik}$, only the term with that same cause survives the sum, since $\ex{E_{ik}\mid\tau=t,\kappa=k'}=0$ for $k'\neq k$. With $\ex{E_{ik}\mid\tau=t,\kappa=k}f(t,k)=(\alpha e^{Tt})_iD_{ik}$,}
\[
S(c)\,\ex{E_{ik} \mid \tau>c}
 =\int_c^\infty \big(\alpha e^{Tt}\big)_i D_{ik}\,dt
 =g_i(c)\,D_{ik} .
\]

\paragraph{Occupation time and transient jumps.} \subchg{These two share one computation. Taking $H=Z_i$ and $H=M_{ij}$ with $i \ne j$, and substituting $\ex{Z_i\mid\tau=t,\kappa=k}f(t,k)=C_{ii}(t,k)$ and $\ex{M_{ij}\mid\tau=t,\kappa=k}f(t,k)=T_{ij}C_{ij}(t,k)$, both reduce to the double integral}
\[
\sum_{k=1}^n\int_c^\infty C_{ij}(t,k)\,dt
 =\int_c^\infty\!\!\int_0^t
   \big(\alpha e^{Tu}\big)_i
   \big(e^{T(t-u)}(-T)\mathbf 1_m\big)_j\,du\,dt ,
\]
\subchg{the sum over $k$ having been taken inside and collapsed by $\sum_kD_k=-T\mathbf 1_m$. The integrand is non-negative, so Tonelli's theorem applies, and the domain $\{(u,t):c\le t<\infty,\ 0\le u\le t\}$ is, up to its boundary, the disjoint union of $\{0\le u<c,\ c\le t<\infty\}$ and $\{c\le u<\infty,\ u\le t<\infty\}$. On the first piece the inner integral is $\int_c^\infty e^{T(t-u)}(-T)\mathbf 1_m\,dt = e^{T(c-u)}\mathbf 1_m$, and on the second it is $\int_u^\infty e^{T(t-u)}(-T)\mathbf 1_m\,dt = \mathbf 1_m$, whose $j$-th entry is $1$. Hence}
\[
\sum_{k=1}^n\int_c^\infty C_{ij}(t,k)\,dt
 =\int_0^c \big(\alpha e^{Tu}\big)_i\big(e^{T(c-u)}\mathbf 1_m\big)_j\,du
  +\int_c^\infty \big(\alpha e^{Tu}\big)_i\,du
 =C_{ij}(c;\mathbf 1_m)+g_i(c) .
\]
\subchg{Dividing by $S(c)$ gives $\ex{Z_i\mid\tau>c}$ at $j=i$, and $\ex{M_{ij}\mid\tau>c}$ after multiplication by $T_{ij}$ for $i\neq j$.}

\paragraph{Exit counts.} \subchg{Every completed path leaves $s_i$ through its transient jumps and, if $s_i$ is its last transient phase, through the one absorbing jump, so $N_i=\sum_jM_{ij}+\sum_kE_{ik}$ pathwise by the flow balance \eqref{eq:flow-balance}. Taking conditional expectations proves \eqref{eq:censored-N}.}
\end{proof}

\paragraph{Checks and interpretation.}
\subchg{The identities $\sum_i\ex{B_i\mid\tau>c}=1$ and $\sum_{i,k}\ex{E_{ik} \mid \tau>c}=1$ follow directly from $D\mathbf 1_n=-T\mathbf 1_m$: the censored path starts in exactly one phase and is eventually absorbed exactly once. The two summands in $\ex{M_{ij} \mid \tau>c}$ have a pathwise meaning: $C_{ij}(c;\mathbf 1_m)$ counts jumps before censoring, weighted by survival to $c$, whereas $g_i(c)$ counts jumps after it. Likewise $C_{ii}(c;\mathbf 1_m)$ and $g_i(c)$ are the pre- and post-censoring occupation contributions.} When $n=1$, these formulas reduce to the classical right-censored phase-type E-step of \citet{Olsson_1996_censored}.

\paragraph{Computation.}
The matrices in \eqref{eq:censored-Cb} use the same Van~Loan construction as the exact-event E-step. \subchg{For the required weight $b=\mathbf 1_m$,}
\begin{equation}
\label{eq:censored-vanloan}
\exp\!\left(c\begin{bmatrix}
T^\intercal&\alpha^\intercal b^\intercal\\[2pt]0&T^\intercal
\end{bmatrix}\right)
=\begin{bmatrix}e^{T^\intercal c}&C(c;b)\\[2pt]0&e^{T^\intercal c}\end{bmatrix}.
\end{equation}
\subchg{Only the terminal weight $b=\mathbf 1_m$ is required, so one block exponential per censored record suffices, whatever the number of causes. Numerically, $g(c)$ should be computed by solving a linear system with $-T$ rather than forming an explicit inverse.}